\documentclass[12pt]{amsart}

\usepackage{latexsym,amssymb,amsfonts,amsmath,mathrsfs,float,hyperref, tikz}
\usepackage{thmtools,thm-restate}
\usepackage{xcolor}
\usepackage{tikz}
\usetikzlibrary{quantikz}
\hypersetup{
    colorlinks,
    linkcolor={black},
    citecolor={black},
}

\setlength{\textwidth}{\paperwidth}
\addtolength{\textwidth}{-7.5cm}
\calclayout


  
  \renewcommand{\Pr}{\mbox{\rm Pr}}
  \newcommand{\Exp}{{\mathbb{E}}}

  
  \newcommand{\C}{\mathbb{C}} 
  \newcommand{\N}{\mathbb{N}} 
  \newcommand{\F}{\mathbb{F}} 
  \DeclareMathOperator{\im}{im} 

  \DeclareMathOperator{\isbal}{IsBal}
  \DeclareMathOperator{\maj}{Maj}
  \DeclareMathOperator{\LH}{LH}

    
  \newcommand{\eg}{{e.g.}} 
  \newcommand{\eps}{\varepsilon}

  \newcommand{\poly}{\mbox{\rm poly}}
  \newcommand{\ceil}[1]{\lceil{#1}\rceil}

  \DeclareMathOperator{\rank}{rk}

  \DeclareMathOperator{\arank}{arank}
  \DeclareMathOperator{\prank}{prank} 
  \DeclareMathOperator{\bias}{bias}
  \DeclareMathOperator{\Pol}{Pol}
   
  \newcommand{\beq}{\begin{equation}}
  \newcommand{\eeq}{\end{equation}}
  \newcommand{\beqn}{\begin{equation*}}
  \newcommand{\eeqn}{\end{equation*}}
  \newcommand{\beqr}{\begin{eqnarray}}
  \newcommand{\eeqr}{\end{eqnarray}}
  \newcommand{\beqrn}{\begin{eqnarray*}}
  \newcommand{\eeqrn}{\end{eqnarray*}}
  \newcommand{\bmline}{\begin{multline}}
  \newcommand{\emline}{\end{multline}}
  \newcommand{\bmlinen}{\begin{multline*}}
  \newcommand{\emlinen}{\end{multline*}}
  
  \newcommand{\nc}{{\normalfont NC}}
    \newcommand{\ncz}{{\normalfont NC}$^0$}
  \newcommand{\qnc}{{\normalfont QNC}}
  \newcommand{\ac}{{\normalfont  AC}}
    \newcommand{\acz}{{\normalfont  AC}$^0$}
  
  \newcommand{\qac}{{\normalfont  QAC}}

  \theoremstyle{plain}
  \newtheorem{theorem}{Theorem}[section]
  \newtheorem{lemma}[theorem]{Lemma}

  \newtheorem{corollary}[theorem]{Corollary}
  
  \theoremstyle{definition}
  \newtheorem{definition}[theorem]{Definition}

  \theoremstyle{remark}
  \newtheorem{remark}[theorem]{Remark}
  
  \renewenvironment{proof}[1][]{
    	\begin{trivlist}
     	\item[\hspace{\labelsep}{\em\noindent Proof#1:\/}]}
     	{{\hfill$\Box$}
    	\end{trivlist}
  }

\begin{document}
\title[Noisy decoding by shallow circuits with parities]{Noisy decoding by shallow circuits with parities: classical and quantum}

\author{Jop Bri\"{e}t}
\address{CWI \& QuSoft, Science Park 123, 1098 XG Amsterdam, The Netherlands}
\email{j.briet@cwi.nl}

\author{Harry Buhrman}
\address{QuSoft \& University of Amsterdam \& CWI, Science Park 123, 1098 XG Amsterdam, The Netherlands}
\email{buhrman@cwi.nl}

\author{Davi Castro-Silva}
\address{CWI \& QuSoft, Science Park 123, 1098 XG Amsterdam, The Netherlands}
\email{davi.silva@cwi.nl}

\author{Niels M. P. Neumann}
\address{CWI \& QuSoft, Science Park 123, 1098 XG Amsterdam, The Netherlands, \& The Netherlands Organisation for Applied Scientific Research (TNO)}
\email{niels.neumann@tno.nl}

\thanks{This work was supported by the Dutch Research Council (NWO/OCW), as part of the Quantum Software Consortium programme (project number 024.003.037),  the NETWORKS programme (grant no. 024.002.003), and the Quantum Delta NL program.\\ This is the full version of an extended abstract that appeared in the proceedings of ITCS'24.}

\maketitle

\begin{abstract}
We consider the problem of decoding corrupted error correcting codes with \nc$^0[\oplus]$ circuits in the classical and quantum settings.
We show that any such classical circuit can correctly recover only a vanishingly small fraction of messages, if the codewords are sent over a noisy channel with positive error rate.
Previously this was known only for linear codes with large dual distance,
whereas our result applies to any code.
By contrast, we give a simple quantum circuit that correctly decodes the Hadamard code with probability~$\Omega(\eps^2)$ even if a $(1/2 - \eps)$-fraction of a codeword is adversarially corrupted.

Our classical hardness result is based on an equidistribution phenomenon for multivariate polynomials over a finite field under biased input-distributions. 
This is proved using a structure-versus-randomness strategy based on a new notion of rank for high-dimensional polynomial maps that may be of independent interest.

Our quantum circuit is inspired by a non-local version of the Bernstein-Vazirani problem, a technique to generate ``poor man's cat states'' by Watts et al., and a constant-depth quantum circuit for the OR function by Takahashi and Tani.
\end{abstract}

\section{Introduction}

Error correcting codes (ECCs), formally introduced in Shannon's celebrated work~\cite{Shannon:1948}, protect digital signals from noise.
An ECC is a map $E:\Sigma^k\to\Sigma^n$, for a finite alphabet~$\Sigma$ and positive integers $n\geq k$, with the property that any message $x\in \Sigma^k$ can be decoded from the codeword~$E(x)$ even if the codeword is partially corrupted.
If too many errors occur, however, recovering the original message may become impossible.
In such cases one can instead resort to \emph{list decoding}, an influential idea proposed in seminal works of Elias~\cite{Elias:1957} and Wozencraft~\cite{Wozencraft:1958}, which aims to give a small list of messages whose codewords are close to the received (corrupted) codeword.
Complexity considerations appear naturally in this context, as encoding and decoding ideally allow for reliable communication with limited computational resources;
they also appear because of the fundamental role played by ECCs in computational complexity itself (see \eg,~\cite{Trevisan:2004} for a survey).

\subsection{Error models}\label{sec:errormodels}
In the error model considered by Shannon~\cite{Shannon:1948}, a codeword is corrupted according to some random process.
A natural such process is given by the \emph{symmetric channel}:
for each coordinate of the codeword independently, the channel either transmits it unchanged with some probability~$\rho$, or replaces it with a uniformly random element of~$\Sigma$ with probability~$1-\rho$.
We refer to~$\rho$ as the \emph{bias} of the channel.\footnote{In this model, each coordinate is thus corrupted with probability $(1 - \rho)(1 - |\Sigma|^{-1})$, which is usually referred to as the \emph{error rate}. For our purposes, however, the bias will be a more convenient parameterization.}
If~$Z\in \Sigma^n$ is distributed according to the random outcome of the symmetric channel with bias $\rho$ applied to a codeword~$E(x)$, we write $Z\sim\mathcal N_\rho\big(E(x)\big)$.
In this model the goal is to correctly decode a corrupted codeword with good probability over the noise. 

The combinatorial worst-case error model of Hamming~\cite{Hamming:1950} instead assumes that the codeword is corrupted arbitrarily on at most some $\delta\in [0,1)$ fraction of coordinates.
We will refer to~$\delta$ as the \emph{error parameter}.
In this setting, the number of errors that can be tolerated depends on the minimal Hamming distance between any pair of distinct codewords, or \emph{minimal distance} of the code, denoted~$d_E$.
Since the Hamming ball of diameter $d_E-1$ around any point $y\in \Sigma^n$ contains at most one codeword, a message can be retrieved if fewer than~$d_E/2$ errors have occurred.

If more errors occur, faithful decoding is no longer possible and list decoding enters the picture. 
For $\delta\in [0,1)$ and positive integer~$L$, a code is \emph{$(\delta,L)$-list decodable} if for any point $y\in \Sigma^n$, the Hamming ball of radius~$\delta n$ centered around~$y$ contains at most~$L$ codewords.
It is well known that any $(\delta,L)$-list decodable code satisfies $L \geq \Omega(1/\eps^2)$ when $\delta = (1-\eps) (1 - |\Sigma|^{-1})$~\cite{Guruswami:2010}.
If fewer than a $\delta$-fraction of codeword coordinates are corrupted, then a random element from this list will give the correct message with probability at least~$1/L$.

\subsection{Circuits}\label{sec:circuits}
A well-studied problem is that of decoding corrupted ECCs by constant-depth circuits with~$n$ inputs,~$k$ outputs and size $\poly(n)$, for example in the context of black-box hardness amplification~\cite{SudanTV:1999, TrevisanVadhan:2007, viola:2006}.
Two classes of such circuits are~\acz{}, consisting of unbounded-fan-in AND, OR and NOT gates, and the class~\ncz{}, consisting of arbitrary bounded-fan-in gates;
without loss of generality, we may assume that the fan-in of any gate in~\ncz{} is at most two.

The extensions of these classes where unbounded-fan-in parity gates are added to the gate sets are denoted by~\ac$^0[\oplus]$ and~\nc$^0[\oplus]$, respectively.
These are proper extensions since parity cannot be computed by~\acz{} circuits and~\ncz{} is a proper subset of~\acz{} (see~\cite{AroraBarak:2009}).
An important distinction is that the outputs of~\ncz{} circuits depend on only a constant number of coordinates of the input, whereas the outputs of~\nc$^0[\oplus]$ circuits can depend on the whole input.
The classes \acz{} and~\nc$^0[\oplus]$ are incomparable since \nc$^0[\oplus]$ cannot compute the $n$-bit AND function;
indeed, \nc$^{0}[\oplus]$ circuits can compute only constant-degree polynomials over~$\F_2$ (see Section~\ref{sec:techniques}), whereas AND has degree~$n$.

We also consider the quantum counterparts of the above circuit classes, denoted QX, where~X is one of the classes discussed above;
these classes were first introduced by Moore~\cite{Moore:1999} and Moore and Nilsson~\cite{MooreNilsson:2001}.
Thus,~\qnc$^0$ is the class of constant-depth quantum circuits containing arbitrary one- and two-qubit gates, while \qnc$^0[\oplus]$ includes unbounded-fan-in parity gates acting on superpositions.
In contrast with their classical analogues, the classes  \qnc$^0[\oplus]$ and \qac$^0$ are known to be equivalent~\cite{GreenHomerMoorePollett:2002,HoyerSpalek:2005,Moore:1999}.\footnote{In addition, the works of Moore showed that these classes are all equivalent to \qac$^0[q]$, the class \qac$^0$ with additional modulo-$q$ gates. 
For an integer $q>1$, a modulo-$q$ gate evaluates to $1$ if the sum of its inputs equals $0\bmod q$ and evaluates to~$0$ otherwise.
Classically, the classes \ac$^0[p]$ and \ac$^0[q]$ are incomparable if~$p$ and~$q$ are powers of distinct primes~\cite{Razborov:1987,Smolensky:1987}.
}
In the setting we consider, all parity gates will be classical.
This is in slight contrast with the works above, which consider quantum parity gates.
The above-mentioned equivalence still holds however, due to the fact that quantum parity can be computed by a \qnc$^0[\oplus]$ circuit (with classical parities).

\subsection{Quantum advantage}
The above-mentioned classes of quantum circuits recently enjoyed renewed interest in the context of provable separations between quantum and classical complexity classes.

One of the principal challenges in quantum computing is to determine for which types of problems quantum computers offer a significant advantage over classical ones.
Celebrated examples of practical importance, such as Shor's algorithm for integer factoring~\cite{Shor:1997}, require quantum computers of a vastly larger scale than currently available. 
Moreover, formally proving classical hardness of factoring appears to be beyond the scope of currently-available techniques.
Constant-depth circuits form an attractive computational model, as they will likely be easier to implement in practice and, from the perspective of complexity theory, provide one of the few settings currently amenable to provable lower bounds.

A recent series of works, starting with a breakthrough of Bravyi, Gosset and K\"{o}nig~\cite{BravyiGossetKoenig:2018}, considered the relative power of \emph{shallow quantum circuits}. For instance:

\begin{itemize}
    \item The 2D-Hidden Linear Function problem can be solved exactly in \qnc$^0$ while any \ac$^0$ circuit succeeds with exponentially small probability under a certain input distribution~\cite{WattsKothariSchaefferTal:2019}; this strengthened the main result of~\cite{BravyiGossetKoenig:2018} showing that this problem separates \qnc$^0$ from \nc$^0$ in the worst case. 
    \item The Relaxed Parity Halving problem can be solved exactly in \qnc$^0$ while any \ac$^0$ circuit succeeds with probability at most $\frac{1}{2} + \exp(-n^\eps)$ under the uniform distribution~\cite{WattsKothariSchaefferTal:2019}.
    \item The Parallel Parity Bending problem can be solved with probability $1 - o(1)$ by a \qnc$^0/\mathsf{qpoly}$ circuit while any  \ac$^0[\oplus]/\mathsf{rpoly}$ succeeds with probability at most $O(n^{-\eps})$~\cite{WattsKothariSchaefferTal:2019}.
    \item The problem of simulating correlations obtained from measuring graph states \qnc$^0$ and \nc$^0$, even in the average-case~\cite{LeGall:2019}.
    \item The 1D-Magic Square problem separates noisy \qnc$^0$ circuits from \nc$^0$~\cite{BravyiGKT:2020}.
\end{itemize}
Similar separations based on other relational and sampling-based problems were proven in~\cite{CoudronSV:2021, GrierSchaeffer:2020, watts2023unconditional}.
A common feature of all these problems is that they were specifically designed to prove separations between shallow quantum and classical circuits.

We instead consider the problem of decoding a corrupted error-correcting code, which arises naturally in computer science.
This problem is well studied in the context of classical complexity theory, where shallow circuits endowed with parity gates are also considered;
see Sections~\ref{sec:qresults} and~\ref{sec:literature} for further discussion.

\subsection{The Hadamard code}\label{sec:Hadamard}
A basic but important example of an ECC is the \emph{Hadamard code}, which encodes $k$-bit messages into codewords of length $n=2^k$ and is given by the $\F_2$-linear map $H(x) = (\langle x,y\rangle)_{y\in \F_2^k}$, where $\langle x,y\rangle = y^\mathsf{T}x$.
This code has minimal distance~$n/2$ and is $(1/2-\eps, O(1/\eps^2))$-list decodable for any $\eps\in (0, 1/2]$, which is known to be optimal for any code~\cite{Guruswami:2010}.

Under the symmetric channel, the Chernoff bound implies that unique decoding of the Hadamard code is possible with high probability for any constant bias $\rho > 0$.\footnote{This even holds for any code over a large enough alphabet, as shown in~\cite{RudraUurtamo:2010}.} 
This is due to the fact that, with high probability, the Hamming ball of radius $(1/4 - \rho/4)n$ around a corrupted version of a codeword~$C$ contains no other codewords than~$C$ itself.

For the worst-case Hamming model, Goldreich and Levin~\cite{GolreichLevin} famously gave an efficient list decoding algorithm for the Hadamard code that runs in time $\poly(k, 1/\eps)$, for error parameter $\delta = 1/2 - \eps$. 
For fixed $\eps>0$, their algorithm gives a probabilistic \ac$^0$ circuit that, on input length $n$, correctly returns the original message with probability~$\Omega(1)$.

\section{Our results}
Here we consider the following problem.
Let $E:\F_2^k\to \F_2^n$ be a (binary) error correcting code.
Given a map $\phi:\F_2^n\to \F_2^k$ representing some decoding procedure, we wish to bound the probability of correct message retrieval:
\beq\label{eq:success}
\Pr\big[\phi\big(E(x) + Z\big) = x],
\eeq
where~$x\in \F_2^k$ is some message and~$Z\in \F_2^n$ is an error string.
We consider two scenarios, one classical and one quantum.

\subsection{Classical setting}
\label{sec:cresults}

In the first scenario,~$\phi$ represents an~\nc$^0[\oplus]$ circuit,~$x$ is uniformly distributed and~$Z\sim\mathcal N_\rho(0)$, so that $E(x) +Z$ is a random codeword corrupted according to the binary symmetric channel with bias~$\rho$.
Our main result in this setting says that~\eqref{eq:success} tends to zero, for any $\rho\in [0,1)$ and any code:

\begin{theorem}[Impossibility of decoding by {\nc$^0[\oplus]$}]\label{thm:nc0plus}
For any $\rho\in [0,1)$, $d\in \N$ and $\eps\in (0,1]$, there is a $k_0 = k_0(d, \rho, \eps)\in \N$ such that the following holds.
Let $k \geq k_0$ and~$n$ be positive integers, $E:\F_2^k\to \F_2^n$ be any map and $\phi:\F_2^n\to\F_2^k$ be a map computable by an \nc$^0[\oplus]$ circuit of depth at most~$d$.
Then, for a uniformly distributed~$x\in \F_2^k$ and $Z\sim\mathcal N_\rho(0)$, we have that
\beqn
\Pr\big[\phi\big(E(x) + Z\big) = x] < \eps.
\eeqn
\end{theorem}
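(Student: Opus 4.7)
The plan is to exploit the polynomial structure of \nc$^0[\oplus]$ circuits and establish an equidistribution theorem for the resulting polynomial maps when the input is drawn from a biased product distribution. First, I would observe that an \nc$^0[\oplus]$ circuit of depth $d$ computes a polynomial map $\phi : \F_2^n \to \F_2^k$ in which each output coordinate is an $\F_2$-polynomial of degree at most $D := 2^d$: parity gates are $\F_2$-linear and so preserve degree, while fan-in-$2$ AND and OR gates at worst double it. Thus the constant-depth assumption on the circuit translates into a constant-degree assumption on the map it computes.

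Writing $Y := E(x) + Z$, for each fixed $x$ the vector $Y$ is a product of independent biased bits, with coordinate $i$ equal to $E(x)_i$ with probability $(1+\rho)/2$; denote this product measure by $\mu_x$. The decoding probability to bound is $\E_x\, \mu_x(\phi^{-1}(x))$. If $\mu_x$ were replaced by the uniform measure on $\F_2^n$ this would equal exactly $2^{-k}$, so the whole argument hinges on showing that, uniformly over output labels, the biased mass $\mu_x(\phi^{-1}(x))$ is close to the uniform mass. I would expand $\mathbf{1}[\phi(Y)=x]$ via its Fourier transform on $\F_2^k$ and write the density $d\mu_x/dU$ as a product over coordinates; the deviation from $2^{-k}$ then becomes a weighted sum of correlations of the form $\E_{Y\sim U}[\chi_S(Y)(-1)^{\alpha\cdot\phi(Y)}]$, with the weights being products of the per-coordinate biases $1-2p_{i,x}$. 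The objective is to show that these correlations decay geometrically in $|S|$ uniformly over $\alpha\neq 0$, so that the total deviation is $o(1)$ once $k$ is large relative to $d$, $\rho$ and $\eps$.

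These correlations are bias-type quantities for the degree-$\leq D$ polynomial $\alpha\cdot\phi$, and to control all $2^k-1$ of them \emph{simultaneously} I would invoke a structure-versus-randomness dichotomy governed by a suitable notion of \emph{high-dimensional rank} for the whole map $\phi = (P_1,\ldots,P_k)$. If this rank is large, a biased-measure analogue of the classical rank-implies-small-bias theorems (in the spirit of Green--Tao, Kaufman--Lovett and Bhowmick--Lovett) should force every nonzero $\F_2$-linear combination $\sum_i \alpha_i P_i$ to have negligible correlation with every character $\chi_S$, delivering the desired equidistribution. If instead the rank is small, the definition of the rank must allow us to factor $\phi = \Psi\circ\Phi$ through a polynomial map $\Phi : \F_2^n \to \F_2^{k'}$ of strictly smaller output dimension; since $\phi^{-1}(x)$ is then a union of fibres of $\Phi$, any successful decoder for $\phi$ yields one for the reduced instance $\Phi$, and I can iterate until the high-rank case applies.

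The main obstacle, and the heart of the paper, is designing this rank notion and its companion regularity/decomposition lemma so that (i) it genuinely captures the \emph{joint} pseudorandomness of all $k$ output polynomials, not just each coordinate in isolation, and (ii) the accompanying inverse theorem holds for \emph{biased} product measures, where the coordinate-wise translation symmetry of $U$ that underwrites most of the standard higher-order Fourier machinery is lost. Adapting the inverse theorem from the uniform to the biased regime is already nontrivial, and having to do so uniformly over the per-coordinate biases imposed by an \emph{arbitrary} codeword $E(x)$ is precisely what forces the introduction of a new high-dimensional rank, rather than an appeal to existing scalar-valued notions, as the technical engine of the proof.
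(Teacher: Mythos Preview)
Your outline captures the correct broad architecture---reducing to bounded-degree polynomial maps and running a structure-versus-randomness argument governed by a rank notion for the whole map $\phi$---but two of the key technical moves diverge from what the paper does, and in both places the paper's choice is precisely what makes the argument go through.

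\textbf{High-rank case.} Expanding the biased density $d\mu_x/dU$ in the standard Fourier basis produces a sum $\sum_{S\subseteq[n]} \rho^{|S|}(\pm 1)\,\E_U\big[\chi_S(Y)(-1)^{\alpha\cdot\phi(Y)}\big]$ whose weights alone total $(1+\rho)^n$. No plausible bound on the individual correlations (which are biases of degree-$D$ polynomials and hence at best $p^{-\text{rank}}$ small) can absorb an $n$-exponential sum; ``geometric decay in $|S|$'' would need a rate dominating $\binom{n}{|S|}$, which cannot hold uniformly. The paper sidesteps this entirely: it realises the biased noise as $Z_{|I}$ uniform on a random set $I\sim[n]_{1-\rho}$ with $Z_{|I^c}=0$, so that one only has to bound $\Pr_{z\in\F_p^I}[\phi_{|I}(z)=w]\le p^{-\arank_d(\phi_{|I})}$ for a \emph{single} restriction, and then proves (in a companion paper) a random-restriction theorem showing $\arank_d(\phi_{|I})\ge \kappa\cdot\arank_d(\phi)$ with high probability. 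This coordinate-restriction viewpoint, not a density expansion, is the mechanism that transfers uniform-measure equidistribution to the biased measure.

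\textbf{Low-rank case.} Your reduction factors $\phi=\Psi\circ\Phi$ through a map of smaller \emph{output dimension} $k'$ and iterates. But if such a factorisation exists then the image of $\phi$ already has size at most $2^{k'}$, so $\Pr_x[\phi(\cdot)=x]\le 2^{k'-k}$ with no iteration needed; conversely, the natural rank notions---including the paper's analytic rank $\arank_d(\phi)=-\log_p\max_{\deg\psi<d}\Pr[\phi=\psi]$---do not yield a factorisation through a smaller target when the rank is low. The paper instead inducts on \emph{degree}: low analytic rank means $\Pr[\phi=\psi]$ is large for some $\psi$ of degree $<d$, so the polynomial $P(y,v)=\langle v,\phi(y)-\psi(y)\rangle$ has non-negligible bias, and Kaufman--Lovett writes $P$ as a function $\Gamma$ of boundedly many derivatives $\Delta_{(h_i,w_i)}P$. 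Fourier-expanding $\omega^{\Gamma}$ and using that $P$ is linear in $v$ reduces the event $\{\phi(E(x)+Z)=x\}$ to events $\{(\gamma_\alpha+\psi)(E(x)+Z)=x\}$ for a bounded family of maps $\gamma_\alpha+\psi$ of degree at most $d-1$, to which the inductive hypothesis applies (at the cost of a $p^{s/2}$ factor absorbed into the new $\eps$). The degree drop, not a dimension drop, is what drives the induction.
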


In particular, this theorem shows that no \nc$^0[\oplus]$ circuit can correctly decode more than an $\eps$-fraction of codewords with probability higher than $\eps$ over the noise distribution, if the messages are long enough depending on $\eps$, the error rate $(1-\rho)/2 >0$ and the depth of the circuit.
As a consequence of Yao's minimax principle~\cite{Yao:1997} and the Chernoff bound, it follows that any probabilistic \nc$^0[\oplus]$ circuit will also fail (with high probability) to correctly decode any binary ECC in the worst-case Hamming model, for any constant error parameter~$\delta \in (0, 1/2]$.

\medskip

We note that the decay we obtain on the probability~\eqref{eq:success} of correct message retrieval as a function of the message length is extremely slow, making Theorem~\ref{thm:nc0plus} a qualitative result rather than quantitative.
Nevertheless, we conjecture that the true decay of this probability is exponential in the message length~$k$;
this would clearly be optimal, as can be seen by taking a constant map~$\phi$ which always returns some fixed message.
In Section~\ref{sec:high_char} we will provide some evidence to support this conjecture.

\subsection{Quantum setting}
\label{sec:qresults}

In the second scenario, we consider the worst-case Hamming model with constant-depth quantum circuits.
Our main result in this setting is an explicit \qnc$^0[\oplus]$ circuit capable of decoding the Hadamard code.

\begin{theorem}[Decoding Hadamard with {\qnc$^0[\oplus]$}]\label{thm:qcircuit}
There is a family of \qnc$^0[\oplus]$ circuits $(\mathcal C_n)_{n\in \N}$ such that the following holds.
Let $k\in \N$, $n = 2^k$ and $\eps \in (0,1/2]$.
Then, for any $y\in \F_2^n$ and any $x\in \F_2^k$ satisfying $d\big(y, H(x)\big)\leq (\frac{1}{2} - \eps)n$, on input~$y$ the circuit~$\mathcal C_n$ returns~$x$ with probability~$\Omega(\eps^2)$.
\end{theorem}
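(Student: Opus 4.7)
The plan is to implement a noise-tolerant variant of the Bernstein-Vazirani algorithm. In the noise-free case, given oracle access to the phases $(-1)^{\langle x,z\rangle}$ for all $z$, the standard BV algorithm prepares the state $\frac{1}{\sqrt{n}} \sum_z (-1)^{\langle x,z\rangle}|z\rangle$ on $k$ qubits and extracts $x$ via $H^{\otimes k}$ followed by a computational-basis measurement. In our setting the phases $(-1)^{H(x)_z}$ are not available through an oracle but through the corrupted classical input string $y\in \F_2^n$, with $d(y, H(x))\leq (\frac{1}{2}-\eps)n$. If one manages to prepare the state $|\psi\rangle := \frac{1}{\sqrt{n}} \sum_z (-1)^{y_z}|z\rangle$ on $k$ qubits from the classical string $y$, then applying $H^{\otimes k}$ and measuring in the computational basis yields $x$ with probability
\beqn
\Big|\tfrac{1}{n}\sum_z (-1)^{y_z \oplus \langle x,z\rangle}\Big|^2 \,=\, \Big(1-\tfrac{2\,d(y,H(x))}{n}\Big)^2 \,\geq\, (2\eps)^2 \,=\, \Omega(\eps^2),
\eeqn
which matches the bound in the statement, regardless of \emph{how} the phases $(-1)^{y_z}$ are imprinted.

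The problem therefore reduces to preparing $|\psi\rangle$ from the classical string $y$ in constant quantum depth with parity gates. I would introduce a $k$-qubit address register $A$ and an $n$-qubit position register $B$, and proceed as follows. First, apply $H^{\otimes k}$ to $A$ to form $\frac{1}{\sqrt{n}}\sum_z |z\rangle_A|0^n\rangle_B$. Next, coherently compute a position-indicator into $B$, i.e.\ map $|z\rangle_A|0^n\rangle_B \mapsto |z\rangle_A|e_z\rangle_B$, where $|e_z\rangle$ is the basis vector of $B$ with a single $1$ in position $z\in \F_2^k$; concretely, for each $w\in\F_2^k$ in parallel one computes the $k$-ary conjunction $\prod_i (1\oplus z_i\oplus w_i)$ into qubit $B_w$. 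Then apply $Z^{y_w}$ to each qubit $B_w$---a classically controlled phase, essentially free---which imprints the phase $(-1)^{y_z}$ on branch $|z\rangle_A$. Finally, reverse the position-indicator map to uncompute $B$ back to $|0^n\rangle$, leaving $A$ in exactly the state $|\psi\rangle$, ready for the final Hadamard transform and measurement.

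The main obstacle is realizing the position-indicator map (and its inverse) in constant quantum depth: it requires $n$ parallel $k$-ary conjunctions with heavily shared input qubits from $A$. This is precisely the regime handled by the constant-depth AND/OR construction of Takahashi and Tani, which uses cat states, teleportation-style tricks, and parity gates to compute a wide conjunction of many variables in constant depth; the poor man's cat states of Watts et al.\ provide the underlying fanout of each qubit of $A$ to the many positions of $B$ that consult it, with the randomness inherent in their construction absorbed by classical postprocessing of auxiliary measurement outcomes. The remaining technical items---checking that the $n$ parallel $k$-ary conjunctions compose into a single constant-depth block, verifying that the uncomputation cleanly disentangles $B$, and tracking the classical postprocessing that corrects for the random offsets from the cat-state subroutine---are routine once these ingredients are assembled, and the probability bound then follows immediately from the interference calculation above.
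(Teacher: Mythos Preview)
Your proposal is correct and arrives at the $\Omega(\eps^2)$ bound via exactly the same Bernstein--Vazirani interference calculation as the paper; the constant-depth primitives you invoke (Takahashi--Tani exact OR/AND and cat-state-based fan-out \`a la Watts et al.) are also the ones the paper uses.

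The organization of the circuit differs, however. Rather than a single $k$-qubit address register $A$ with an $n$-qubit indicator ancilla $B$ that is later uncomputed, the paper routes the construction through a non-local game: the $n$ ``players'' share the $n$-partite GHZ state $\frac{1}{\sqrt n}\sum_{z}|z\rangle^{\otimes n}$ of local dimension $2^k$; player $w$ applies the phase $(-1)^{y_w}$ to its own $|w\rangle$ component via a local $k$-ary AND, then applies $H^{\otimes k}$ and measures, and the output is the classical XOR (computed by parity gates) of the $n$ measurement results in $\F_2^k$. This distribute-then-XOR layout makes the fan-out of the address register implicit in the GHZ preparation and dispenses with your uncomputation step, at the cost of $nk$ data qubits rather than your $k+n$. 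Your direct-circuit version is closer to a textbook BV implementation and more qubit-economical; the only extra care needed is that the uncompute step must be run with fresh ancillas, since the \qnc$^0[\oplus]$ realization of fan-out consumes measured GHZ states --- which is routine.
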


We note that the bound~$\Omega(\eps^2)$ obtained in the theorem is optimal, since in general there can be $\Theta(\eps^{-2})$ messages~$x\in \F_2^k$ satisfying $d\big(y, H(x)\big)\leq (\frac{1}{2} - \eps)n$.
This bound is non-trivial only when $\eps = \Omega(1/\sqrt{n})$, as there are~$n$ possible messages.

As a simple corollary of Theorem~\ref{thm:qcircuit}, we obtain a similar result for the problem of list decoding the Hadamard code.

\begin{restatable}{corollary}{qlisthad}
\label{cor:qlisthad}
There is a family of \qnc$^0[\oplus]$ circuits $(\mathcal C_n)_{n\in \N}$ such that the following holds.
Let $k\in \N$, $n = 2^k$ and $\eps \in [1/\sqrt n,1/2]$.
Then, on any input~$y\in \F_2^n$, with probability $1 - \eps$ the circuit~$\mathcal C_n$ returns a list~$L(y)$ of size $O(\eps^{-2}\log (1/\eps))$ which contains every $x\in \F_2^k$ with $d\big(y, H(x)\big)\leq (\frac{1}{2} - \eps)n$.
\end{restatable}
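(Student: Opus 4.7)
The plan is to amplify Theorem~\ref{thm:qcircuit} by running many independent copies of the decoding circuit in parallel and taking the union of their outputs as the list. Let $\mathcal C_n$ be the \qnc$^0[\oplus]$ circuit from Theorem~\ref{thm:qcircuit}, so that for every $x\in\F_2^k$ satisfying $d(y,H(x))\le(\tfrac12-\eps)n$ the circuit outputs $x$ on input $y$ with probability at least $c\eps^2$ for some absolute constant $c>0$. The new circuit $\mathcal C'_n$ consists of $T := \lceil C\eps^{-2}\log(1/\eps)\rceil$ independent copies of $\mathcal C_n$ applied in parallel to the same classical input $y$, with $C$ a large constant to be chosen; each copy uses its own ancilla register, so the copies do not interact. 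The output list $L(y)$ is defined as the set of distinct $k$-bit strings produced across the $T$ copies.

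First I would verify that $\mathcal C'_n$ is still in \qnc$^0[\oplus]$: the depth is unchanged because the copies run in parallel, the classical input bits of $y$ may be freely fanned out to all $T$ copies, and the total size is $T$ times that of $\mathcal C_n$, which is polynomial in $n$ as long as $\eps\ge 1/\sqrt n$ so that $T=O(n\log n)$. Next I would bound the failure probability. Fix any $x$ in the target list $S:=\{x\in\F_2^k : d(y,H(x))\le (\tfrac12-\eps)n\}$. By independence of the $T$ copies,
\[
\Pr\bigl[x\notin L(y)\bigr] \;\le\; (1-c\eps^2)^T \;\le\; \exp(-cT\eps^2).
\]
It is standard (e.g., the Johnson bound, as noted after Theorem~\ref{thm:qcircuit}) that $|S|=O(\eps^{-2})$. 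Taking a union bound over $S$,
\[
\Pr[S\not\subseteq L(y)] \;\le\; |S|\,\exp(-cT\eps^2) \;\le\; O(\eps^{-2})\exp(-cC\log(1/\eps)),
\]
which is at most $\eps$ once $C$ is chosen large enough (independently of $n$ and $\eps$). By construction $|L(y)|\le T = O(\eps^{-2}\log(1/\eps))$, as required.

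The only mild obstacle is confirming that parallel repetition is compatible with the \qnc$^0[\oplus]$ model: we need to duplicate the classical input to $T$ copies of the circuit without blowing up the depth. Since $y$ is a classical string, its bits can be read by arbitrarily many gates without any fan-out gadget, so this step is free; each copy of $\mathcal C_n$ operates on its own fresh qubits and the outputs of the $T$ copies are then concatenated into the list. Everything else is a routine Chernoff-style amplification.
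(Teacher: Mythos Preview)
Your proposal is correct and follows essentially the same approach as the paper: run $O(\eps^{-2}\log(1/\eps))$ parallel independent copies of the circuit from Theorem~\ref{thm:qcircuit}, then apply a union bound over the $O(\eps^{-2})$ close codewords. Your write-up is in fact a bit more careful than the paper's, since you explicitly check that the parallel repetition stays within \qnc$^0[\oplus]$ and has polynomial size under the assumption $\eps\ge 1/\sqrt{n}$.
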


\begin{proof}
For a large enough constant~$C>0$, consider $C\eps^{-2}\log (1/\eps)$ parallel instances of the circuit from Theorem~\ref{thm:qcircuit}.
This gives a list~$L(y)$ of the claimed size such that any message $x\in \F_2^k$ satisfying $d\big(y, H(x)\big)\leq (\frac{1}{2} - \eps)n$ appears in~$L(y)$ with probability at least $1 - \eps^3$.
Since there are at most $O(1/\eps^2)$ such messages, it follows from the union bound that with probability at least~$1-O(\eps)$ every such message appears in~$L(y)$.
\end{proof}

\begin{remark}
Note that the circuits obtained in this corollary also output several messages whose codewords differ from the input $y$ in more than $(\frac{1}{2} - \eps)n$ coordinates;
this differs from the usual notion of the list decoding problem, which aims to output a list of all messages $x\in \F_2^k$ with $d\big(y, H(x)\big)\leq (\frac{1}{2} - \eps)n$ and none other.
One can also solve the usual list decoding problem for the Hadamard code using \qnc$^0[\oplus]$ circuits, by making use of MAJORITY gates (and more general threshold gates) to prune the obtained list (see Section~\ref{sec:Sudan_quantum}).
We omit the details, as they are not so relevant for us.
\end{remark}

As a consequence of Theorem~\ref{thm:nc0plus} and Theorem~\ref{thm:qcircuit}, we conclude that the problem of list decoding the Hadamard code separates the complexity classes \nc$^0[\oplus]$ and \qnc$^0[\oplus]$;
this holds for any positive error parameter $\delta>0$.
The task of proving quantum advantage for a natural problem such as list decoding was the original motivation for the present work.

In the high-error regime where the parameter $\delta$ approaches the information-theoretic limit of $1/2$ (which is relevant for hardness amplification), a stronger separation follows by combining Theorem~\ref{thm:qcircuit} with a result of Sudan showing hardness of noisy decoding by \ac$^0[\oplus]$ circuits (see Corollary~\ref{cor:Sudan_Had} below).\footnote{The same separation of complexity classes can also be obtained by combining other previously-known results; see Section~\ref{sec:literature} for a discussion.}
To state this separation theorem precisely, we consider the following problem:

\medskip
\noindent
\textbf{List-Hadamard problem:}
Let $\eps: \N\to (0,1]$ be a function.
For each dyadic number $n=2^k$ we define the problem $\LH_n(\eps)$ as follows:
given $y\in \F_2^n$, output a list of at most $n/4$ elements in $\F_2^k$ containing every $x\in \F_2^k$ satisfying $d\big(y, H(x)\big)\leq \big(\frac{1}{2} - \eps(n)\big) n$.
\medskip

The most general form of our quantum advantage result is given by the following theorem:

\begin{theorem}[Quantum-vs-classical separation] \label{thm:separation}
For any constant $\delta \in (0,\frac{1}{2})$, list decoding the Hadamard code with error parameter $\delta$ separates \qnc$^0[\oplus]$ from~\nc$^0[\oplus]$.
Moreover, for any $(\log n)/\sqrt{n} \leq \eps(n) \leq 1/(\log n)^{\omega(1)}$, the list-Hadamard problem $\LH_n(\eps)$ separates \qnc$^0[\oplus]$ from \ac$^0[\oplus]$.
\end{theorem}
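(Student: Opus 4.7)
The plan is to handle the quantum upper bound uniformly via Corollary~\ref{cor:qlisthad} and to treat the two classical lower bounds separately.

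\emph{Quantum side.} Apply Corollary~\ref{cor:qlisthad}. For the first statement, set $\eps := 1/2 - \delta > 0$, a constant: the resulting list has size $O(\eps^{-2}\log(1/\eps)) = O(1) \leq n/4$ for $n$ large. For the second, the hypothesis $(\log n)/\sqrt n \leq \eps(n) \leq 1/(\log n)^{\omega(1)}$ both satisfies the corollary's requirement $\eps \geq 1/\sqrt n$ and keeps the list size $O(\eps^{-2}\log(1/\eps))$ at most $n/\log n$, comfortably inside the $n/4$ allowance of $\LH_n(\eps)$. A single \qnc$^0[\oplus]$ family therefore solves both problems with probability $1-o(1)$.

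\emph{Classical side, first statement.} For constant $\delta < 1/2$, the Johnson bound limits to $L = O((1/2 - \delta)^{-2}) = O(1)$ the number of Hadamard codewords within $\delta n$ of any point, so under the natural list-decoding formulation any valid \nc$^0[\oplus]$ solver can be taken to output a list of bounded size. Pick a channel bias $\rho \in (0,1)$ with $(1-\rho)/2 < \delta$: by Chernoff, $Z \sim \mathcal N_\rho(0)$ has weight at most $\delta n$ with probability $1 - o(1)$, so for uniform $x \in \F_2^k$ the message $x$ lies in the solver's output on input $H(x) + Z$ with probability $1 - o(1)$. Uniformly random selection from this list yields a probabilistic \nc$^0[\oplus]$ decoder for the binary symmetric channel with success probability $\Omega(1/L) = \Omega(1)$; fixing the best randomness gives a deterministic such decoder, contradicting Theorem~\ref{thm:nc0plus} for $k$ sufficiently large in terms of the circuit depth, $\rho$ and this constant.

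\emph{Classical side, second statement, and the main obstacle.} In this regime random selection from an $n/4$-element list only yields success probability $O(\eps(n)^2/n)$, too weak to invoke Theorem~\ref{thm:nc0plus} in its currently proved form. Instead, I would appeal to Corollary~\ref{cor:Sudan_Had}, the noisy-decoding lower bound for \ac$^0[\oplus]$ derived from a result of Sudan, which directly rules out \ac$^0[\oplus]$ solutions to $\LH_n(\eps)$ throughout the stated range. The delicate point across both parts is that the $n/4$ list-size allowance in $\LH_n(\eps)$ is much larger than the Johnson-bound count of truly close codewords; this forces the first statement to be read as list decoding at the Johnson-bound size (where the random-selection reduction succeeds) and makes the Sudan-based hardness essential for the second.
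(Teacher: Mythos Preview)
Your proposal is correct and follows essentially the same approach as the paper. The paper derives the first separation by combining Theorem~\ref{thm:qcircuit} with Theorem~\ref{thm:nc0plus} (invoking Yao's minimax and Chernoff, which is precisely your random-selection-from-a-bounded-list reduction), and derives the second by combining Corollary~\ref{cor:qlisthad} with Corollary~\ref{cor:Sudan_Had}; your write-up makes explicit the Johnson-bound list-size reasoning that the paper leaves implicit, and your closing observation about the $n/4$ allowance correctly identifies why the two statements are phrased differently.
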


\subsection{Related results and discussion}
\label{sec:literature}

Both the problem of decoding corrupted ECCs and the problem of proving quantum-versus-classical separations of complexity classes are well studied, and there are several results in the literature related to the results presented here.

\medskip
The main strength of our Theorem~\ref{thm:nc0plus} is that it holds for any code and for any positive error rate.
Complementary results are known for restricted classes of codes, and also for when the error rate tends to~$1/2$.
We will now expand on some of these results.

A code $E:\F_2^k\to\F_2^n$ is \emph{$t$-wise independent} if, for any $t$-subset of coordinates~$S\subseteq [n]$ and a uniformly random $X\in \F_2^k$, the restriction $E(X)_{|S}$ is uniformly distributed over~$\F_2^S$.
Many codes have this property;
for instance, the dual code of a linear code of distance~$d$ is $(d-1)$-wise independent.
Under the same noise model considered here, Lee and Viola~\cite{LeeViola:2017}, using earlier work of Viola~\cite{Viola:2009}, 
showed that~\nc$^0[\oplus]$ circuits cannot distinguish a corrupted uniformly random codeword of an $\omega(1)$-wise independent linear code from a uniformly random element of~$\F_2^n$.
Note that this problem is formally easier than (list) decoding.

Their result does not cover the Hadamard code, however, as it is not even 3-wise independent.
Indeed, the Hadamard code is also easy to distinguish, as it contains the sub-code $(x_1,x_2, x_1+x_2)$.
Since the parity of these three bits is always zero, the parity under noise is biased towards zero and therefore easily distinguished from the parity of a random string.

In the very-high-error regime where the error rate approaches the information-theoretic limit of $1/2$ (which is relevant for hardness amplification), stronger results are also known.
For instance, Sudan (see~\cite[Section~6.2]{viola:2006}) showed that list decoding with error parameter $1/2 - \eps$ requires probabilistic~\ac$^0[\oplus]$ circuits to have size $\exp(\poly(1/\eps))$.
Below we state his result when restricted to the Hadamard code, which is done for concreteness and better clarity;
as can be easily seen from its proof, one could instead consider any other ECC.

\begin{restatable}[MAJORITY from list-Hadamard]{theorem}{ThmSudanMajority}
\label{thm:sudan}
Let $\mathcal{C}$ be a probabilistic circuit that solves the list-Hadamard problem $\LH_n(\eps)$ with probability at least $3/4$.
There exists a (deterministic) oracle \ac$^0$ circuit $\mathcal{D}$ of size $\poly(n, 1/\eps)$ which, when given oracle access to $\mathcal{C}$ and the ability to fix its random bits, computes MAJORITY on $\Omega(1/\eps)$ bits.
\end{restatable}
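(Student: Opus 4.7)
The plan is to reduce MAJORITY on $m = \Theta(1/\eps)$ bits (with $m$ odd) to the list-Hadamard problem $\LH_n(\eps)$, where $n = 2^k$ and $k = \lceil \log m \rceil$. Given input bits $b_1, \ldots, b_m$, I would construct $y_b \in \F_2^n$ by partitioning $\F_2^k$ into $m$ equal-size blocks $B_1, \ldots, B_m$ and setting $(y_b)_u = b_i$ whenever $u \in B_i$. Writing $s = \sum_i b_i$, the Hamming distance from $y_b$ to the all-zero codeword $H(0) = 0^n$ is exactly $(s/m) n$, so $H(0)$ lies within distance $(\tfrac12 - \eps) n$ of $y_b$ iff $s \leq (\tfrac12 - \eps) m$; choosing $\eps = 1/(2m)$ makes this condition equivalent to $\mathrm{MAJ}(b) = 0$. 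The encoding $b \mapsto y_b$ only duplicates input bits, so it is $\ac^0$ (in fact $\nc^0$) of size $O(n)$.

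The reduction circuit $\mathcal{D}$ then queries $\mathcal{C}$ on (a shift of) $y_b$ and post-processes the returned list $L \subseteq \F_2^k$ of at most $n/4$ messages by checking whether a specific target $x^* \in \F_2^k$ belongs to $L$. This membership test is an OR over the (at most $n/4$) list entries of an AND of $k$ XNORs comparing each of the $k$ bits to the target, which is $\ac^0$ of size $\poly(n)$. When $\mathrm{MAJ}(b) = 0$, the message $0$ lies in every correct list (because $H(0)$ is close to $y_b$), and this happens with probability $\geq 3/4$ over $\mathcal{C}$'s random bits. The difficulty is the other direction: when $\mathrm{MAJ}(b) = 1$, the oracle $\mathcal{C}$ has no obligation to exclude $0$ from its list, and could include it \emph{spuriously}.

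To break this asymmetry, I would apply a uniformly random linear shift $x^* \in \F_2^k$, query $\mathcal{C}(y_b \oplus H(x^*))$, and check whether $x^* \in L$. By linearity of $H$, the codeword $H(x^*)$ is close to $y_b \oplus H(x^*)$ iff $H(0)$ is close to $y_b$, so the correct answer still tracks $\mathrm{MAJ}(b)$; but for a uniformly random $x^*$ one can hope that the list-size bound $|L| \leq n/4$ together with the Johnson-type bound $O(1/\eps^2)$ on the \emph{true} close-codeword list keeps the spurious-inclusion probability bounded away from $3/4$, producing a constant gap between the two cases. One then amplifies by taking $T = \Theta(m)$ independent runs (each with fresh $x^*$ and fresh random bits for $\mathcal{C}$) and a majority vote, bringing the per-input failure probability down to $\exp(-\Omega(m))$. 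A union bound over the $2^m$ possible inputs yields a single fixing of $\{(x^*_t, \omega_t)\}_{t=1}^T$ that works for every input $b$; hardcoding this fixing in $\mathcal{D}$ produces the required deterministic $\ac^0$ oracle circuit of size $\poly(n, 1/\eps)$.

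The main obstacle is making the spurious-inclusion bound rigorous: an adversarial $\mathcal{C}$ could in principle systematically include the shift target $x^*$ in its output in order to defeat the averaging argument. Overcoming this requires carefully exploiting the group structure of the Hadamard codewords under XOR—so that the shifted family $\{y_b \oplus H(x^*) : x^* \in \F_2^k\}$ is sufficiently symmetric—together with the absolute list-size bound to control how often $x^*$ can be hit as $x^*$ varies. All other steps—the blockwise encoding, the $\ac^0$ list-membership check, Chernoff amplification, and union-bound derandomization—are routine and comfortably fit within the $\poly(n, 1/\eps)$ size budget for $\ac^0$.
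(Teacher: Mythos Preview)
Your random-shift idea does not overcome the spurious-inclusion obstacle, and the gap is genuine rather than a technicality. Fix any $b_0$ with $\maj(b_0)=1$ and consider the adversarial list-decoder $\mathcal{C}$ that, on input $z$, outputs the honest list of all $x$ with $d(z,H(x))\le(\tfrac12-\eps)n$ together with the single extra element $H^{-1}(z\oplus y_{b_0})$ whenever $z\oplus y_{b_0}\in\im(H)$. This $\mathcal{C}$ still solves $\LH_n(\eps)$ with lists of size $O(1/\eps^2)+1\le n/4$, yet for the input $b=b_0$ one has $z=y_{b_0}\oplus H(x^*)$ and hence $z\oplus y_{b_0}=H(x^*)$, so $x^*\in L(z)$ with probability~$1$ for \emph{every} shift $x^*$. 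Your amplification then votes unanimously for the wrong answer on $b_0$. The group symmetry you invoke is exactly what the adversary exploits: the map $x^*\mapsto y_{b_0}\oplus H(x^*)$ is a bijection, so knowledge of $y_{b_0}$ lets $\mathcal{C}$ invert it and plant $x^*$ in the list. The list-size bound only says that $\mathcal{C}$ cannot do this simultaneously for all $b$, but you need correctness for every $b$, not on average.

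The paper breaks this by replacing your deterministic block encoding with a \emph{random} one: each coordinate of the noise $N_x$ is an independent uniform sample from the entries of $x$. The crucial point is that when $|x|=t/2$ exactly, $N_x$ is i.i.d.\ uniform over $\F_2^n$, so $H(m)+N_x$ is uniform and in particular \emph{independent of $m$}; then $\Pr_M[M\in L(H(M)+N_{1/2})]=\Pr_M[M\in L(N_{1/2})]\le |L|/2^k\le 1/4$, and one may fix a good $m$. This only distinguishes ``exactly balanced'' from ``strictly sub-balanced,'' so the paper first solves the promise problem $\isbal_t$ and then reduces $\maj_t$ to it by zeroing prefixes of $x$ and OR-ing the outcomes (if $|x|\ge t/2$ some prefix-zeroed string is exactly balanced, otherwise none is). The per-input success probability is only a constant, so derandomization goes through Ajtai's $\ac^0$ circuit for approximate majority rather than a Chernoff/union-bound argument as you propose. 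A smaller issue: you set $n=2^{\lceil\log m\rceil}$, but $n$ is given by the hypothesis; your block construction should use blocks of size $n/m$ inside the given $\F_2^n$.
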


This result can be readily deduced from Sudan's arguments exposed in~\cite[Section~6.2]{viola:2006};
since it is not given in this form elsewhere, we include its elegant proof in Appendix~\ref{sec:sudan}.
As a corollary, the circuit lower bound for MAJORITY due to Razborov~\cite{Razborov:1987} and Smolensky~\cite{Smolensky:1987} gives the following (known) hardness result for list decoding the Hadamard code.

\begin{corollary}[Hardness of list-Hadamard] \label{cor:Sudan_Had}
If $\eps(n) \leq 1/(\log n)^{\omega(1)}$, then the list-Hadamard problem $\LH_n(\eps)$ cannot be solved by a probabilistic \ac$^0[\oplus]$ circuit with probability~$\Omega(1)$.
\end{corollary}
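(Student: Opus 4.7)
The plan is to derive a contradiction by combining Theorem~\ref{thm:sudan} with the classical Razborov--Smolensky lower bound for computing MAJORITY in \ac$^0[\oplus]$. Suppose, toward a contradiction, that some family of probabilistic \ac$^0[\oplus]$ circuits $(\mathcal{C}_n)$ of polynomial size and constant depth solves $\LH_n(\eps)$ with probability $\Omega(1)$.

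My first step is to boost the success probability from $\Omega(1)$ to at least $3/4$, which is the precondition required by Theorem~\ref{thm:sudan}. I would run a constant number $t$ of independent parallel copies of $\mathcal{C}_n$ and output the union of their lists; with probability at least $3/4$ the union contains every correct message. The resulting list has size at most $t \cdot n/4$ rather than $n/4$, but inspection of the reduction underlying Theorem~\ref{thm:sudan} shows it is insensitive to the exact list-size bound as long as it remains polynomial in $n$ (and in any case the Hadamard list-decoding radius already guarantees only $O(1/\eps^2)$ genuine near-codewords). Call the amplified circuit $\mathcal{C}^*_n$; it is still polynomial-size and of constant depth.

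Next, I would apply Theorem~\ref{thm:sudan} to $\mathcal{C}^*_n$ to obtain a deterministic \ac$^0$ oracle circuit $\mathcal{D}$ of size $\poly(n, 1/\eps)$ that computes MAJORITY on $m = \Omega(1/\eps)$ bits, given oracle access to $\mathcal{C}^*_n$ and the ability to fix its random bits. Fixing the random bits turns $\mathcal{C}^*_n$ into a deterministic \ac$^0[\oplus]$ circuit, and substituting it for each oracle call yields a single deterministic \ac$^0[\oplus]$ circuit of size $\poly(n, 1/\eps)$ and constant depth that computes MAJORITY on $m$ bits.

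Finally, I would invoke the Razborov--Smolensky theorem~\cite{Razborov:1987, Smolensky:1987}: any \ac$^0[\oplus]$ circuit of depth $e$ computing MAJORITY on $m$ bits has size at least $\exp\bigl(m^{\Omega(1/e)}\bigr)$. Since our circuit has constant depth and $1/\eps \geq (\log n)^{\omega(1)}$, the lower bound becomes $\exp\bigl((\log n)^{\omega(1)}\bigr) = n^{\omega(1)}$, which is super-polynomial in $n$ and contradicts our polynomial-size hypothesis. The main obstacle I anticipate is verifying that the amplification step does not spoil Sudan's reduction when the list size grows by a constant factor; I expect this to be routine but it warrants a quick inspection of the proof of Theorem~\ref{thm:sudan}.
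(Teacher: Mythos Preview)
Your overall strategy---combine Theorem~\ref{thm:sudan} with the Razborov--Smolensky lower bound for MAJORITY---is exactly what the paper intends (the paper gives no further detail beyond citing those two ingredients). However, your amplification step contains an error worth flagging.

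You assert that Sudan's reduction is ``insensitive to the exact list-size bound as long as it remains polynomial in~$n$,'' but this is not so. In the proof of Theorem~\ref{thm:sudan}, Lemma~\ref{lem:decoding_random_message} obtains $\Pr[m\in L(H(m)+N_{1/2})]\le 1/4$ precisely from the bound $|L|\le n/4 = 2^k/4$; if your amplified list has size $tn/4$, that estimate degrades to~$t/4$, and once $t\ge 4$ the gap exploited in Lemma~\ref{lem:list_decoding_implies_IsBal} vanishes. A short calculation shows that union-of-$t$-copies never creates a usable gap when $p\le 1/4$: the unbalanced side improves to $1-(1-p)^t\approx tp$, while the balanced side degrades to $t/4$, and $tp>t/4$ still requires $p>1/4$. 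Your parenthetical about there being only $O(1/\eps^2)$ genuine near-codewords does not rescue this, since pruning the list to those elements would require thresholding Hamming distances---essentially MAJORITY, which is what you are trying to build.

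The simple repair is to note that for any constant $p>1/4$ no amplification is needed at all: the reduction already yields an $\isbal_t$ circuit that outputs~$1$ with probability $\ge 3/4$ on balanced inputs and $\le 1-p$ on unbalanced ones, and the approximate-majority step in Lemma~\ref{lem:prob_IsBal} works for any pair of separated constant thresholds, not just $0.45$ versus $0.55$. Handling arbitrarily small constants $p\le 1/4$ requires a more careful argument than taking unions.
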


Combining this corollary with our~\qnc$^0[\oplus]$ circuits for list-Hadamard given in Corollary~\ref{cor:qlisthad}, we obtain the second separation of complexity classes stated in Theorem~\ref{thm:separation}.

\medskip
The existence of the quantum circuits of Theorem~\ref{thm:qcircuit} and Corollary~\ref{cor:qlisthad} also follows from the Goldreich-Levin algorithm and the surprising fact that MAJORITY can be computed by a \qnc$^0[\oplus]$ circuit~\cite{HoyerSpalek:2005, TakahashiTani:2013}.\footnote{The above-mentioned classical hardness of MAJORITY thus also implies a separation between \ac$^0[\oplus]$ and \qnc$^0[\oplus]$, showing that despite its simplicity, the latter class of quantum circuits is remarkably powerful.}
However, whereas the circuit based on the Golreich-Levin algorithm depends on the error parameter~$\eps$ (which influences the size of the MAJORITY gates), our quantum circuit is constructed independently of~$\eps$.
Moreover, a key enabling sub-routine in the H{\o}yer-\v{S}palek circuit for MAJORITY~\cite{HoyerSpalek:2005} is the powerful quantum fan-out gate (see below for further details).
In our circuit for Corollary~\ref{cor:qlisthad}, we construct this gate explicitly using only classical parity gates and single- and two-qubit gates; 
these gates are native to many quantum architectures and as such, may give an easier way to implement quantum fan-out.
In the opposite direction, one can use the ideas behind the proof of Theorem~\ref{thm:sudan} to show that our quantum circuit from Corollary~\ref{cor:qlisthad} also gives a \qnc$^0[\oplus]$ circuit for MAJORITY, albeit not exact (see Section~\ref{sec:Sudan_quantum}).

Finally, quantum list-decoding of classical error correcting codes was also studied in~\cite{Yamakami:2016}.
The model considered there consists of a faulty quantum circuit that implements the encoding, which differs from our setting.

\subsection{Future directions}

We conjecture that the correct rate of decay in Theorem~\ref{thm:nc0plus} is exponential in the message length, as suggested by the results we obtain in the high-characteristic setting (exposed in Section~\ref{sec:high_char}).
This raises several intriguing questions related to notions of rank for tensors and polynomial maps~\cite{BrietCS:2022}.

Our results leave open the problem of decoding more general classes of error correcting codes by shallow quantum circuits, or by efficient quantum algorithms.
A particular class of interest consists of low-degree Reed-Muller codes, which generalize the Hadamard code.

A related question is if the problem of distinguishing random corrupted codewords of some code from uniformly random strings gives similar separations of the quantum and classical complexity classes considered here.
For example, Lee and Viola~\cite{LeeViola:2017} proved \nc$^0[\oplus]$-hardness of distinguishing $\omega(1)$-wise independent codes.
Is  there a \qnc$^0[\oplus]$ distinguisher for such a code?

\section{Techniques}
\label{sec:techniques}
To establish our main results, we use techniques from two different areas.
Broadly speaking, Theorem~\ref{thm:nc0plus} builds on ideas from higher-order Fourier analysis~\cite{Tao:2012,HHL}, while Theorem~\ref{thm:qcircuit} (unsurprisingly) uses ideas from quantum computing~\cite{nielsen_chuang_2010}.

\subsection{Polynomial equidistribution}
The proof of Theorem~\ref{thm:nc0plus} uses the basic observation that any function $\F_2^n\to\F_2^k$ that is computable by an \nc$^0[\oplus]$ circuit can be given by a collection of $k$ constant-degree polynomials over~$\F_2$ in~$n$ variables.
Indeed, any gate with fan-in~$d$ implements a function $\F_2^d\to\F_2$ and any such function can be represented by a $d$-variable polynomial of total degree at most~$d$.
Degree is multiplicative under composition and composition occurs only between different layers of the circuit.
Since the parities amount to addition in~$\F_2$ and \nc$^0$ circuits have constant depth, the total degree of the output is bounded.

We will therefore study the distribution of polynomial maps under biased input distributions.
We will do so in a slightly more general setting over arbitrary finite fields of prime order.\footnote{The restriction to prime order is done for notational reasons and for ease of exposition. Our arguments can be readily adapted to the case of non-prime finite fields.}
For a prime~$p$, let~$\F_p$ denote the finite field with~$p$ elements.
For $\rho\in [0,1]$, an $\F_p$-valued random variable~$Z$ is \emph{$\rho$-biased} if with probability~$\rho$ it equals~0 and with probability $1 - \rho$ it is uniformly distributed over~$\F_p$.
Note that this corresponds to the noise $\mathcal{N}_\rho(0)$ added by the symmetric channel when the alphabet is~$\F_p$.

A mapping $\phi:\F_p^n\to\F_p^k$ is a \emph{polynomial map} if there exist polynomials $f_1,\dots,f_k\in \F_p[x_1,\dots,x_n]$ such that $\phi = (f_1,\dots,f_k)$.
The degree of~$\phi$ is the maximal degree among the~$f_i$.
To prove Theorem~\ref{thm:nc0plus}, it thus suffices to prove the following result.

\begin{theorem}[Impossibility of decoding by polynomial maps]\label{thm:biased_equi}
For any $d\in \N$ and $\rho,\eps\in (0,1)$ there exists an integer $k_0 = k_0(p,d,\rho,\eps)$ such that the following holds.
Let $k \geq k_0$ and $n$ be integers, $\phi:\F_p^n\to\F_p^k$ be a polynomial map of degree at most~$d$ and $E:\F_p^k\to\F_p^n$ be an arbitrary function.
Then
\begin{equation*}
\Pr_{x\in \F_p^k, Z\sim \mathcal N_\rho(0)}\big[\phi\big(E(x) + Z\big) = x\big] \leq \eps.
\end{equation*}
\end{theorem}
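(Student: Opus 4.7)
The plan is to reduce the statement to a biased equidistribution theorem for polynomial maps of bounded degree. Since $x$ is uniform,
\begin{align*}
\Pr\big[\phi(E(x)+Z)=x\big]
&= \frac{1}{p^k}\sum_{x\in\F_p^k}\Pr_Z\big[\phi(E(x)+Z)=x\big] \\
&\leq \frac{1}{p^k}\sum_{x\in\F_p^k}\max_{y\in\F_p^n}\Pr_Z\big[\phi(y+Z)=x\big],
\end{align*}
so the question becomes: how concentrated can the output distribution of $\phi(y+Z)$ be on any fixed $x\in\F_p^k$, uniformly in the shift $y$?

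The core step is a structure-versus-randomness dichotomy for polynomial maps $\phi=(f_1,\dots,f_k)$ of degree $\leq d$. First I would introduce a suitable rank $r(\phi)$ tailored to biased inputs (the high-dimensional rank notion advertised in the abstract). Then, at thresholds $r_1\ll r_2$ depending only on $p,d,\rho,\eps$, one wants to prove: \emph{either} $r(\phi)\geq r_1$, in which case $\big|\Pr_Z[\phi(y+Z)=x]-p^{-k}\big|\leq \eps\cdot p^{-k}$ uniformly in $x$ and $y$ (pseudorandom case); \emph{or} $r(\phi)<r_1$, in which case $\phi$ factors through a polynomial map of codomain dimension at most $r_2$, so $|\im(\phi)|\leq p^{r_2}$ (structured case). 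Substituting the pseudorandom bound into the display gives $\Pr\big[\phi(E(x)+Z)=x\big]\leq (1+\eps)p^{-k}$; in the structured case the decoder can only succeed when $x\in\im(\phi)$, yielding a bound of $|\im(\phi)|\cdot p^{-k}\leq p^{r_2-k}$. Taking $k_0$ larger than $r_2+\log_p(2/\eps)$ makes both bounds at most $\eps$.

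The main obstacle is establishing the pseudorandom clause, since classical polynomial equidistribution (via Gowers norms and Kaufman--Lovett / Bhattacharyya--Fischer--Hatami--Hatami--Lovett regularity) is stated for the uniform distribution on $\F_p^n$, whereas $Z$ here is biased. A natural route is to decompose $\mathcal N_\rho(0)$ as a mixture of uniform distributions on random affine restrictions: sample $T\subseteq[n]$ by including each coordinate independently with probability $\rho$, fix the coordinates in $T$ (to $y_i$) and let the complement be uniform in $\F_p$; each conditional map is a polynomial map of the same degree in fewer variables, and one tries to show that a typical restriction preserves high rank so that uniform equidistribution applies. The subtle point is that $k$ is unbounded, so crude union bounds over the $p^k$ possible outputs or the $p^k-1$ nonzero combinations $\sum_i\lambda_i f_i$ are too lossy; this is where the robust \emph{high-dimensional} rank notion---one that jointly controls all outputs without a $p^k$-size loss, and that stays large under a typical random restriction---must do its work.
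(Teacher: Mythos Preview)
Your overall architecture---structure-versus-randomness on a suitable rank of the polynomial map, and handling biased noise via random restrictions---is the right one, and matches the paper's pseudorandom case. But the structured case as you state it does not work.

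The claim that ``$r(\phi)<r_1$ implies $\phi$ factors through a map of codomain dimension at most $r_2$, so $|\im(\phi)|\leq p^{r_2}$'' is false for any rank notion that makes the pseudorandom side go through. Take $p$ odd, $n\geq k+1$, and $\phi(y)=\big(y_1y_2,\,y_1y_3,\,\dots,\,y_1y_{k+1}\big)$. Then $\phi$ vanishes whenever $y_1=0$, so $\Pr_y[\phi(y)=0]\geq 1/p$ and the analytic rank of $\phi$ (in the paper's sense) is at most~$1$, independently of~$k$. Yet $\im(\phi)=\F_p^k$ (set $y_1=1$). So a rank threshold independent of $k$ cannot force a small image, and your structured bound $p^{r_2-k}$ collapses. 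Correspondingly, your pseudorandom target $|\Pr_Z[\phi(y+Z)=x]-p^{-k}|\leq \eps\,p^{-k}$ is also too strong: getting error multiplicatively close to $p^{-k}$ would require rank growing with~$k$, which is exactly what you cannot assume in the structured case.

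The paper resolves this by \emph{induction on the degree}, not by bounding the image. In the low-rank case one shows (via Kaufman--Lovett applied to the single polynomial $P(y,v)=\langle v,\phi(y)-\psi(y)\rangle$, which has large bias precisely because $\arank_d(\phi)$ is small) that the indicator $\mathbf{1}[\phi(y)=x]$ can be expanded as a bounded sum over events of the form $\{(\gamma_\alpha+\psi)(y)=x\}$, where each $\gamma_\alpha+\psi$ has degree at most $d-1$. One then invokes the theorem at degree $d-1$ and absorbs the bounded sum into~$\eps$. In the high-rank case the paper only proves $\Pr_Z[\phi(y+Z)=w]\leq\eps$ (not $\approx p^{-k}$), which suffices after averaging over~$x$. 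The example above shows why this inductive descent in degree is essential and cannot be replaced by an image-size argument.
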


Studying the distribution of polynomial maps in many variables over a finite field falls within the purview of additive combinatorics.
In the ``unbiased'' situation where~$Z$ is uniformly distributed there are powerful tools from higher-order Fourier analysis that can be used to study the distribution of~$\phi(Z)$.
In particular, Green and Tao~\cite{GreenT:2009} proved that if~$\phi$ is ``regular'' (random-like), then~$\phi(Z)$ is approximately uniformly distributed over~$\F_p^k$.
This implies that the probability of the event $\{\phi(E(x) + Z) = x\}$ considered is small for every~$x$.
A ``regularity-type'' lemma proved in~\cite{GreenT:2009} shows that one can ``force'' $\phi$ to be regular by restricting it to a partition defined by sufficiently many polynomial equations of degree less than the degree of~$\phi$.
However, these techniques cause the size of the polynomial map $\phi$ considered to blow up considerably, and are only effective if~$k$ is an extremely slowly growing function of~$n$.

In order to deal with this issue, and to adapt these results to the case where~$Z$ is no longer uniform but biased, we employ a dichotomy often used in additive combinatorics that studies the ``pseudorandom'' case of regular maps separately from the ``structured'' case of maps that carry a certain algebraic structure.
This is done by defining and studying a new notion of rank for (high-dimensional) polynomial maps, which we call the \emph{analytic rank},\footnote{A very similar notion of rank was defined for multilinear forms by Gowers and Wolf~\cite{GowersW:2011}, who coined the term analytic rank. We use the same name to highlight the similarity between our two notions, which are relevant for distinct types of mathematical objects.}
and which measures how equidistributed the values taken by the considered map are.

In the pseudorandom case, a key tool we use is a new random restriction result for high-rank polynomial maps proved in a companion paper~\cite{BrietCS:2022}.
We use this to show that the distribution of values taken by a high-rank polynomial map will be close to uniform even under a biased input distribution.
This implies that the event considered in the theorem has very low probability for any fixed~$x$, in which case we can conclude by averaging.

In the structured case we deal instead with polynomial maps of low rank, whose values are in a sense poorly distributed.
Results from higher-order Fourier analysis then imply that they can be determined by ``few'' lower-degree polynomial maps (plus a few extra polynomials);
by a simple Fourier-analytic argument we can reduce the analysis of a low-rank polynomial map to those lower-degree maps which specify it, making it amenable to an inductive argument.

\subsection{Building the quantum circuit}
The quantum circuit of Theorem~\ref{thm:qcircuit} is inspired by a distributed version of the Bernstein-Vazirani algorithm~\cite{BV:1997}.
Given a corrupted Hadamard codeword~$H(x)$, this single-query quantum algorithm returns~$x$ with probability~$\Omega(\eps^2)$.
The distributed version describes an entangled strategy for a particular non-local game~\cite{CHTW:2004} consisting of~$n$ players who, when given unique coordinates of~$H(x)$, must each return an element of~$\F_2^k$.
They win if and only if the sum of their answers equals~$x$.
It turns out that by sharing an $n$-partite GHZ state of local dimension~$2^k$, they can simulate the Bernstein-Vazirani algorithm and achieve the same success probability.
We then turn this entangled strategy into a quantum circuit that only uses single and two-qubit gates and classical parity gates. 
For this we use two constant-depth sub-routines, one for preparing GHZ states and another for the quantum \emph{fan-out gate}~\cite{PhamSvore:2013}, which implements the map $\ket{x}\ket{y_1}\hdots\ket{y_n}\mapsto\ket{x}\ket{y_1\oplus x}\hdots\ket{y_n\oplus x}$.

To generate the GHZ state, we use a poor man's cat state~\cite{WattsKothariSchaefferTal:2019}, which is a GHZ state with some of its qubits flipped. We correct this poor man's cat state to a GHZ state by flipping qubits based on parity computations. The input for these parity computations follows from the procedure that generates the poor man's cat state. 

To implement the quantum fan-out gate, we use ideas from distributed quantum computing. These ideas use GHZ states and classical parity gates together with single and two-qubit gates. With the quantum fan-out gate, we also obtain the quantum parity  gate, by conjugating the quantum fan-out gate with Hadamard gates. 

Part of the circuit is applying phase-flips, conditional on the bits of the corrupted codeword. To do this, we use quantum fan-out gates in a circuit, exponential in size in $k$ to correctly apply the phase-flips~\cite{TakahashiTani:2013}. 

The depth of the list-decoding circuit is constant, whereas the circuit size is $O(n^2\log n)$. We also show how to reduce this complexity to $O(n\log n\log\log n)$, while increasing the depth by only a small constant number. We do this by preparing a state on $\lceil\log(k+1)\rceil$ qubits. Evaluating an OR on this newly prepared state yields the same result as evaluating an OR on the original $k$ qubits~\cite{HoyerSpalek:2005}. Applying the same exponential size circuit as before on this newly prepared state indeed gives the reduced circuit size.

\section{Warm-up: The linear case and a non-local game}

This section is meant to give some intuition for the proofs of Theorem~\ref{thm:biased_equi} and Theorem~\ref{thm:qcircuit}, as well as provide the first steps in those proofs.

\subsection{Impossibility of decoding for linear maps} \label{sec:cstrategy}

To motivate our later arguments, here we present a proof of the first nontrivial case of Theorem \ref{thm:biased_equi}, namely that of maps~$\phi: \F_p^n \to \F_p^k$ of degree~$1$.
In this case, there is a matrix~$U\in \F_p^{k\times n}$ and a vector~$v\in \F_p^k$ such that
\beqn
\phi(y) = Uy + v \quad \text{for all } y\in \F_p^n.
\eeqn

Let~$x$ be a uniformly distributed random variable over~$\F_p^k$ and~$Z$ be an~$\mathcal N_\rho(0)$-distributed random variable over~$\F_p^n$.
Our goal is then to bound the probability of the event
\beq\label{eq:baseeq}
U(E(x) + Z) + v = x.
\eeq

We distinguish two cases based on the rank of~$U$.
Let~$r \in [k]$ be an integer to be set later.
If~$U$ has rank at most~$r$, then its image~$\im(U)$ is a subspace of size at most~$p^r$.
If \eqref{eq:baseeq} holds, then~$x$ is contained in the coset~$v + \im(U)$ of this subspace, which (for~$x$ uniform over~$\F_p^k$) happens with probability at most~$p^r/p^k$.
Hence, \eqref{eq:baseeq} holds with probability at most~$p^{-(k-r)}$ in this case.

For the ``pseudorandom case'' of high-rank matrices, we make the following simple but important observation:
one can sample~$Z \sim \mathcal N_\rho(0)$ by first sampling the set~$I \subseteq [n]$ of ``corrupted coordinates'', then sampling the ``noise'' $y$ uniformly at random from~$\F_p^I$ and setting\footnote{Given $x\in \F_p^n$ and $I\subseteq [n]$, we denote by $x_{|I}\in \F^I$ the restriction of $x$ to the coordinates indexed by $I$.}
$Z_{|I} = y$, $Z_{|[n]\setminus I} = 0$.
Each index~$i\in [n]$ has probability~$1-\rho$ of belonging to the random set~$I$, with these events being mutually independent;
we denote this sampling scheme by~$I\sim [n]_{1-\rho}$.

If we denote by~$U_I\in \F_p^{k\times I}$ the restriction of~$U$ to the columns indexed by~$I\subseteq [n]$, it follows that the random variable~$UZ$ has the same distribution as the random variable~$U_I y$, where $I\sim [n]_{1-\rho}$ and $y$ is uniformly distributed over $\F_p^I$.
Thus, for any given~$x\in \F_p^k$, we have
\beqn
\Pr_{Z\sim \mathcal{N}_\rho(0)}\big[ U(E(x)+Z) + v = x \big]
=
\Exp_{I\sim [n]_{1-\rho}} \Pr_{y\in F^I}\big[ U_I y = x - UE(x) - v \big].
\eeqn
Now, if~$I \subseteq [n]$ is fixed and~$y$ is uniformly distributed over~$\F_p^I$, then the random variable~$U_I y$ is uniformly distributed over~$\im(U_I)$;
hence
\beqn
\max_{w\in \F_p^k} \Pr_{y\in F^I}\big[ U_I y = w \big] =  \frac{1}{|\im(U_I)|} =  \frac{1}{p^{\rank(U_I)}}.
\eeqn
Taking the expectation over~$I\sim [n]_{1-\rho}$ and~$x\in \F_p^k$, we conclude that event \eqref{eq:baseeq} holds with probability at most~$\Exp_{I\sim [n]_{1-\rho}} p^{-\rank(U_I)}$.

Suppose now that~$U$ has rank at least~$r$, and let~$J\subseteq [n]$ be a set of~$r$ linearly independent columns of~$U$.
By the Chernoff bound (see \eg~\cite{HagerupRub:1990}), we have that
\beqn
\Pr_{I\sim[n]_{1-\rho}}\bigg[|I\cap J| \leq  \frac{(1-\rho) r}{2}\bigg] \leq e^{-(1-\rho) r/8}.
\eeqn
Thus~$U_I$ will contain more than~$(1-\rho)r/2$ linearly independent columns with probability at least~$1 - e^{-(1-\rho) r/8}$;
whenever this happens we have~$\rank(U_I) \geq (1-\rho)r/2$.
It follows that
\begin{align*}
    \Pr_{x\in \F_p^k, Z\sim \mathcal N_\rho(0)}\big[ U(E(x) + Z) + v = x \big]
    &\leq \Exp_{I\sim [n]_{1-\rho}} p^{-\rank(U_I)} \\
    &\leq e^{-(1-\rho) r/8} + p^{-(1-\rho) r/2}
\end{align*}
in this ``high-rank'' case.

Setting~$r = k/2$ (say) implies that in both cases the probability that event \eqref{eq:baseeq} holds decays exponentially in~$k$, which concludes the analysis.

\subsection{Quantum decoding in a non-local game}
\label{sec:qstrategy}
Our quantum algorithm is inspired by the analysis of a particular non-local game.
In a non-local game, a referee randomly sends questions to a set of players, according to a probability distribution known to the players in advance.
Then, without communicating with each other, the players individually answer the referee.
Finally, the referee determines whether the players win or lose based solely on the questions and answers.
The rule used by the referee is known to the players in advance as well.
With a (deterministic) classical strategy, the players decide before the game starts what to answer to each possible question.
With an entangled strategy, the players base their answers on the outcomes of local measurements of their respective parts of a shared entangled state.
We refer to~\cite{CHTW:2004} for further background on non-local games.

Let $H:\F_2^k\to \F_2^n$ be the Hadamard code, where $n = 2^k$ and let $\eps \in (0,1/2)$ be a constant (independent of~$n$).
We identify the codewords~$H(x)$ with functions $\F_2^k\to\F_2$ given by~$H(x)(y) = \langle x,y\rangle$.
We consider the following non-local game, which we shall refer to as the \emph{Hadamard game}.
There are~$n$ players, each labeled uniquely with an element in~$\F_2^k$.
The referee picks a uniformly chosen message~$x\in \F_2^k$ and randomly corrupts the codeword~$H(x)$ using the binary symmetric channel with error rate $1/2 - \eps$, resulting in a function~$c:\F_2^k\to \F_2$.
He then sends player~$y$ the value~$c(y)$.
The players each return a string in~$\F_2^k$ and they win the game if the sum of their answers equals~$x$.

Here we show that entangled players can win the Hadamard game with probability~$\Omega(1)$.
The corresponding strategy is inspired by the famous Bernstein-Vazirani algorithm~\cite{BV:1997}. 
The strategy is based on an $n$-partite GHZ state of local dimension~$2^k$, shared by the~$n$ players:
\beqn
\frac{1}{\sqrt{n}}\sum_{y\in \F_2^k} \ket{y}\otimes\hdots\otimes\ket{y}.
\eeqn

Upon receiving their input~$c(y)$, player~$y$ applies a conditional phase flip on their part of the shared state:
\beq\label{eq:condphase}
\ket{z}\mapsto\begin{cases}(-1)^{c(y)}\ket{z} & \text{if } z=y \\  \ket{z} & \text{else}\end{cases}.
\eeq
Once all players have done this, they share the state
\begin{equation*}
\frac{1}{\sqrt{n}}\sum_y (-1)^{c(y)}\ket{y}\otimes\hdots\otimes\ket{y}. 
\end{equation*}

Each player then applies a $k$-qubit Hadamard gate to their local register and measures in the computational basis. 
The measurement results are then returned by each player. 
The state before the measurements is:
\begin{equation*}
n^{-(n+1)/2}\sum_{y\in \F_2^k} \sum_{b_1,\hdots,b_n\in\F^k_2}(-1)^{c(y)}(-1)^{\langle y,b_1+\cdots +b_n\rangle}\ket{b_1}\otimes\hdots\otimes\ket{b_n}. 
\end{equation*}

The probability that the measurement results sum to a string $z\in\F^k_2$ is therefore given by
\begin{align*}
\Pr\bigg[\sum_{i=1}^nb_i= z\bigg] & = \frac{1}{n^{(n+1)}}\sum_{b_1+\cdots+b_n=z}\bigg| \sum_{y\in\F_2^k} (-1)^{c(y)+\langle y,z\rangle}\bigg|^2 \\
& = \bigg| 1 - 2\frac{d\big(c, H(z)\big)}{n}\bigg|^2.
\end{align*}
It follows from the Chernoff bound~\cite{HagerupRub:1990} that for fixed~$x\in \F_2^k$ and the random~$c$ obtained by corrupting the codeword~$H(x)$,
\beqn
\Pr\bigg[\frac{d\big(c, H(x)\big)}{n} \geq \frac{1-\eps}{2}\bigg] \leq \exp(-C\eps^2 n).
\eeqn
Hence, by the union bound, for fixed $\eps \in (0,1/2)$, the players win with probability at least $C\eps^2$, where the probability is taken over the message~$x$, the noise corrupting the codeword~$H(x)$ to~$c$ and the measurements done by the players.

Note that this strategy in fact succeeds with probability~$C\eps^2$ for \emph{every}~$x$ and whenever at most any $(1/2 - \eps)$-fraction of the coordinates of~$H(x)$ are flipped.

\section{Classical hardness of list decoding}
\label{sec:classical}

In this section we will prove Theorem~\ref{thm:biased_equi}, which -- as explained before -- implies that NC$^0[\oplus]$ circuits are unable to perform list decoding, no matter which specific code is considered
(see Theorem~\ref{thm:nc0plus} for a formal statement).
We will do so by following a similar strategy as we used for maps of degree~$1$ in Section~\ref{sec:cstrategy}, dividing the analysis into the ``pseudorandom'' case of high-rank polynomial maps and the ``structured'' case of low-rank polynomial maps.

\medskip

There is a well-studied notion of rank for polynomials~$P\in \F_p[x_1,\dots,x_n]$, first introduced by Green and Tao~\cite{GreenT:2009}, which is defined (roughly speaking) as the smallest number of lower-degree polynomials needed to compute~$P$.
A different notion of rank, called the analytic rank, was later introduced by Gowers and Wolf~\cite{GowersW:2011} when studying linear systems of equations over~$\F_p^n$.
It is related to the \emph{bias} of the polynomial~$P$, or more specifically to the bias of the symmetric~$\deg(P)$-multilinear form associated to~$P$.
The bias of a function $f: \F_p^n\to \F_p$ is an analytic measure of how well-equidistributed the values of~$f$ are when evaluated on a uniformly random input;
formally,
\beq\label{def:bias}
\bias(f) = |\Exp_{x\in \F_p^n} \omega^{f(x)}|
\eeq
where we write~$\omega = e^{2i\pi /p}$ for a primitive~$p$-th root of unity.

When dealing with a polynomial~$P$ of some bounded degree~$d$, having non-negligible bias implies that it has a significant amount of internal structure.
Such a result was first proven by Green and Tao~\cite{GreenT:2009} in the case of polynomials whose degree~$d$ is smaller than the characteristic~$p$ of the field considered, and motivated the introduction of both their notion of rank and Gowers and Wolf's notion of analytic rank.
We will need a similar result, proven by Kaufman and Lovett~\cite{KaufmanLovett:2008}, which generalizes this theorem to characteristics~$p \leq d$ and also gives more precise information on the structure of the polynomial.

For a vector~$h\in \F_p^n$ and a polynomial~$P\in \F_p[x_1,\dots,x_n]$, the derivative of~$P$ in direction~$h$ is defined by
\beqn
\Delta_h P(x) = P(x+h) - P(x).
\eeqn
Note that~$\Delta_h P$ is also a polynomial on~$\F_p^n$, and (as with usual derivatives in real analysis) its degree is strictly smaller than~$\deg(P)$.
The derivatives of a polynomial map~$\phi:\F_p^n\to \F_p^k$ are defined analogously, and also satisfy~$\deg(\Delta_h \phi) < \deg(\phi)$.

The following result of Kaufman and Lovett shows that polynomials with large bias must be highly structured:

\begin{theorem}[Bias implies low rank]\label{thm:KL}
For every~$d\in \N$ and~$\eps>0$, there is an~$r = r(p,d,\eps)\in \N$ such that the following holds.
If~$P\in \F_p[x_1,\dots,x_n]$ is a polynomial of degree at most~$d$ with~$\bias(P) \geq \eps$, then there exist~$h_1,\dots,h_r\in \F_p^n$ and a map~$\Gamma:\F_p^r\to\F_p$ such that
\beqn
P(x) \equiv \Gamma\big(\Delta_{h_1}P(x),\dots, \Delta_{h_r}P(x)\big).
\eeqn
\end{theorem}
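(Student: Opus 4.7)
The plan is to proceed by induction on the degree $d$. The base case $d=1$ is immediate: a nonconstant affine polynomial over $\F_p$ is uniformly distributed and so has bias zero, so $\bias(P) \geq \eps > 0$ forces $P$ to be constant, and one may take $r = 0$ with $\Gamma$ equal to that constant.

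For the inductive step, setting $b(h) := \Exp_x\, \omega^{\Delta_h P(x)}$, the central identity is
\beqn
\bias(P)^2 = \Exp_{x,h}\, \omega^{\Delta_h P(x)} = \Exp_h\, b(h).
\eeqn
Since $|b(h)| \leq 1$, a Markov-type argument produces a set $H \subseteq \F_p^n$ of density at least $\eps^2/2$ on which $\bias(\Delta_h P) \geq \eps^2/2$. Each $\Delta_h P$ has degree at most $d-1$, so the inductive hypothesis supplies, for every $h \in H$, a tuple of directions $(h^{(h)}_1, \dots, h^{(h)}_{r'})$ with $r' = r(p, d-1, \eps^2/2)$ and a function $\Gamma_h$ expressing $\Delta_h P = \Gamma_h\bigl(\Delta_{h^{(h)}_1}\Delta_h P, \dots, \Delta_{h^{(h)}_{r'}}\Delta_h P\bigr)$.

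The main obstacle is passing from this $h$-dependent family to a single universal tuple $(h_1, \dots, h_r)$ witnessing the conclusion for $P$ itself. I would address this by reformulating the goal in terms of fibers: the desired conclusion $P \equiv \Gamma(\Delta_{h_1}P, \dots, \Delta_{h_r}P)$ is equivalent to the implication that $\Delta_{h_i}\Delta_y P(x) = 0$ for all $i$ implies $\Delta_y P(x) = 0$. I would then run a greedy density-increment procedure, at each step adjoining a direction that significantly reduces an $L^2$ potential measuring the mass of $P$ not yet captured by the current derivatives, using the abundance of biased derivatives above together with the inductive structure. Each round decreases this potential by an amount depending only on $p, d, \eps$, so the process terminates in $r = r(p, d, \eps)$ steps. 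A possibly cleaner alternative is to iterate Cauchy-Schwarz all the way to the symmetric $d$-multilinear form $T_P(h_1, \dots, h_d) := \Delta_{h_1}\cdots\Delta_{h_d}P$, observe that $\bias(T_P) \geq \bias(P)^{2^d} \geq \eps^{2^d}$, apply a partition-rank bound for biased multilinear forms, and lift the resulting low-rank decomposition of $T_P$ back to a representation of $P$ in terms of $O_{p,d,\eps}(1)$ polynomials of degree at most $d-1$, finally reverting these to derivatives via a further application of the inductive hypothesis.
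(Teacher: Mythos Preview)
The paper does not supply its own proof of this statement: Theorem~\ref{thm:KL} is quoted as a result of Kaufman and Lovett~\cite{KaufmanLovett:2008}, with the explicit remark that it extends the earlier Green--Tao theorem from the regime $d<p$ to arbitrary characteristic. So there is no proof in the paper to compare against, but your proposal can still be assessed on its own merits, and it has genuine gaps.

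Your ``possibly cleaner alternative'' is essentially the Green--Tao strategy: pass to the symmetric $d$-multilinear form $T_P$, bound its rank, and lift back to~$P$. This route works only when $d<p$, because the lifting step --- recovering $P$ from $T_P$ via the Taylor expansion $P(x)=\frac{1}{d!}T_P(x,\dots,x)+(\text{lower order})$ --- requires dividing by~$d!$, which vanishes in~$\F_p$ when $p\le d$. The paper states explicitly that Kaufman--Lovett is needed precisely to cover the low-characteristic case (see the discussion just before Theorem~\ref{thm:KL}, and compare with Section~\ref{sec:high_char} where the integration formula is used only under the hypothesis $p>d$); your alternative simply does not reach that case. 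Moreover, the ``partition-rank bound for biased multilinear forms'' you invoke is itself a deep result of comparable strength, so even in high characteristic this is not a reduction to something easier.

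Your primary route --- induction on~$d$ plus a greedy density increment --- has the right opening move: Cauchy--Schwarz yields many~$h$ with $\bias(\Delta_h P)\ge\eps^2/2$, and your fiber reformulation of the target statement is correct. But the decisive step is left as an assertion. You claim there is an $L^2$ potential that drops by a fixed amount, depending only on $p,d,\eps$, whenever a new direction is adjoined; you do not say what this potential is, nor why the abundance of biased $\Delta_h P$ (each with its own, $h$-dependent tuple of second-derivative directions) guarantees the existence of a \emph{single} direction whose adjunction produces that drop. Uniformizing the $h$-dependent data into one bounded list is exactly where the difficulty lies, and the actual Kaufman--Lovett argument handles it by a considerably more delicate iteration than a one-line potential decrease. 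As written, the proposal is a plausible outline with the hard step labelled ``run a greedy procedure'' rather than carried out.
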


\subsection{The analytic rank of polynomial maps}

Inspired by Gowers and Wolf's notion of analytic rank for multilinear forms and polynomials~\cite{GowersW:2011}, we introduce a new notion of rank for higher-dimensional polynomial maps which we also call the \emph{analytic rank}.
This notion will be crucial in our proof of Theorem~\ref{thm:biased_equi};
intuitively, it measures how well a given polynomial map~$\phi$ can be approximated by lower-degree maps.

For integers $d, n, k \geq 1$, we denote by $\Pol_{\leq d}(\F_p^n, \F_p^k)$ the space of all polynomial maps $\phi: \F_p^n \to \F_p^k$ of degree at most~$d$.

\begin{definition}[Analytic rank]\label{def:phirank}
Given a polynomial map $\phi\in \Pol_{\leq d}(\F_p^n, \F_p^k)$, we define its analytic rank $\arank_d(\phi)$ by
\beqn
\arank_d(\phi) = -\log_p \bigg( \max_{\psi:\F_p^n\to\F_p^k,\, \deg(\psi)<d} \Pr_{x\in \F_p^n} \big[\phi(x) = \psi(x)\big] \bigg).
\eeqn
\end{definition}

Note that, for affine-linear maps $\phi \in \Pol_{\leq 1}(\F_p^n, \F_p^k)$, this definition coincides with the usual notion of rank for the matrix~$U \in \F_p^{k\times n}$ encoding its linear part.
Indeed, write~$\phi(x) = Ux + v$ for some~$v \in \F_p^k$.
Since~$Ux$ is uniformly distributed over~$\im(U) \simeq \F_p^{\rank(U)}$ when~$x$ is uniformly distributed over~$\F_p^k$, we have that
\beqn
\Pr_{x\in \F_p^n} \big[Ux+v = w\big] =
\begin{cases}
 p^{-\rank(U)} &\text{if } w-v \in \im(U), \\
 0 &\text{if } w-v \notin \im(U).
\end{cases}
\eeqn
This might help explain the reason for the~$-\log_p$ in the definition of analytic rank, as well as the need to maximize the probability of agreement over all lower-degree maps.

Another useful way of viewing the analytic rank of a polynomial map~$\phi$ is as a measure of how well-equidistributed its values are in~$\F_p^k$, up to lower-degree perturbations.
Indeed, we can equivalently write
\beqn
\arank_d(\phi) = \min_{\psi:\F_p^n\to\F_p^k,\, \deg(\psi)<d}-\log_p \big(\Exp_{v\in \F_p^k, x\in \F_p^n}\omega^{\langle v,\, \phi(x) - \psi(x)\rangle}\big).
\eeqn
The expectation inside the logarithm above is analogous to the notion of bias \eqref{def:bias} given before, and can be seen as an analytic measure of how close to uniformly distributed over~$\F_p^k$ the values taken by~$\phi-\psi$ are.

It is clear from the definition that the function $\arank_d$ is non-negative (since probabilities are bounded by~$1$), and that $\arank_d(\phi) = 0$ if and only if $\deg(\phi) \leq d-1$.
It also satisfies several useful properties in common with the rank of matrices;
in order to state them we will need some notation for considering coordinate restrictions:

\begin{definition}[Restriction]
For a polynomial map~$\phi:\F_p^n\to\F_p^k$ and subset~$I\subseteq [n]$, we define the restriction~$\phi_{|I}:\F_p^I\to\F_p^k$ to be the map given by
$\phi_{|I}(y) = \phi(\bar y)$,
where~$\bar y\in \F_p^n$ agrees with~$y$ on the coordinates in~$I$ and is zero elsewhere.
\end{definition}

The properties of analytic rank which will be important to us are summarized in the next lemma.
Those rank functions for polynomial maps which satisfy all these properties are called \emph{natural rank functions} in~\cite{BrietCS:2022}.

\begin{lemma}[Properties of analytic rank] \label{lem:natural}
For all integers $d, n, k \geq 1$, the analytic rank function $\arank_d$ satisfies:
\begin{enumerate}
    \item Symmetry: \\
    $\arank_d(\phi) = \arank_d(-\phi)$ for all~$\phi\in \Pol_{\leq d}(\F^n, \F^k)$.
    \item Sub-additivity: \\
    $\arank_d(\phi + \gamma) \leq \arank_d(\phi) + \arank_d(\gamma)$ for all~$\phi, \gamma\in \Pol_{\leq d}(\F^n, \F^k)$.
    \item Monotonicity under restrictions: \\
    $\arank_d(\phi_{|I}) \leq \arank_d(\phi)$ for all~$\phi\in \Pol_{\leq d}(\F^n, \F^k)$ and all sets~$I \subseteq [n]$.
    \item Restriction Lipschitz property: \\
    $\arank_d(\phi_{|I \cup J}) \leq \arank_d(\phi_{|I}) + |J|$ for all~$\phi\in \Pol_{\leq d}(\F^n, \F^k)$ and all sets~$I, J \subseteq [n]$.
\end{enumerate}
\end{lemma}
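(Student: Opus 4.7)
The plan is to verify each of the four properties in turn, drawing on the definition of $\arank_d$ together with its Fourier-analytic reformulation given in the excerpt.

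Property~1 (Symmetry) is immediate: the set of degree-$<d$ polynomial maps is closed under negation, and $\Pr_x[\phi(x) = \psi(x)] = \Pr_x[-\phi(x) = -\psi(x)]$ for any such $\psi$, so the two maxima (over $\psi$ and over $-\psi$) coincide.

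For Property~4 (Restriction Lipschitz), I would proceed by a direct construction. Given an optimal degree-$<d$ approximator $\psi':\F_p^I\to\F_p^k$ for $\phi_{|I}$, extend it trivially to $\tilde\psi:\F_p^{I\cup J}\to\F_p^k$ by $\tilde\psi(y,w) = \psi'(y)$, which still has degree $<d$. For any $y$ with $\phi_{|I}(y) = \psi'(y)$, the identity $\phi_{|I\cup J}(y,0) = \phi_{|I}(y) = \psi'(y) = \tilde\psi(y,0)$ shows that $w=0$ lies in the zero set of the polynomial $w \mapsto \phi_{|I\cup J}(y,w) - \tilde\psi(y,w)$, so $\Pr_w[\phi_{|I\cup J}(y,w) = \tilde\psi(y,w)] \geq p^{-|J|}$. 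Averaging over $y$ gives $p^{-\arank_d(\phi_{|I\cup J})} \geq p^{-|J| - \arank_d(\phi_{|I})}$, which is the desired inequality after taking~$-\log_p$.

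For Property~3 (Monotonicity), the plan is to extract a good approximator for $\phi_{|I}$ from an optimal $\psi$ for $\phi$. Setting $h = \phi - \psi$ and expanding it as $h(y,z) = \sum_\alpha c_\alpha(y)\,z^\alpha$ in the $[n]\setminus I$ coordinates, observe that $c_0(y) = \phi_{|I}(y) - \psi(y,0)$. The total agreement probability $p^{-\arank_d(\phi)} = \Pr_{y,z}[h(y,z) = 0]$ decomposes as a sum over~$y$: slices where $h(y,\cdot) \equiv 0$ contribute~$1$ (and automatically satisfy $c_0(y) = 0$), while the other slices contribute at most $d/p$ by a Schwartz--Zippel estimate. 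A careful accounting shows that the set of $y$'s with $h(y,\cdot) \equiv 0$ has density at least $p^{-\arank_d(\phi)}$, making $\psi'(y) := \psi(y,0)$ a sufficiently good approximator for $\phi_{|I}$ and yielding $\arank_d(\phi_{|I}) \leq \arank_d(\phi)$; some additional care is needed in the regime $p \leq d$, where a more refined algebraic argument replaces Schwartz--Zippel.

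Property~2 (Sub-additivity) is the main technical obstacle. A naive attempt to use $\psi_\phi + \psi_\gamma$ (the sum of optimal approximators) as an approximator for $\phi + \gamma$ does not directly yield the bound, since the residuals $\phi - \psi_\phi$ and $\gamma - \psi_\gamma$ are correlated through the shared input. Instead, the plan is to use the Fourier reformulation $p^{-\arank_d(\phi)} = \max_\psi \Exp_{v, x}\omega^{\langle v, \phi(x) - \psi(x)\rangle}$ and apply a Cauchy--Schwarz-type inequality in the spirit of the Gowers--Wolf theory of analytic rank for multilinear forms. The principal challenge will be to avoid the factor of $1/2$ in the exponent typically introduced by a naive Cauchy--Schwarz, which would only give weak sub-additivity; this should be resolvable by exploiting the freedom in the maximization over~$\psi$ and, if needed, by combining the Fourier approach with a random restriction step that averages out the correlations between the two residuals.
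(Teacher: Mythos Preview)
Your treatments of Properties~1 and~4 are fine and match the paper's. The gaps are in Properties~2 and~3.

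For Property~3, your plan fixes~$y$ and studies the slice $z\mapsto h(y,z)$ via Schwartz--Zippel. This is the wrong direction to slice, and the claimed ``careful accounting'' does not go through: even when $p>d$, from $\Pr_{y,z}[h(y,z)=0]=p^{-\arank_d(\phi)}$ and a bound of $d/p$ on nonzero slices you only get $\Pr_y[h(y,\cdot)\equiv 0]\geq p^{-\arank_d(\phi)}-d/p$, which can be negative; and for $p\leq d$ you have nothing. The paper instead fixes the \emph{extra} coordinates. Writing $\phi(y,z)=\phi_{|I}(y)+\sum_{|\alpha|\geq 1}\phi_\alpha(y)z^\alpha$ with $\deg\phi_\alpha\leq d-|\alpha|<d$, one sees that for each fixed~$z$ the equation $\phi(y,z)=\psi(y,z)$ rewrites as $\phi_{|I}(y)=\zeta_z(y)$ for some $\zeta_z$ of degree~$<d$. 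Hence $\Pr_y[\phi(y,z)=\psi(y,z)]\leq p^{-\arank_d(\phi_{|I})}$ for every~$z$, and averaging over~$z$ gives the result with no dependence on~$p$ versus~$d$. (The paper does this one coordinate at a time, but the idea is the same.)

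For Property~2, your plan is not a proof: Cauchy--Schwarz on the Fourier expression will indeed cost a factor~$2$ in the exponent, and ``random restriction to average out correlations'' is not a mechanism that recovers it. The paper's argument is short and purely combinatorial. Take optimal approximators $\psi,\chi$ for $\phi,\gamma$ and use \emph{independent} inputs:
\[
p^{-\arank_d(\phi)-\arank_d(\gamma)}=\Pr_{x,y}\big[\phi(x)=\psi(x)\ \text{and}\ \gamma(y)=\chi(y)\big].
\]
Substitute $y\mapsto x+y$ and use $\gamma(x+y)=\gamma(x)+\Delta_y\gamma(x)$; the joint event implies
\[
(\phi+\gamma)(x)=\psi(x)+\chi(x+y)-\Delta_y\gamma(x),
\]
and for each fixed~$y$ the right-hand side is a degree-$<d$ map in~$x$. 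Dropping the first condition and taking the worst~$y$ bounds the probability by $p^{-\arank_d(\phi+\gamma)}$, with no loss of constants. The key idea you are missing is this change of variables together with the fact that $\Delta_y\gamma$ drops a degree.
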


\begin{proof}
The first property is trivial.
To prove property (2), let~$\psi, \chi: \F_p^n\to \F_p^k$ be polynomial maps of degree at most~$d-1$ such that
\begin{align*}
    &\arank_d(\phi) = -\log_p \Pr_{x\in \F_p^n} \big[\phi(x) = \psi(x)\big], \\
    &\arank_d(\gamma) = -\log_p \Pr_{x\in \F_p^n} \big[\gamma(x) = \chi(x)\big].
\end{align*}
Then~$p^{-\arank_d(\phi) -\arank_d(\gamma)}$ can be expressed as
\begin{align*}
    \Pr_{x, y\in \F_p^n} &\big[\phi(x) = \psi(x) \And \gamma(y) = \chi(y)\big] \\
    &= \Pr_{x, y} \big[\phi(x) = \psi(x) \And \phi(x) + \gamma(x+y) = \psi(x) + \chi(x+y)\big],
\end{align*}
where we performed the change of variables~$(x, y)\mapsto (x, x+y)$.
Since
\beqn
\gamma(x+y) = \gamma(x) + \Delta_y \gamma(x),
\eeqn
this equals
\begin{align*}
\Pr_{x, y} \big[\phi(x) = \psi(x) &\And \phi(x) + \gamma(x) = \psi(x) + \chi(x+y) - \Delta_y \gamma(x)\big] \\
&\leq \Pr_{x, y} \big[\phi(x) + \gamma(x) = \psi(x) + \chi(x+y) - \Delta_y \gamma(x)\big].
\end{align*}

Note that, for any fixed~$y\in \F_p^n$, the function
\beqn
x \mapsto \psi(x) + \chi(x+y) - \Delta_y \gamma(x)
\eeqn
is a polynomial map of degree at most~$d-1$.
The last probability above is then bounded by
\begin{align*}
\max_y \Pr_x \big[\phi(x) + \gamma(x) = \psi(x) &+ \chi(x+y) - \Delta_y \gamma(x)\big] \\
&\leq \max_{\zeta: \F_p^n \to \F_p^k,\, \deg(\zeta) < d} \Pr_x \big[\phi(x) + \gamma(x) = \zeta(x)\big] \\
&= p^{-\arank_d(\phi + \gamma)}.
\end{align*}
Sub-additivity now follows by taking logarithms.

To prove property (3) it suffices to show that $\arank_d(\phi_{|[n]\setminus\{i\}})\leq \arank_d(\phi)$ for any $i\in[n]$, which can then be applied iteratively.
Assume for notational convenience that~$i=n$, and let $\psi:\F_p^n\to\F_p^k$ be a polynomial map of degree at most~$d-1$ which satisfies
\beqn
p^{-\arank_d(\phi)} = \Pr_{x\in \F_p^n}[\phi(x) = \psi(x)].
\eeqn
Factoring out the variable~$x_n$ allows us to write the probability on the right-hand side as
\begin{align*}
\Exp_{x_n\in \F_p} \Pr_{y\in \F_p^{n-1}}\big[\phi_{|[n-1]}(y) + \phi'(y,x_n)x_n = \psi_{|[n-1]}(y) + \psi'(y,x_n)x_n\big],
\end{align*}
where~$\phi'$ and~$\psi'$ are some polynomial maps of degree at most~$d-1$.
By the averaging principle, this is at most
\begin{align*}
\max_{x_n\in \F_p}\Pr_{y\in \F_p^{n-1}} &\big[\phi_{|[n-1]}(y) = \psi_{|[n-1]}(y) + \psi'(y,x_n)x_n - \phi'(y,x_n)x_n \big] \\
&\leq \max_{\zeta: \F_p^{n-1} \to \F_p^k,\, \deg(\zeta) < d} \Pr_{y\in \F_p^{n-1}} \big[\phi_{|[n-1]}(y) = \psi_{|[n-1]}(y) + \zeta(y)\big] \\
&= p^{-\arank_d(\phi_{|[n-1]})},
\end{align*}
showing that $\arank_d(\phi_{|[n-1]})\leq \arank_d(\phi)$ as wished.

Finally, for the Lipschitz property (4), let~$\psi: \F_p^I \to \F_p^k$ be a map with $\deg(\psi)<d$ maximizing the agreement probability $\Pr_{x\in \F_p^I} \big[\phi_I(x) = \psi(x)\big]$, and suppose without loss of generality that $J\cap I = \emptyset$.
Then
\begin{align*}
    p^{-\arank_d(\phi_{|I \cup J})}
    &\geq \Pr_{x\in \F_p^I,\, y\in \F_p^J} \big[\phi_{|I \cup J}(x,y) = \psi(x)\big] \\
    &\geq \Pr_{x\in \F_p^I,\, y\in \F_p^J} \big[\phi_{|I \cup J}(x,0) = \psi(x) \And y=0\big] \\
    &= p^{-|J|}\, \Pr_{x\in \F_p^I} \big[\phi_I(x) = \psi(x)\big] \\
    &= p^{-\arank_d(\phi_I) - |J|},
\end{align*}
and the restriction Lipschitz property follows.
\end{proof}

\subsection{Biased equidistribution of high-rank maps}

As in the degree-$1$ case considered in Section~\ref{sec:cstrategy}, we will need to
study the distribution of values~$\phi(Z)$ taken by a polynomial map~$\phi$ when the input is a $\rho$-biased random variable~$Z\sim \mathcal{N}_{\rho}(y)$.
This can be done by considering restrictions of~$\phi$ to random subsets of variables, which model the coordinates ``corrupted'' by the studied random process.

Motivated by this problem, the behavior of rank functions under random coordinate restrictions was studied in detail by the first and third authors~\cite{BrietCS:2022}.
In the nomenclature of that paper, Lemma~\ref{lem:natural} shows that the analytic rank is a natural rank function.
Applying \cite[Theorem~1.8]{BrietCS:2022} we then immediately obtain the following result, which shows that random restrictions of a high-rank polynomial map will also have high rank with high probability.
(Recall that~$I\sim [n]_{\sigma}$ denotes the random process of sampling a subset~$I\subseteq [n]$ where each~$i\in [n]$ belongs to~$I$ with probability~$\sigma$, all events being mutually independent.)

\begin{theorem}[Random restriction theorem]\label{thm:restriction}
For every~$d\in \N$ and~$\sigma,\eps\in (0,1]$, there exist~$\kappa = \kappa(d, \sigma)>0$ and~$R = R(d,\sigma,\eps)\in\N$ such that the following holds.
For every map~$\phi\in \Pol_{\leq d}(\F^n, \F^k)$ with~$\arank_d(\phi) \geq R$, we have that
\beqn
\Pr_{I\sim [n]_{\sigma}}\big[ \arank_d(\phi_{|I}) \geq \kappa\cdot \arank_d(\phi)\big] \geq 1 - \eps.
\eeqn
\end{theorem}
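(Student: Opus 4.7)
The plan is to derive this theorem as a black-box application of the abstract random restriction result for \emph{natural rank functions} proved in the companion paper \cite{BrietCS:2022}. By Lemma~\ref{lem:natural}, the function $\arank_d$ enjoys symmetry, sub-additivity, monotonicity under coordinate restrictions, and the restriction Lipschitz property; these are precisely the four axioms defining a natural rank function in \cite{BrietCS:2022}. Once this verification is in place, nothing further is required, and the constants $\kappa(d,\sigma)$ and $R(d,\sigma,\eps)$ can simply be inherited from the abstract statement.

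To make the proposal self-contained, I would sketch how the abstract theorem is proved for any natural rank function $\rho$ acting on $\Pol_{\leq d}(\F_p^n, \F_p^k)$. The argument splits cleanly into two ingredients: a concentration bound and a lower bound on the expected rank of a random restriction.

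\emph{Concentration.} View the random set $I\sim[n]_\sigma$ as a product random variable $(\mathbf{1}_{i\in I})_{i\in[n]}$. The Lipschitz property (item~(4) of Lemma~\ref{lem:natural}) immediately gives that flipping a single coordinate changes $\rho(\phi_{|I})$ by at most~$1$. McDiarmid's bounded-differences inequality therefore implies $\Pr[|\rho(\phi_{|I}) - \Exp\rho(\phi_{|I})| \geq t] \leq 2\exp(-2t^2/n)$; this is enough for what we want, provided we can show $\Exp \rho(\phi_{|I})$ is a constant (depending on $d,\sigma$) times $\rho(\phi)$, since deviations of order $\sqrt{n}$ will be negligible compared to $\rho(\phi) \geq R$ once $R$ is taken large enough.

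\emph{Expected rank.} This is the step I expect to be the main obstacle. The natural approach is a halving/iteration argument at density $\sigma = 1/2$, then bootstrapping to arbitrary $\sigma$. Take a uniformly random partition $[n] = I\sqcup J$. By sub-additivity and the fact that $\phi$ can be recovered from $\phi_{|I}$ and $\phi_{|J}$ together with an extra polynomial of controlled rank (via the Lipschitz and restriction axioms), one obtains an inequality of the form $\rho(\phi) \leq \rho(\phi_{|I}) + \rho(\phi_{|J}) + (\text{error})$. Symmetry of the distribution of $I$ and $J$ then yields $\Exp \rho(\phi_{|I}) \geq \tfrac{1}{2}\rho(\phi) - O(1)$. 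Iterating $t = \log_2(1/\sigma)$ times gives $\Exp\rho(\phi_{|I}) \geq \sigma \cdot \rho(\phi)/C - O(1)$ for some $C = C(d)$, so picking $\kappa(d,\sigma) = \sigma/(2C)$ makes the expected rank comfortably larger than $\kappa \cdot \rho(\phi)$ when $\rho(\phi) \geq R$.

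Combining the two ingredients: take $R$ large enough that $\kappa \rho(\phi)$ lies well below $\Exp \rho(\phi_{|I})$ minus the McDiarmid deviation at level $\eps$. The hard part, as indicated above, is justifying the decomposition step that produces the expected-rank bound; this is where the precise structure of a natural rank function is used, and is the main technical content of \cite[Theorem~1.8]{BrietCS:2022}. In the present paper I would simply cite that result and let Lemma~\ref{lem:natural} verify its hypotheses.
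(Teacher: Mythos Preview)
Your proposal is correct and matches the paper's approach exactly: the paper simply observes that Lemma~\ref{lem:natural} verifies the axioms of a natural rank function and then invokes \cite[Theorem~1.8]{BrietCS:2022} as a black box, just as you do. The heuristic sketch you add of the companion paper's argument is not present here (and the McDiarmid step as written would need care, since $n$ can be arbitrarily large relative to $\arank_d(\phi)$), but your actual proposed proof for this paper is precisely the one the authors give.
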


With the help of this theorem, it is easy to show that high-rank polynomial maps are approximately equidistributed even under biased inputs:

\begin{lemma}[Biased equidistribution lemma]\label{lem:equidistribution}
For every~$d\in \N$ and $\rho, \eps \in (0, 1)$ there exists a constant $R_0 = R_0(d, \rho, \eps)>0$ such that the following holds.
If $\phi\in \Pol_{\leq d}(\F^n, \F^k)$ satisfies~$\arank_d(\phi) \geq R_0$, then
\beqn
\Pr_{Z \sim \mathcal N_\rho(0)}\big[ \phi(y+Z) = w \big] \leq \eps \quad \text{for all $y\in \F_p^n, w\in \F_p^k$.}
\eeqn
\end{lemma}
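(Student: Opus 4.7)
The plan is to couple the $\rho$-biased noise with a random support, reduce the event $\{\phi(y+Z)=w\}$ to an equation involving a random restriction of $\phi$, and then apply the random restriction theorem. Concretely, I would sample $Z\sim\mathcal N_\rho(0)$ in two stages: first draw $I\subseteq[n]$ by independently including each coordinate with probability $1-\rho$ (so $I\sim[n]_{1-\rho}$), then sample $u\in\F_p^I$ uniformly at random and set $Z_{|I}=u$ and $Z_{|[n]\setminus I}=0$. Writing $\bar u\in\F_p^n$ for the zero-extension of $u$, this yields the identity
\[
\Pr_{Z}\big[\phi(y+Z)=w\big]=\Exp_{I\sim[n]_{1-\rho}}\Pr_{u\in\F_p^I}\big[\phi(y+\bar u)=w\big].
\]

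Next, I would observe that the auxiliary map $u\mapsto\phi(y+\bar u)$ differs from the restriction $\phi_{|I}(u)=\phi(\bar u)$ by a polynomial map of degree at most $d-1$: indeed
\[
\phi(y+\bar u)-\phi(\bar u)=(\Delta_y\phi)(\bar u),
\]
and $\Delta_y\phi$ has degree strictly less than~$d$ as a polynomial on~$\F_p^n$. Hence $\phi(y+\bar u)=w$ is equivalent to $\phi_{|I}(u)=\psi(u)$ with $\psi(u)=w-(\Delta_y\phi)(\bar u)$, a polynomial map of degree $<d$ on $\F_p^I$. Using $\psi$ as a candidate in the definition of analytic rank then gives
\[
\Pr_{u\in\F_p^I}\big[\phi(y+\bar u)=w\big]\leq p^{-\arank_d(\phi_{|I})}.
\]

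Finally, to control the right-hand side, I would apply the random restriction Theorem~\ref{thm:restriction} with noise parameter $\sigma=1-\rho$ and tolerance $\eps/2$, producing constants $\kappa=\kappa(d,1-\rho)>0$ and $R_1=R(d,1-\rho,\eps/2)$ such that whenever $\arank_d(\phi)\geq R_1$,
\[
\Pr_{I\sim[n]_{1-\rho}}\big[\arank_d(\phi_{|I})\geq\kappa\,\arank_d(\phi)\big]\geq 1-\eps/2.
\]
Splitting the expectation over $I$ according to this event and bounding the bad event trivially by~$1$ yields
\[
\Pr_Z\big[\phi(y+Z)=w\big]\leq\frac{\eps}{2}+p^{-\kappa\,\arank_d(\phi)},
\]
so the lemma follows by setting $R_0=\max\{R_1,\,\kappa^{-1}\log_p(2/\eps)\}$. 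The one mildly delicate point is the passage from $\phi(y+\bar u)$ back to $\phi_{|I}$ via a lower-degree correction, which exploits the fact that an input-shift perturbs a degree-$d$ map only by something of degree $<d$ and is therefore invisible to $\arank_d$; once this is handled, the remainder is a clean combination of the random restriction theorem with the degenerate (constant) candidate test, and I do not anticipate any serious obstacle.
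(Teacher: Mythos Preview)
Your proof is correct and follows essentially the same route as the paper: couple $\mathcal N_\rho(0)$ with a random support $I\sim[n]_{1-\rho}$, bound the inner probability by $p^{-\arank_d(\phi_{|I})}$, and finish with the random restriction theorem and the same choice of $R_0$. The only cosmetic difference is that the paper first reduces to the case $y=0$, $w=0$ by noting that $x\mapsto\phi(y+x)-w$ has the same analytic rank as~$\phi$, whereas you handle general $y,w$ directly via the lower-degree correction $(\Delta_y\phi)(\bar u)$; these are two phrasings of the same observation.
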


\begin{proof}
It suffices to prove the special case where both $y$ and $w$ are zero, that is
\beqn
\Pr_{Z \sim \mathcal N_\rho(0)}\big[ \phi(Z) = 0 \big] \leq \eps.
\eeqn
Indeed, for fixed~$y\in \F_p^n$ and~$w\in \F_p^k$, the map~$\tilde{\phi}: x \mapsto \phi(y+x)-w$ has the same degree and same analytic rank as~$\phi$, and satisfies~$\tilde{\phi}(x) = 0$ if and only if~$\phi(y+x)=w$.

We can sample~$Z \sim \mathcal N_\rho(0)$ by first sampling~$I \sim [n]_{1-\rho}$ (the ``corrupted coordinates''), then sampling~$z$ uniformly from~$\F_p^I$ (the ``noise'') and setting~$Z_{|I} = z$, $Z_{|[n]\setminus I} = 0$;
thus
\begin{align*}
\Pr_{Z \sim \mathcal N_\rho(0)}\big[\phi(Z) = 0\big]
&= \Exp_{I\sim [n]_{1-\rho}} \Pr_{z\in \F_p^I} \big[\phi_{|I}(z) = 0 \big] \\
&\leq \Exp_{I\sim [n]_{1-\rho}} p^{-\arank_d(\phi_{|I})}.
\end{align*}
Let~$R = R(d, 1-\rho, \eps/2)$ and~$\kappa = \kappa(d, 1-\rho)$ be the constants guaranteed by Theorem~\ref{thm:restriction}.
If~$\arank_d(\phi) \geq R$, from that result we obtain
\beqn
\Exp_{I\sim [n]_{1-\rho}} p^{-\arank_d(\phi_{|I})} \leq \eps/2 + p^{-\kappa\cdot \arank_d(\phi)}.
\eeqn
Taking\footnote{Note that this bound is non-increasing on the value of~$p$, so we can obtain a field-independent bound by considering the smallest case~$p=2$.}
$R_0 = \max\big\{ R,\, \log_p(2/\eps)/\kappa \big\}$ we conclude that
\beqn
\Pr_{Z \sim \mathcal N_\rho(0)}\big[\phi(Z) = 0\big] \leq \Exp_{I\sim [n]_{1-\rho}} p^{-\arank_d(\phi_{|I})} \leq \varepsilon
\eeqn
whenever~$\arank_d(\phi) \geq R_0$, as wished.
\end{proof}

\subsection{The proof of Theorem~\ref{thm:biased_equi}}

We are now ready to present the proof of Theorem~\ref{thm:biased_equi}, which proceeds by induction on the degree~$d$.
For degree-$1$ maps the result was already proven in the warm-up section,\footnote{It would also be possible to start the induction from the trivial base case~$d=0$ of constant maps, but we thought it more instructive to first present the argument for degree-$1$ maps in order to gain some intuition.}
so let~$d \geq 2$ and assume the result holds for maps of degree at most~$d-1$.

As was done in the base case, we will divide the argument into two parts, corresponding to whether the analytic rank of~$\phi$ is ``high'' (the pseudorandom case) or ``low'' (the structured case).
The pseudorandom case immediately follows from Lemma~\ref{lem:equidistribution}, the biased equidistribution lemma:
let~$R_0 = R_0(d, \rho, \eps)$ be the constant guaranteed by that lemma, and suppose that~$\arank_d(\phi) > R_0$.
Then for every~$x \in \F_p^k$ we have that
\beqn
\Pr_{Z \sim \mathcal N_\rho(0)}\big[ \phi \big(E(x)+Z\big) = x \big] \leq \eps,
\eeqn
and we conclude by averaging over all such~$x$.

Now suppose that~$\arank_d(\phi) \leq R_0$, and let~$\psi: \F_p^n \to \F_p^k$ be a map of degree at most~$d-1$ such that~$\Pr_{x\in \F_p^n} \big[\phi(x) = \psi(x)\big] \geq p^{-R_0}$.
Denote~$\tilde{\phi} = \phi-\psi$ for convenience, and let~$P\in \F_p[y_1,\dots,y_n,v_1,\dots,v_k]$ be the polynomial given by 
\beqn
P(y,v) = \langle v, \tilde{\phi}(y) \rangle.
\eeqn
This polynomial has non-negligible bias:
\beqn
\bias(P) = \Exp_{y\in \F_p^n} \Exp_{v\in \F_p^k} \omega^{\langle v, \tilde{\phi}(y) \rangle} = \Exp_{y\in \F_p^n} \mathbf{1} \big[\tilde{\phi}(y)=0\big] \geq p^{-R_0},
\eeqn
where $\omega = e^{2\pi i/p}$.
By Theorem~\ref{thm:KL}, there exist~$s = s(p, d, R_0)\in \N$, pairs $(h_1,w_1), \dots, (h_s,w_s)\in \F_p^n\times \F_p^k$ and a map~$\Gamma:\F_p^s\to\F_p$ such that
\beqn
P(y,v) = \Gamma\big(\Delta_{(h_1,w_1)}P(y,v),\dots, \Delta_{(h_s,w_s)}P(y,v)\big).
\eeqn

Let~$f:\F_p^s\to \C$ be the map given by~$f(t) = \omega^{\Gamma(t)}$ and let~$\widehat f:\F_p^s\to \C$ be its Fourier transform,
\beqn
\widehat f(\alpha) = \Exp_{t\in \F_p^s}f(t)\omega^{-\langle \alpha,t\rangle}.
\eeqn
Since~$P$ is linear in the last~$k$ coordinates, it follows that
\begin{align*}
\Delta_{(h,w)}P(y,v) 
&=
P(y+h, v+w) - P(y+h,v) + P(y+h,v) - P(y,v)\\
&=
\langle w, \tilde{\phi}(y+h)\rangle + \langle v, \Delta_h\tilde{\phi}(y)\rangle.
\end{align*}
By the Fourier inversion formula, we conclude that
\begin{align*}
\omega^{P(y,v)}
&=
f\big(\Delta_{(h_1,w_1)}P(y,v),\dots, \Delta_{(h_s,w_s)}P(y,v)\big)\\
&=
\sum_{\alpha\in \F_p^s}\widehat f(\alpha) \omega^{Q_\alpha(y) + \langle v, \gamma_\alpha(y)\rangle},
\end{align*}
where for~$\alpha\in \F_p^s$ we denote
\begin{align*}
Q_\alpha(y) &= \sum_{i=1}^s \langle \alpha_i w_i, \tilde{\phi}(y+h_i)\rangle,\\
\gamma_\alpha(y) &= \sum_{i=1}^s\alpha_i \Delta_{h_i}\tilde{\phi}(y).
\end{align*}
Note crucially that~$\deg(\gamma_\alpha) \leq d-1$ for all~$\alpha \in \F_p^s$, which is what will eventually allow us to apply the induction hypothesis.

It follows from our expression for~$\omega^{P(y, v)}$ that
\begin{align*}
\mathbf{1}[\phi(y) = x]
&=
\Exp_{v\in \F_p^k}\omega^{\langle v, \phi(y) - x\rangle}\\
&=
\Exp_{v\in \F_p^k}\omega^{P(y,v) +\langle v, \psi(y)-x\rangle}\\
&=
\sum_{\alpha\in \F_p^s}\widehat f(\alpha) \omega^{Q_\alpha(y)}\, \Exp_{v\in \F_p^k}\omega^{\langle v,\, (\gamma_\alpha + \psi)(y) - x\rangle}.
\end{align*}
Taking~$y = E(x) + Z$, we then obtain
\begin{align*}
\Pr\big[\phi\big(E(x) + Z\big) = x\big]
&=
\Exp_{x,Z}\mathbf{1}\big[\phi(E(x) + Z\big) = x\big]\\
&\leq
\sum_{\alpha\in \F_p^s}|\widehat f(\alpha)|\, \Exp_{x,Z}\big| \Exp_{v\in \F_p^k}\omega^{\langle v,\, (\gamma_\alpha + \psi) (E(x) + Z) - x \rangle}\big|\\
&\leq
\bigg( \sum_{\alpha\in \F_p^s}|\widehat f(\alpha)| \bigg) \max_{\alpha\in \F_p^s} \Exp_{x,Z} \mathbf{1}\big[(\gamma_\alpha + \psi)\big(E(x) + Z\big) = x\big]\\
&\leq
p^{s/2} \max_{\alpha\in \F_p^s} \Pr\big[(\gamma_\alpha + \psi)\big(E(x) + Z\big) = x\big],
\end{align*}
where we have used the Cauchy-Schwarz inequality and Parseval's identity in the last line.
Since~$\deg(\gamma_{\alpha} + \psi) \leq d-1$ and~$s$ ultimately depends only on~$p$, $d$, $\rho$ and~$\varepsilon$, by taking
\beqn
k \geq k_0(p,d,\rho,\eps) := k_0(p, d-1, \rho, \eps\, p^{-s/2})
\eeqn
we conclude from the induction hypothesis that
\beqn
\Pr\big[\phi\big(E(x) + Z\big) = x\big] \leq \varepsilon
\eeqn
in this case as well.
The theorem follows.

\section{Quantum circuit for decoding the Hadamard code}
\label{sec:qcircuit}
In this section we give the constant-depth quantum circuit to decode the Hadamard code.
We first give the operations on a high level, then we present more details on the implementations and finally we show how to further reduce the total complexity.

\subsection{High-level quantum algorithm}
Interestingly, we can implement each of the operations performed by the entangled players described in Section~\ref{sec:qstrategy} with constant-depth quantum circuits using only single and two-qubit gates and classical parity gates, thus giving a \qnc$^0[\oplus]$ circuit. 

To generate GHZ states (which we will need on multiple occasions), we use a technique of Watts et al.~\cite{WattsKothariSchaefferTal:2019} that starts by generating a so-called \emph{poor man's cat state}: $\frac{1}{\sqrt{2}}(\ket{z}+\ket{\bar{z}})$ for some binary vector $z$. 
To generate an $n$-qubit poor man's cat state, we apply Hadamard gates to an $n$-qubit register initialized in the all-zeros state and then compute the parity between adjacent qubits in $n-1$ auxilliary qubits. 
After measuring the auxilliary qubits, we are left with a poor man's cat state. 
With the parity measurements $d_i$ we can correct the poor man's cat state to a GHZ state by flipping qubits conditioned on a prefix-sum computation of the measurement outcomes. 

To apply the conditional phase-flip~\eqref{eq:condphase}, we use an auxilliary qubit and compute the AND function 
\beqn
\ket{z}\ket{b}\quad\mapsto\quad \ket{z}\ket{\text{AND}(z_1,\hdots,z_k)\oplus b}.
\eeqn
We then apply a phase-flip on the last qubit conditioned on~$c(y)=1$.
Using $X$-gates, we can ensure that AND evaluates to $1$ if and only if $z=y$. 
For ease of implementation, we use the identity AND$(z_1,\hdots,z_k)=\neg \text{OR}(\neg z_1,\hdots,\neg z_k)$. 

We cannot use standard decomposition techniques, such as using Toffoli gates, to implement the OR, as these do not give constant-depth circuits. 
Instead, we use the constant-depth Exact OR-implementation of~\cite{TakahashiTani:2013}. Their method uses single and two-qubit gates and additionally makes use of quantum fan-out gates that implement the general map
\beqn
\ket{b}\ket{y_1}\hdots\ket{y_n}\mapsto\ket{b}\ket{y_1\oplus b}\hdots\ket{y_n\oplus b}.
\eeqn 
Below, we show how to implement these quantum fan-out gates with \qnc$^0[\oplus]$ circuits. 
These quantum fan-out gates compute parity when conjugated with Hadamard gates on each input.
We use the quantum fan-out gates to compute and sum the parity of each subset of the inputs to compute the OR of the inputs in the first qubit~\cite[Lemma 1]{TakahashiTani:2013}. 

Implementing this constant-depth approach does come at the cost of a circuit size exponential in $k$: $O(k2^k)=O(n\log n)$.
We can implement this for each input in parallel which gives a quantum circuit of size $O(n^2\log n)$. 
As we have a constant number of constant-depth operations, we have constant-depth circuit for list-decoding the Hadamard code. 

\subsection{Details of quantum algorithm}
Now, we present details of the operations given in the previous section. 

Figure~\ref{fig:q_circuit:poor_man_cat_state} shows the quantum circuit to generate a 3-qubit GHZ state. 
If the measurement results were $d_1=1$ and $d_2=1$,  then we have the state $\frac{1}{\sqrt{2}}(\ket{010}+\ket{101})$. 
We flip the second qubit, because the first measurement result, $d_1$, equals~1. We do not flip the third qubit, because the parity of the measurement results, $d_1\oplus d_2$, equals~0.
This indeed gives the 3-qubit GHZ state $\frac{1}{\sqrt{2}}(\ket{000}+\ket{111})$ as desired.
\begin{figure}[ht]
	\centering
\begin{quantikz}
\lstick{$\ket{0}_1$} & \gate{H} & \ctrl{1} & \qw & \qw & \qw & \qw & \qw \\
\lstick{$\ket{0}_2$} & \qw & \targ{} & \targ{} & \meter{} & \gate[1,cwires={1}, style={white}]{d_1} & \cwbend{1} \\
\lstick{$\ket{0}_3$} & \gate{H} & \ctrl{1} & \ctrl{-1} & \qw & \qw & \targ{} & \qw \\
\lstick{$\ket{0}_4$} & \qw & \targ{} & \targ{} & \meter{} & \gate[1,cwires={1}, style={white}]{d_1 \oplus d_2} & \cwbend{1} \\
\lstick{$\ket{0}_5$} & \gate{H} & \qw & \ctrl{-1} & \qw & \qw & \targ{} & \qw  
\end{quantikz}
	\caption{The quantum circuit to generate a 3-qubit GHZ state. First we obtain a poor man's cat state $\frac{1}{\sqrt{2}}(\ket{z}+\ket{\bar{z}})$ with each $z\in\F_2^3$ equally likely to be found. The parity gates compute a prefix sum on the measurement results $d_1$ and $d_2$ and determine if a qubit has to be flipped to obtain the GHZ state.}
	\label{fig:q_circuit:poor_man_cat_state}
\end{figure}
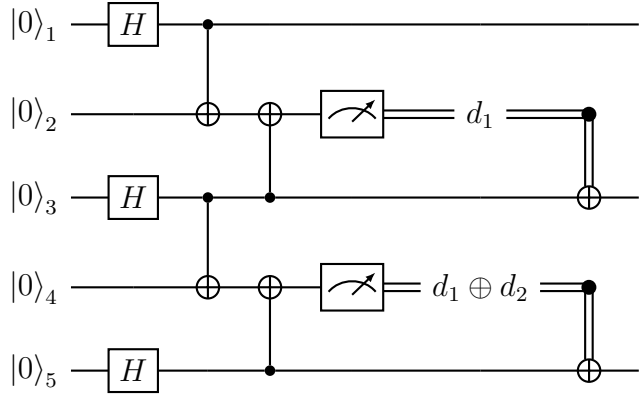

This method extends naturally to larger $n$. A prefix-sum computation is then used to determine which qubits have to be flipped. 
The depth of the circuit does not increase with larger $n$.
In our GHZ state construction, we implicitly assumed the qubits to be arranged in a linear architecture. 
Other arrangements work equally well, as shown in~\cite{WattsKothariSchaefferTal:2019}. 

A powerful tool in the implementation above is the quantum fan-out gate. We now show how to implement this gate using only single and two-qubit gates and classical parity gates. 
For this, we combine the GHZ state construction introduced above with ideas from distributed quantum computing, specifically, the non-local CNOT-gate~\cite{Eisert:2000,YimsiriwattanaLomonaco:2004}. 
The term non-local CNOT originates from the fact that we can imagine the control and target qubits being hosted on different quantum devices which share a GHZ state. 
We apply this gate in a local setting to construct the quantum fan-out gate in constant-depth. 

Figure~\ref{fig:q_circuit:non_local_cnot} shows the quantum fan-out gate implementation for one control and two targets. 
This method extends to an arbitrary number of targets by reapplying the same operations to all target qubits and the corresponding qubits in the GHZ state in parallel. 
The last $Z$-gate is only applied if the parity over all measurement results equals $1$. 
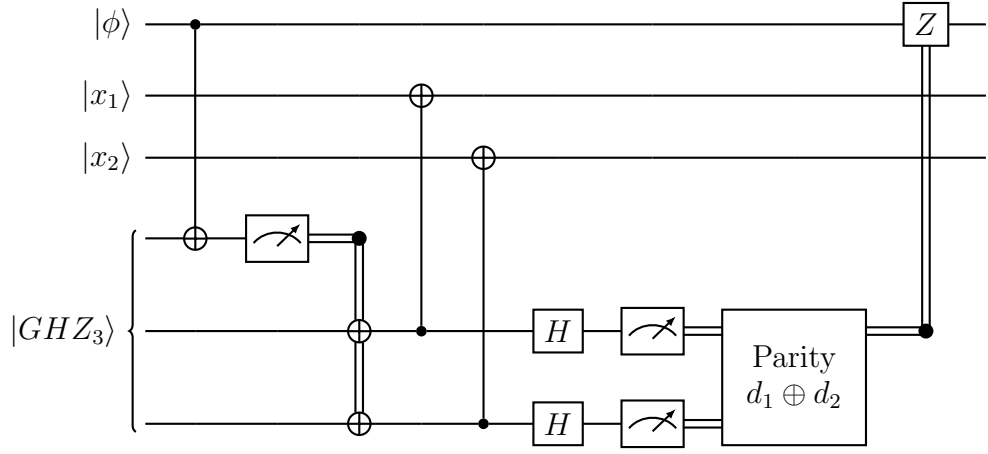
\begin{figure}
\centering
\begin{quantikz}
\lstick{$\ket{\phi}$} & \ctrl{3} & \qw & \qw & \qw & \qw & \qw & \qw & \qw & \gate{Z} & \qw  \\
\lstick{$\ket{x_1}$} & \qw & \qw & \qw & \targ{} & \qw & \qw & \qw & \qw & \qw & \qw \\
\lstick{$\ket{x_2}$} &  \qw & \qw & \qw & \qw & \targ{} & \qw & \qw & \qw & \qw & \qw \\
\lstick[wires=3]{$\ket{GHZ_3}$} & \targ{} & \meter{} & \cwbend{1} \\
 & \qw & \qw & \targ{} & \ctrl{-3} & \qw & \gate{H} & \meter{} & \gate[2,cwires={1,2}, disable auto height]{\begin{array}{c}\text{Parity} \\ d_1\oplus d_2\end{array}} & \cwbend{-4} \\
 & \qw & \qw & \targ{}\vcw{-1} & \qw & \ctrl{-3} &  \gate{H} & \meter{} & &  
\end{quantikz}
	\caption{Implementation of a quantum fan-out gate with one control qubit $\ket{\phi}$ and two target qubits $\ket{x_1}$ and $\ket{x_2}$. Only single and two-qubit gates and classical parity gates are used. The bottom three qubits are in the GHZ$_3$ state.}
	\label{fig:q_circuit:non_local_cnot}
\end{figure}

\begin{lemma}
The circuit of Figure~\ref{fig:q_circuit:non_local_cnot} implements a quantum fan-out gate. 
\end{lemma}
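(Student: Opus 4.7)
The plan is to verify the claim by tracking the joint state of $\ket{\phi}$, the targets $\ket{x_1}, \ket{x_2}$, and the three GHZ resource qubits through the circuit, carrying out the standard gate-teleportation analysis of the non-local CNOT from~\cite{Eisert:2000,YimsiriwattanaLomonaco:2004} with one extra target. Writing $\ket{\phi} = \alpha\ket{0}+\beta\ket{1}$, I would split the argument into three phases, one per block of the circuit, and handle each by direct computation.

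First, the CNOT from $\ket{\phi}$ onto GHZ qubit~$1$ maps $\ket{\phi}\otimes\tfrac{1}{\sqrt 2}(\ket{000}+\ket{111})$ to $\tfrac{1}{\sqrt 2}(\alpha\ket{0}(\ket{000}+\ket{111})+\beta\ket{1}(\ket{100}+\ket{011}))$; measuring that qubit in the computational basis with outcome $m$ leaves the (control, GHZ$_2$, GHZ$_3$) subsystem in $\alpha\ket{0}\ket{mm}+\beta\ket{1}\ket{\bar m \bar m}$, and the classically-controlled $X^m$ broadcast to both remaining GHZ qubits gives $\alpha\ket{0}\ket{00}+\beta\ket{1}\ket{11}$, uniformly in $m$. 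Next, the two CNOTs from GHZ$_2$ and GHZ$_3$ onto the targets produce $\alpha\ket{0}\ket{x_1,x_2}\ket{00}+\beta\ket{1}\ket{x_1\oplus 1, x_2\oplus 1}\ket{11}$; applying Hadamards to and then measuring GHZ$_2$, GHZ$_3$ in the computational basis with outcomes $d_1, d_2$ leaves the control and targets in $\alpha\ket{0}\ket{x_1,x_2}+(-1)^{d_1\oplus d_2}\beta\ket{1}\ket{x_1\oplus 1,x_2\oplus 1}$, and the final classically-controlled $Z^{d_1\oplus d_2}$ on $\ket{\phi}$ removes the sign, yielding the fan-out action $\alpha\ket{0}\ket{x_1,x_2}+\beta\ket{1}\ket{x_1\oplus 1,x_2\oplus 1}$, as desired.

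Each step is routine Clifford arithmetic; the only real task is the bookkeeping across the eight measurement branches $(m,d_1,d_2)\in\{0,1\}^3$. This is most cleanly handled via the principle of deferred measurement, which lets one replace every classically-controlled Pauli by a quantum-controlled Pauli, reduce the analysis to verifying a single unitary identity, and then push all measurements to the end of the circuit. The same observation also confirms that the construction is genuinely a constant-depth circuit over single- and two-qubit gates plus a single classical parity gate computing $d_1 \oplus d_2$, matching the resources allowed in \qnc$^0[\oplus]$ and so establishing the lemma.
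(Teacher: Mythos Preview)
Your proposal is correct and follows essentially the same approach as the paper: a direct state-tracking computation through the circuit with $\ket{\phi}=\alpha\ket{0}+\beta\ket{1}$ and computational-basis targets, handling the measurement of the first GHZ qubit (your $m$, the paper's $d_0$), the classically-controlled $X$ corrections, the CNOTs to the targets, the Hadamard-plus-measure on the remaining GHZ qubits, and the final parity-controlled $Z$. The paper writes the argument for general $n$ targets while you stay with the $n=2$ case depicted in the figure, and your remark on deferred measurement is an extra observation not in the paper, but the core argument is identical.
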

\begin{proof}
Let $\ket{x}$ be an $n$-qubit computational basis state and $\ket{\phi}=\alpha\ket{0}+\beta\ket{1}$ be any single qubit quantum state. 
We will prove that the circuit implements the quantum fan-out gate on the state $\ket{\phi}\ket{x}$. 
The lemma then follows by linearity of the operations. 

The action of the quantum fan-out gate on the quantum state is given by 
\begin{equation}
    \ket{\phi}\ket{x} 
    \overset{\text{fan-out}}{\mapsto} \alpha \ket{0} \ket{x} + \beta \ket{1} X^{\otimes n} \ket{x} = \alpha \ket{0} \ket{x} + \beta \ket{1} \ket{\bar{x}},
    \label{eq:action_fanout}
\end{equation}
where $\ket{\bar{x}}$ is the computational basis state $\ket{x}$ with all qubits flipped. 

To see why this works, assume we have a GHZ$_{n+1}$ state and we apply the operations as shown in Figure~\ref{fig:q_circuit:non_local_cnot} generalized to arbitrary $n$. 
Up to a normalization factor of $1/\sqrt{2}$ coming from the GHZ state we then have:
\begin{align*}
     \big[\alpha & \ket{0} + \beta \ket{1}\big] \ket{x} \otimes\big[\ket{00 \cdots 0} + \ket{11 \cdots 1}\big]  \\
    & \xmapsto{(1)} \alpha \ket{0}\ket{x}\otimes\big[\ket{00 \cdots 0} + \ket{11 \cdots 1}\big]   + \beta \ket{1} \ket{x}\otimes\big[\ket{10 \cdots 0} + \ket{01 \cdots 1}\big]  \\
    & \xmapsto{(2)} \alpha \ket{0}\ket{x}\ket{d_0 0 \cdots 0}  + \beta \ket{1} \ket{x}\ket{d_01 \cdots 1} \\
    & \xmapsto{(3)} \alpha \ket{0}\ket{x}\ket{d_0 0 \cdots 0}  + \beta \ket{1} X^{\otimes n}\ket{x}\ket{d_01 \cdots 1} \\
    & \xmapsto{(4)} \frac{1}{2^{n-1}} \sum_{d\in\F_2^n} \big[\alpha\ket{0}\ket{x} + \beta (-1)^{d_1 + \hdots + d_n}\ket{1} X^{\otimes n}\ket{x}\big]\ket{d_0 d_1\hdots d_n} \\
    & \xmapsto{(5)} \alpha\ket{0}\ket{x}\ket{d_0 d_1\hdots d_n}  + (-1)^{d_1 + \hdots + d_n}\beta \ket{1} X^{\otimes n}\ket{x}\ket{d_0 d_1\hdots d_n} \\
    & \xmapsto{(6)} \big[\alpha\ket{0}\ket{x} + \beta\ket{1}\ket{\bar{x}}\big]\ket{d_0d_1\hdots d_n}.
\end{align*}
In Step (1), we perform a CNOT operation from the control qubit to the first qubit of the GHZ state. 
In Step (2), we measure that qubit, with outcome $d_0$, and apply an $X$-gate to the remaining $n$ qubits of the GHZ state if $d_0=1$. 
Next we perform CNOT gates between the $i+1$-th qubit of the GHZ state and the $i$-th target qubit. 
In Steps (4) and (5), we first apply Hadamard gates to each unmeasured qubit of the GHZ state and subsequently measure it. 
Finally, we compute the parity $d_1\oplus\hdots\oplus d_n$ and apply a $Z$-gate to the control qubit if this parity equals one. 
This indeed gave the desired final state and hence the shown quantum circuit implements the quantum fan-out gate. 
\end{proof}

\subsection{Reducing the algorithm's complexity}
We now show how to reduce the complexity of the quantum algorithm from $O(n^2\log n)$ to $O(n\log n\log\log n)$.

With the current implementation, the complexity of applying a single conditional phase-flip is polynomial in the codeword length $n$. 
We can reduce this to polynomial in the message length $k$. 
For this we apply the OR-reduction~\cite{HoyerSpalek:2005}. 
Instead of evaluating an OR on $k$ inputs, we use a $O(k\log k)$ size constant depth circuit to prepare a quantum state on $\lceil\log(k+1)\rceil$ qubits, such that the OR on these $\lceil\log(k+1)\rceil$ qubits evaluates to the same value as the OR on the original $k$ qubits. 
This OR-reduction uses the quantum fan-out gate.

The OR-reduction increases the depth by an additive constant, however, it reduces the complexity of applying a single conditional phase-flip to $O(k\log k)$. 
The complexity of the quantum circuit therefore reduces to $O(n\log n\log\log n)$, using $k=\log n$. 

Note that the phase flip is only applied if the input is one, hence, we have to apply the OR-reduction and the Exact OR implementation only if this input is one.

\section{The high-characteristic setting}
\label{sec:high_char}

In this section we give some evidence to support our conjecture (made in Section~\ref{sec:cresults}) that the probability of correct message retrieval by \nc$^0[\oplus]$ circuits decays exponentially with the message length.
This is done by proving the following theorem, which a ``high-characteristic'' analogue of Theorem~\ref{thm:biased_equi} with much better bounds;
as we see no reason to believe a result of this kind has a strong dependence on the characteristic of the finite field considered,\footnote{While the result is stated in the setting of prime fields~$\F_p$, it easily generalizes to the case of non-prime finite fields $\F_q$, with only minor modifications in the proof.}
we believe that a similar bound also holds for low-characteristic fields such as~$\F_2$.

\begin{theorem}[Exponential decay in high characteristic] \label{thm:high_char}
For every $d\in \N$ and $\rho\in [0,1)$ there exist constants $C=C(\rho,d)$ and $c=c(\rho,d) > 0$ such that the following holds.
Let $p > d$ be a prime, and let $n$, $k$ be integers with $k\geq p$.
Then for every polynomial map $\phi:\F_p^n\to\F_p^k$ of degree at most~$d$ and every function $E:\F_p^k\to\F_p^n$ we have
\beqn
\Pr_{x\in \F_p^k, Z\sim \mathcal N_\rho(0)}\big[\phi\big(E(x) + Z\big) = x\big] \leq C e^{-c k/(\log k)^{d^2}}.
\eeqn
\end{theorem}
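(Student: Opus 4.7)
The plan is to proceed by induction on the degree $d$, following the same high-level structure as the proof of Theorem~\ref{thm:biased_equi} but extracting quantitative bounds at each step that exploit the high-characteristic assumption $p > d$. The base case $d=1$ is already handled by the warm-up argument of Section~\ref{sec:cstrategy}: setting $r = k/2$ there yields a success probability bounded by $p^{-k/2} + e^{-(1-\rho)k/16} + p^{-(1-\rho)k/4}$, which is exponentially small in $k$ and already matches (in fact beats) the target bound.

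For the inductive step with $d \geq 2$, I would split the analysis based on the analytic rank of $\phi$ using a threshold $R = R(k)$ to be chosen as a suitable power of $\log k$. In the high-rank case $\arank_d(\phi) \geq R$, I would apply a quantitative version of Lemma~\ref{lem:equidistribution}: combining Theorem~\ref{thm:restriction} with the monotonicity and Lipschitz properties of analytic rank from Lemma~\ref{lem:natural}, one deduces that with probability at least $1 - e^{-\Omega(R)}$ the random restriction $\phi_{|I}$ (for $I \sim [n]_{1-\rho}$) still has rank at least $\kappa R$, which forces
\[
\Pr_{Z \sim \mathcal N_\rho(0)}\big[\phi(y + Z) = w\big] \leq e^{-\Omega(R)} + p^{-\kappa R}
\]
uniformly in $y\in \F_p^n$ and $w\in \F_p^k$. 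Averaging over $x$ then gives the same bound for the event in the theorem. The key point here is that in characteristic $p > d$ one can extract an effective exponential dependence on $R$ from the random restriction machinery of~\cite{BrietCS:2022}, rather than the purely qualitative one that suffices for Theorem~\ref{thm:biased_equi}.

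In the low-rank case $\arank_d(\phi) < R$, there exists a map $\psi$ of degree at most $d-1$ with $\Pr[\phi(x) = \psi(x)] \geq p^{-R}$. Setting $\tilde{\phi} = \phi - \psi$ and $P(y,v) = \langle v, \tilde{\phi}(y)\rangle$, one has $\bias(P) \geq p^{-R}$, and Theorem~\ref{thm:KL} decomposes $P$ as a function of $s$ directional derivatives. In characteristic $p > d$, the quantitative refinements of this statement (via the Green--Tao theorem and subsequent work on the bias-versus-rank problem for polynomials) yield $s \leq \mathrm{poly}(R)$, rather than the tower-type bound required in general characteristic. The Fourier-analytic computation from the proof of Theorem~\ref{thm:biased_equi} then gives
\[
\Pr\big[\phi(E(x) + Z) = x\big] \leq p^{s/2}\, \max_{\alpha \in \F_p^s} \Pr\big[(\gamma_\alpha + \psi)(E(x) + Z) = x\big],
\]
and since each $\gamma_\alpha + \psi$ has degree at most $d-1$, the induction hypothesis applies (provided $k \geq p$, which is the source of the hypothesis on $k$).

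It remains to balance the two cases by choosing $R = R(k)$ to be a suitable power of $\log k$. The high-rank bound $e^{-\Omega(R)}$ is then negligible, while the low-rank bound contributes $p^{s/2} \cdot C_{d-1}\, e^{-c_{d-1} k/(\log k)^{(d-1)^2}} = e^{\mathrm{poly}(R)} \cdot e^{-\Omega(k/(\log k)^{(d-1)^2})}$; the exponent $(\log k)^{d^2} = (\log k)^{(d-1)^2 + (2d-1)}$ is designed precisely to absorb this $e^{\mathrm{poly}(\log k)}$ loss at each induction step. The main obstacle is securing the polynomial-in-$R$ dependence of $s$ in the Kaufman--Lovett-type decomposition: this is the quantitative-versus-qualitative bottleneck that fails in low characteristic, where currently only tower-type bounds are known, and it is exactly why Theorem~\ref{thm:high_char} is merely supporting evidence for the conjectured $\F_2$-bound rather than a proof of it.
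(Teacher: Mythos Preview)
Your overall inductive scheme and the high-rank/low-rank dichotomy are the right shape, but the balancing is off, and this is not a cosmetic issue.

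You propose to take the threshold $R$ to be a power of $\log k$, so that the high-rank bound is $e^{-\Omega(R)} = e^{-(\log k)^{O(1)}}$. That is nowhere near the target $Ce^{-ck/(\log k)^{d^2}}$; it is only quasipolynomially small in $k$. For the high-rank case to match the statement you must take $R$ of order $k/(\log k)^{d^2}$, as the paper does. But with $R$ this large, a generic ``$s \leq \mathrm{poly}(R)$'' bound in the low-rank case is fatal: the loss factor $p^{s/2}$ becomes $\exp\big(\Omega(k^c)\big)$ for any exponent $c>1$, which swamps the inductive gain $e^{-c_{d-1}k/(\log k)^{(d-1)^2}}$. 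The arithmetic only closes if $s$ is \emph{quasilinear} in $R$, namely $s = O\big(R(\log R)^{O(1)}\big)$.

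This is exactly why the paper does not run the argument through $\arank_d(\phi)$ and Theorem~\ref{thm:KL} as you do. Instead it passes (using $p>d$ and Taylor expansion) to the $(d+1)$-tensor $T(y_1,\dots,y_d,v)=\langle v,\Delta_{y_1}\cdots\Delta_{y_d}\phi(0)\rangle$ and uses the tensor analytic rank together with the Moshkovitz--Zhu theorem, which gives precisely the needed quasilinear bound $\prank(T)\le L_{d+1}\,\arank(T)\,\log^{d}(1+\arank(T))$. The structured case then decomposes $\phi$ as $\psi_0+\sum_{i=1}^{s}q_i\psi_i$ with $\deg\psi_i<d$ and partitions $\F_p^n$ into the $\leq p^{s}$ level sets of $(q_1,\dots,q_s)$, applying the induction hypothesis on each part. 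The pseudorandom case likewise uses the explicit exponential tail of the tensor random restriction theorem (Theorem~\ref{thm:tensor_restriction}), rather than the merely qualitative Theorem~\ref{thm:restriction}. Your sketch can be repaired, but only by replacing the vague ``quantitative refinements of Kaufman--Lovett'' with this tensor machinery and moving the threshold from $\mathrm{polylog}(k)$ to $\Theta(k/(\log k)^{d^2})$.
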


\begin{remark}
The presence of the poly-logarithmic term in the exponential above is due to a poly-logarithmic loss when passing between two distinct notions of tensor rank in our proof of Theorem~\ref{thm:high_char}.
It is a widely-believed conjecture in additive combinatorics that these two notions of rank (see Section~\ref{sec:tensors} below) are within a constant multiplicative factor of one another, in which case our proof would give an upper bound of the form $C e^{-c k}$ for the probability of correct message retrieval (and this would be the best possible).
\end{remark}

As with the proof of Theorem~\ref{thm:biased_equi}, we will prove Theorem~\ref{thm:high_char} by induction on the degree~$d$, using the degree-1 case shown in Section~\ref{sec:cstrategy} as the base case of the induction.
The inductive argument will also share many similarities with the one presented in Section~\ref{sec:classical}, in particular relying on a structure-versus-randomness dichotomy based on a notion of rank associated with the polynomial map~$\phi$.
The reason for the better bounds we obtain now stems from the fact that, in the high-characteristic case, one can work with tensors (i.e. multilinear forms) rather than with general polynomial maps.
In the quasirandom case of our argument, we can then use a stronger version of the random restriction theorem for the analytic rank of tensors (also proved in~\cite{BrietCS:2022}), while in the structured case we use a recently-proved close connection between analytic rank and partition rank of tensors~\cite{MoshkovitzZ:2022}.

\subsection{Tensors associated to polynomial maps}
\label{sec:tensors}

Given a polynomial map $\phi: \F_p^n \to \F_p^k$ of degree at most $d$, we can define a $(d+1)$-tensor $T: (\F_p^n)^d \times \F_p^k \to \F_p$ associated to it by
\beqn
T(y_1, \dots, y_d, v) = \big\langle v,\, \Delta_{y_1} \cdots \Delta_{y_d} \phi(0)\big\rangle,
\eeqn
where we recall that $\Delta_y \phi(x) = \phi(x+y) - \phi(x)$.
While not immediately obvious, the formula above indeed defines a tensor (i.e., it is linear in each variable separately).
This follows from the fact that $\Delta_{y_1} \cdots \Delta_{y_d} \phi$ does not depend on the order of the derivatives, and that the polynomial map $\Delta_{y_1} \cdots \Delta_{y_{d-1}} \phi$ has degree at most~1 (since~$\phi$ has degree at most~$d$);
note that, if~$\psi$ is a linear map, then~$h\mapsto\Delta_h\psi$ is linear in~$h$.

If the characteristic~$p$ of the field in strictly higher than the degree~$d$, then we also have the \emph{integration formula}
\beqn
\phi(y) = \frac{1}{d!} T(y, \dots, y, \cdot) + \psi(y) \quad \text{for all $y\in \F_p^n$,}
\eeqn
where $y$ is repeated $d$ times inside $T$ and $\psi$ is a polynomial map of degree at most~$d-1$.
This follows from the (discrete) Taylor expansion theorem, and allows us to pass back and forth between tensors and polynomial maps.

We will use the following two notions of rank for tensors, originally introduced by Gowers and Wolf~\cite{GowersW:2011} and by Naslund~\cite{Naslund2020}, respectively.

\begin{definition}[Tensor analytic rank]
Let $X_1,\dots,X_r$ be finite sets and~$T:\F_p^{X_1}\times\cdots\times\F_p^{X_r}\to \F_p$ be an $r$-tensor.
The bias of~$T$ is defined as
\beqn
\bias(T) = \Exp_{x_1\in \F_p^{X_1},\dots,x_r\in \F_p^{X_r}} \omega^{T(x_1,\dots,x_r)},
\eeqn
where $\omega = e^{2\pi i/p}$.
The bias is always real and positive,\footnote{It is not hard to show that $\bias(T) = \Pr_{x_1\in \F_p^{X_1},\dots,x_{r-1}\in \F_p^{X_{r-1}}} \big[T(x_1,\dots, x_{r-1}, \cdot) \equiv 0\big]$.}
and the analytic rank of~$T$ is defined by
\beqn
\arank(T) = -\log_{p} \bias(T).
\eeqn
\end{definition}

\begin{definition}[Partition rank]
A nonzero $r$-tensor $T:\F_p^{X_1}\times\cdots\times\F_p^{X_r}\to \F_p$ is said to have partition rank~1 if there is a nonempty strict subset $I\subset [r]$ and tensors $S: \prod_{i\in I}\F^{X_i}\to \F$ and $R: \prod_{i\in [r]\setminus I}\F^{X_i}\to \F$ such that~$T$ can be factored as $T = SR$ .
The partition rank of~$T$, denoted $\prank(T)$, is defined as the least $m\in \N$ such that there is a decomposition $T = T_1 + \cdots + T_m$ where each $T_i$ has partition rank~1.
\end{definition}

While these two notion of rank are defined in very different ways, it turns out that they are intimately related to one another.
Lovett has shown that $\arank(T) \leq \prank(T)$ holds for all tensors~\cite{Lovett2019}, and it is a well-known open problem to determine whether a similar inequality holds in the converse direction, up to an absolute multiplicative factor.
Very recently, Moshkovitz and Zhu~\cite{MoshkovitzZ:2022} proved that the relation between these two rank functions is at worst quasilinear.

\begin{theorem}[Moshkovitz--Zhu]\label{thm:MZ}
For every $r\geq 2$ there exists $L_r>0$ such that for every $r$-tensor~$T$ over any finite field, we have
\beq\label{eq:MZ}
\arank(T) \leq \prank(T) \leq L_r\arank(T)\,\log^{r-1}\big(1 + \arank(T)\big).
\eeq
\end{theorem}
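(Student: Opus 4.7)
The plan is to establish the two inequalities of \eqref{eq:MZ} separately, with almost all of the effort going into the upper bound. The easy direction $\arank(T) \leq \prank(T)$ follows from a direct bias computation: if $T = \sum_{i=1}^m S_i R_i$ is a minimal partition-rank decomposition, each summand being a product of two tensors on disjoint variable sets, then iterated Cauchy--Schwarz, combined with the basic identity $\bias(SR) \geq p^{-1}$ for partition-rank-$1$ tensors (which follows from the positive probability that $S$ vanishes identically on its variables), yields $\bias(T) \geq p^{-m}$, so $\arank(T) \leq m = \prank(T)$.

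For the hard direction $\prank(T) \leq L_r \arank(T) \log^{r-1}(1+\arank(T))$, I would proceed by induction on the order $r$. The base case $r=2$ is classical: for matrices, analytic rank and partition rank both coincide (up to an $O(1)$ factor) with the usual matrix rank, and no logarithmic slack is needed since $r-1 = 1$ absorbs it into the constant $L_2$.

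For the inductive step, let $T$ be an $r$-tensor on $V_1 \times \cdots \times V_r$ with $\arank(T) = R$, and for each $v \in V_1$ write $T_v(x_2,\ldots,x_r) := T(v, x_2, \ldots, x_r)$ for the corresponding $(r-1)$-tensor slice. The central technical lemma to prove is the existence of a subspace $W \subseteq V_1$ of dimension $O(R\log(1+R))$ such that $T = T_W + T_\perp$, where $T_W$ depends on the first slot only through $W$ and $T_\perp$ has partition rank $O(R)$. I would construct $W$ greedily: starting from $W = \{0\}$, at each stage adjoin a direction $v$ whose slice $T_v$ contributes significantly to the bias of $T$ beyond what $W$ already captures. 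Formulated in terms of the first-slot flattening of $T$ (or equivalently in terms of derivatives $\Delta_v T$ in the first coordinate), each such step reduces the residual analytic rank by at least a constant factor, so the procedure terminates after $O(R\log(1+R))$ steps. The leftover tensor $T_\perp$ captures only the ``small singular value'' part of the first-slot flattening, and applying Lovett's easy inequality to a suitable contraction bounds $\prank(T_\perp)$ by $O(R)$.

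Once the slicing lemma is in place, one observes that $T_W$ is essentially an $(r-1)$-tensor on $V_2 \times \cdots \times V_r$ with coefficients in $W$: picking a basis $w_1, \ldots, w_{\dim W}$ and applying the inductive hypothesis to each of the $\dim W$ ensuing $(r-1)$-tensors gives $\prank(T_W) \leq \dim(W)\cdot L_{r-1} R \log^{r-2}(1+R) = O\big(L_{r-1} R \log^{r-1}(1+R)\big)$. Adding the $O(R)$ bound on $\prank(T_\perp)$ and choosing $L_r$ appropriately yields the claimed bound. The main obstacle is the slicing lemma: one must quantify ``substantial contribution to bias'' in a way that simultaneously (a) guarantees a constant-factor decrease in residual analytic rank per greedy step, so that the $\log(1+R)$ length bound holds, and (b) forces the terminal residual $T_\perp$ to have partition rank only $O(R)$, rather than something larger that would destroy the induction. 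Balancing these two requirements cleanly --- most plausibly via a careful threshold cut in a truncated singular-value-like decomposition of the first-slot flattening --- is the delicate heart of the argument.
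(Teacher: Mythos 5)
You should first note that the paper does not prove this statement at all: Theorem~\ref{thm:MZ} is quoted verbatim from Moshkovitz and Zhu~\cite{MoshkovitzZ:2022}, and their proof is a substantial standalone work (building on Cohen--Moshkovitz, passing through perfect and infinite fields and a notion of relative rank) that bears no resemblance to your sketch. Your first inequality, $\arank(T)\leq\prank(T)$, is fine in outline: it is Lovett's argument, resting on subadditivity of analytic rank plus the fact that a partition-rank-one tensor $SR$ has bias at least $\Pr[R=0]\geq p^{-1}$, though the subadditivity step is where the actual work lies and ``iterated Cauchy--Schwarz'' is doing a lot of unexamined lifting there.

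The hard direction has a genuine gap, and moreover the proposed architecture is internally inconsistent. Your slicing lemma asserts a subspace $W\subseteq V_1$ with $\dim W=O(R\log(1+R))$ such that $T=T_W+T_\perp$, where $T_W$ depends on $x_1$ only through $W$ and $\prank(T_\perp)=O(R)$. But any tensor depending on $x_1$ only through a $\dim W$-dimensional quotient automatically has partition rank at most $\dim W$ (expand in a basis of linear functionals: each term $\ell_i(x_1)\,T(w_i,x_2,\dots,x_r)$ has partition rank one). So if your lemma held, you would immediately get $\prank(T)=O(R\log(1+R))$ for \emph{every} order $r$, with no induction --- a bound far stronger than Moshkovitz--Zhu and essentially resolving the main open problem in this area up to one logarithm. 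That is strong evidence the lemma is not reachable by the greedy procedure you describe; indeed the key step (``each stage reduces the residual analytic rank by a constant factor'') is unsupported and leans on objects (singular values of a flattening, a truncated SVD-like decomposition) that have no meaning over $\F_p$. Separately, your inductive step is based on a false premise: slices $T_v$ of a tensor need not satisfy $\arank(T_v)\leq\arank(T)$. For example, $T(x_1,x_2,x_3)=\ell(x_1)M(x_2,x_3)$ with $M$ of full rank $n$ has $\arank(T)\leq\prank(T)=1$, yet the slice at any $v$ with $\ell(v)=1$ is $M$ itself, of analytic rank $n$. So even granting the slicing lemma, the claimed bound $\prank(T_W)\leq\dim(W)\cdot L_{r-1}R\log^{r-2}(1+R)$ does not follow from the inductive hypothesis as stated.
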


This result will be an important ingredient in our proof of Theorem~\ref{thm:high_char};
we note that the decay obtained could be improved to $C e^{-ck}$ if Theorem~\ref{thm:MZ} were proven without the poly-logarithmic factor on the right-hand side of~\eqref{eq:MZ}.
Another important ingredient is the following random restriction theorem for tensors~\cite{BrietCS:2022}, stated here for the special case of the analytic rank.

\begin{theorem}[Tensor random restriction theorem]\label{thm:tensor_restriction}
For every $d\in \N$ and $\sigma \in (0,1]$, there exist constants $C,\kappa>0$ such that for any order-$d$ tensor~$T$ over any field, we have that
\beqn
\Pr_{I\sim [n]_\sigma}\big[\arank(T_{|I}) \geq \kappa\cdot \arank(T)\big]
\geq
1 - Ce^{-\kappa\,\arank(T)}.
\eeqn
\end{theorem}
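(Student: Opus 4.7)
The approach is induction on the tensor order~$d$ within the natural-rank framework of~\cite{BrietCS:2022}. The plan is to first establish that the tensor analytic rank satisfies the appropriate analogs of symmetry, sub-additivity, monotonicity, and Lipschitz properties (mirroring Lemma~\ref{lem:natural}), and then to peel off one variable at a time using the multilinear structure of~$T$.

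The base case $d = 1$ is trivial since the analytic rank of a linear form takes values in $\{0,1\}$. For the inductive step, the central identity is
\beqn
\bias(T) = \Pr_{x_1, \dots, x_{d-1} \in \F_p^n}\!\big[ T_j(x_1, \dots, x_{d-1}) = 0 \text{ for all } j \in [n] \big],
\eeqn
where $T_j(x_1, \dots, x_{d-1}) := T(x_1, \dots, x_{d-1}, e_j)$ is the $(d-1)$-tensor slice in the $j$-th direction of the last variable, obtained by computing the inner expectation over $x_d$ and using the fact that $T$ is linear in $x_d$. The analogous formula for $\bias(T_{|I})$ restricts both the first $d-1$ variables to $\F_p^I$ and the index $j$ to range over~$I$, providing the doorway through which the induction passes.

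I would then proceed in three steps. \emph{First}, if $\arank(T) = R$ is large, using sub-additivity and a pigeonhole-averaging argument, extract $s = \Omega(R)$ ``independent high-rank'' slice directions $j_1, \dots, j_s$, each contributing a $(d-1)$-tensor $T_{j_\ell}$ of non-trivial analytic rank. \emph{Second}, by a Chernoff bound, a $\sigma$-random restriction $I$ retains $\Omega(\sigma s) = \Omega(\sigma R)$ of these good directions except with probability $e^{-\Omega(\sigma R)}$, and moreover the first $d - 1$ variables being restricted to $\F_p^I$ preserves the rank of each surviving slice via the inductive hypothesis applied to each $T_{j_\ell}$, losing only a constant factor $\kappa_{d-1}$. \emph{Third}, recombining the rank information of these surviving restricted slices via the joint-vanishing formula above yields the claimed bound $\arank(T_{|I}) \geq \kappa_d \arank(T)$ with exponential tail $C e^{-\kappa \arank(T)}$, by composing the tail estimates from the two Chernoff steps (over direction choice and over slice-rank preservation).

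The main obstacle is the first step: extracting $\Omega(R)$ ``independent'' high-rank slices from a high-analytic-rank tensor \emph{without polylogarithmic loss}. A naive approach via partition rank, using Theorem~\ref{thm:MZ} as a bridge between $\arank$ and a combinatorial decomposition, would introduce a $\log^{d-1}(R)$ factor that cannot be absorbed into the linear scaling $\kappa \cdot \arank(T)$ the theorem demands. Avoiding this loss requires executing the decomposition entirely within the analytic-rank world, leveraging the sub-additivity and restriction-Lipschitz axioms in place of explicit combinatorial structure, so that the constants compound only multiplicatively across the $d$ induction steps rather than accumulating a logarithmic loss at each level.
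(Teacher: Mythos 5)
A preliminary remark: the paper does not prove Theorem~\ref{thm:tensor_restriction} at all --- it is imported as a black box from the companion paper~\cite{BrietCS:2022}, where it is derived within the axiomatic ``natural rank function'' framework (the tensor analogue of the four properties in Lemma~\ref{lem:natural}), not from any slice decomposition. So there is no in-paper argument to match your sketch against, and I can only judge the sketch on its own terms. It has a genuine gap.

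The gap sits in Steps 1 and 3, exactly where you pass between $\arank(T)$ and the analytic ranks of the slices $T_j$. The identity $\bias(T)=\Pr\big[T_j(x_1,\dots,x_{d-1})=0 \text{ for all } j\big]$ concerns the \emph{joint} vanishing of the slices, and this is not controlled in either direction by the individual quantities $\arank(T_j)$. As a counterexample to the recombination in Step 3, take $d=2$ and let $T$ be the matrix all of whose columns equal a fixed nonzero vector $u$: every slice is a nonzero linear form (hence of maximal analytic rank as a $1$-tensor, and every slice ``survives'' any restriction of the first variable), yet $\arank(T)=1$. So ``many surviving high-rank restricted slices'' does not imply $\arank(T_{|I})\geq\kappa\cdot\arank(T)$; what you need is a quantitative mutual independence among $\Omega(R)$ slices, and you neither define this notion nor explain how to extract such a family from the hypothesis $\arank(T)=R$. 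That extraction is precisely the hard content of the theorem: as you observe yourself, the only off-the-shelf bridge (partition rank, via Theorem~\ref{thm:MZ}) costs a factor $\log^{d}R$ that the linear bound $\kappa\cdot\arank(T)$ cannot absorb, and declaring that the decomposition should be carried out ``entirely within the analytic-rank world'' restates the obstacle rather than overcoming it. Two smaller problems: the base case $d=1$ is wrong as stated, since the bias of a nonzero linear form is $0$ and its analytic rank is therefore $+\infty$, not an element of $\{0,1\}$ (the meaningful base case is $d=2$, where the claim is the classical fact about random row/column restrictions of matrices); and your argument is tied to finite fields through $\omega=e^{2\pi i/p}$, whereas the theorem is asserted over any field. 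The route that actually works --- and the one taken in~\cite{BrietCS:2022} --- sidesteps slices entirely: one verifies that the tensor analytic rank is symmetric, sub-additive, monotone and restriction-Lipschitz, and proves the random restriction statement abstractly for any rank function with these properties, so that the exponential concentration comes from those axioms rather than from a Chernoff bound over slice directions.
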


\subsection{The proof of Theorem~\ref{thm:high_char}}

The proof will proceed by induction on the degree of the polynomial map.
Recall that in the base case of degree-1 maps the result has already been proven in Section~\ref{sec:cstrategy}.

Let now $\phi: \F_p^n \to \F_p^k$ be a polynomial map of degree at most $d$, with $2 \leq d < p$, and suppose the theorem holds for polynomial maps of degree at most $d-1$.
Define the $(d+1)$-tensor $T: (\F_p^n)^d \times \F_p^k \to \F_p$ by
\beqn
T(y_1, \dots, y_d, v) = \big\langle v,\, \Delta_{y_1} \dots \Delta_{y_d} \phi(0)\big\rangle.
\eeqn
We split the analysis into two cases, depending on whether the analytic rank of~$T$ is above or below some cut-off value $r = \Theta\big(k/(\log k)^{d^2}\big)$.

\subsubsection*{Pseudorandom case}

Assume that $\arank(T) \geq r$.
We will show that, for any given $x\in \F_p^n$, the probability
\beq \label{eq:pseudo_case}
\Pr_{Z\sim \mathcal N_\rho(0)}\big[\phi\big(E(x) + Z\big) = x\big]
\eeq
decays exponentially on $\arank(T)$;
we then conclude the pseudorandom case by averaging over all $x$.

Fix some $x\in \F_p^n$.
As before (in Section~\ref{sec:classical}), we write
\begin{align*}
    \Pr_{Z\sim \mathcal N_\rho(0)}\big[\phi\big(E(x) + Z\big) = x\big]
    &= \Exp_{I\sim [n]_{1-\rho}} \Pr_{y\in \F_p^I}\big[\phi\big(E(x) + y\big) = x\big] \\
    &= \Exp_{I\sim [n]_{1-\rho}} \Exp_{y\in \F_p^I, v\in \F_p^k} \omega^{\langle v,\, \phi(E(x)+y) - x\rangle}.
\end{align*}
Note that we can write $\phi(E(x)+y) - x = \phi(y) + \psi(y)$, where
\beqn
\psi(y) := \Delta_{E(x)}\phi(y) - x
\eeqn
has degree at most $d-1$.
Using this identity and the triangle inequality, it follows that
\begin{align*}
    \Pr_{Z\sim \mathcal N_\rho(0)}\big[\phi\big(E(x) + Z\big) = x\big]
    &= \Exp_{I\sim [n]_{1-\rho}} \Exp_{y\in \F_p^I, v\in \F_p^k} \omega^{\langle v,\, \phi(y) + \psi(y)\rangle} \\
    &\leq \Exp_{I\sim [n]_{1-\rho}}\Exp_{v\in \F_p^k} \big| \Exp_{y\in \F_p^I} \omega^{\langle v,\, (\phi + \psi)(y) \rangle} \big|.
\end{align*}

Repeated applications of the Cauchy-Schwarz inequality (or equivalently, the monotonicity property of the Gowers uniformity norms~\cite[pp.~420]{tao_vu_2006}) shows that, for any fixed $v\in \F_p^k$, $I\subseteq [n]$, we have
\begin{align*}
    \big| \Exp_{y\in \F_p^I} \omega^{\langle v,\, (\phi + \psi)(y) \rangle} \big|
    \leq \big( \Exp_{y_0, y_1, \dots, y_d\in \F_p^I} \omega^{\langle v,\, \Delta_{y_1} \dots \Delta_{y_d}(\phi + \psi)(y_0) \rangle} \big)^{1/2^d}.
\end{align*}
We then conclude that
\begin{align*}
    \Pr_{Z\sim \mathcal N_\rho(0)} &\big[\phi\big(E(x) + Z\big) = x\big] \\
    &\leq \Exp_{I\sim [n]_{1-\rho}}\Exp_{v\in \F_p^k} \big( \Exp_{y_0, y_1, \dots, y_d\in \F_p^I} \omega^{\langle v,\, \Delta_{y_1} \dots \Delta_{y_d}(\phi + \psi)(y_0) \rangle} \big)^{1/2^d} \\
    &\leq \Exp_{I\sim [n]_{1-\rho}} \big(\Exp_{v\in \F_p^k} \Exp_{y_0, y_1, \dots, y_d\in \F_p^I} \omega^{\langle v,\, \Delta_{y_1} \dots \Delta_{y_d}(\phi + \psi)(y_0) \rangle} \big)^{1/2^d},
\end{align*}
where we have applied H\"{o}lder's inequality once
(or, alternatively, Cauchy-Schwarz $d$ further times).

Now we need to relate this last expression to the analytic rank of~$T$.
Deriving~$d$ times a polynomial map of degree at most $d-1$ gives the zero map, and so
$\Delta_{y_1} \dots \Delta_{y_d}\psi(y_0) \equiv 0$.
Moreover, since $\deg(\phi) \leq d$, the $d$-th derivative $\Delta_{y_1} \dots \Delta_{y_d}\phi$ is a constant map.
We conclude that
\beqn
\Exp_{v\in \F_p^k} \Exp_{y_0, y_1, \dots, y_d\in \F_p^I} \omega^{\langle v,\, \Delta_{y_1} \dots \Delta_{y_d}(\phi + \psi)(y_0) \rangle}
= \Exp_{v\in \F_p^k} \Exp_{y_1, \dots, y_d\in \F_p^I} \omega^{\langle v,\, \Delta_{y_1} \dots \Delta_{y_d}\phi(0) \rangle}.
\eeqn
For each $v\in \F_p^k$, let $S(v)$ be the $d$-tensor given by $T(\cdot,\dots,\cdot,v)$.
Then the above is precisely the bias of the restricted tensor $S(v)_{|I^d}$, averaged over~$v$, which (by definition) equals the average of $p^{-\arank(S(v)_{|I^d})}$.
The probability~\eqref{eq:pseudo_case} is then bounded from above by
\beqn
\Exp_{v\in \F_p^k}\Exp_{I\sim [n]_{1-\rho}} p^{-\arank(S(v)_{|I^d})/2^d}.
\eeqn
Theorem~\ref{thm:tensor_restriction} now implies that for some absolute constant $C = C(d,\rho)>0$, the last quantity is bounded from above by $Cp^{-\arank(T)/C}$.
This settles the pseudorandom case.

\subsubsection*{Structured case}
Now we assume that $\arank(T) < r$.

Denote the partition rank of $T$ by $s := \prank(T)$.
Theorem~\ref{thm:MZ} shows that $s \leq L_{d+1} r (\log r)^d$, where $L_{d+1}$ is a universal constant.
We can then write
\beqn
T(y_{[d]}, v) = \sum_{i=1}^s R_i(y_{I_i}) S_i(y_{I_i^c}, v)
\eeqn
for some non-empty sets $I_i \subseteq [d]$, $|I_i|$-tensors $R_i$ and $(d-|I_i|+1)$-tensors $S_i$.
Since $d<p$, by Taylor's expansion theorem we have that
\beqn
\phi(y) = \frac{1}{d!} \Delta_y \dots \Delta_y \phi(0) + \psi_0(y), \quad \deg(\psi_0) < d.
\eeqn
Define $q_i: \F_p^n \to \F$, $\psi_i: \F_p^n \to \F_p^k$ ($i\in [s]$) by
\beqn
q_i(y) = \frac{1}{d!} R_i(y^{I_i}), \quad \langle v, \psi_i(y)\rangle = S_i(y^{I_i^c}, v),
\eeqn
and note that $\deg(\psi_i) < d$ for all $i\in [s]$.
By the definition of $T$ we conclude that
\beqn
\phi(y) = \psi_0(y) + \sum_{i=1}^s q_i(y) \psi_i(y), \quad \text{with $\deg(\psi_i) < d$ for $0\leq i\leq s$.}
\eeqn

Let $\mathcal{A} = \{A_1,\dots,A_m\}$ be the partition of $\F_p^n$ given by the level sets of the polynomial map $(q_1,\dots,q_s):\F_p^n\to\F_p^s$;
note that $m \leq p^s \leq p^{L_{d+1} r (\log r)^d}$.
For each $j\in [m]$, $\phi$ will coincide on $A_j$ with a polynomial map $\psi_{A_j}: \F_p^n \to \F_p^k$ of degree at most $d-1$
(just substitute the $q_i(y)$ on the formula above by their value on $A_j \in \mathcal{A}$).
Define the random events
\beqn
\mathcal{E}_j = \big\{ E(x)+Z \in A_j:\, x\sim \mathcal{U}(\F_p^k),\, Z\sim \mathcal{N}_\rho(0)\big\}, \quad j\in [m].
\eeqn
Since these events partition the probability space, it follows that
\begin{align*}
    \Pr_{x\in \F_p^k, Z\sim \mathcal N_\rho(0)} &\big[\phi\big(E(x) + Z\big) = x\big] \\
    &= \sum_{i=1}^m \Pr_{x, Z}\big[\phi\big(E(x) + Z\big) = x \And \mathcal{E}_j\big] \\
    &= \sum_{i=1}^m \Pr_{x, Z}\big[\psi_{A_j}\big(E(x) + Z\big) = x \And \mathcal{E}_j\big] \\
    &\leq m \cdot \max_{1\leq j\leq m} \Pr_{x, Z}\big[\psi_{A_j}\big(E(x) + Z\big) = x\big] \\
    &\leq p^{L_{d+1} r (\log r)^d} \cdot \max_{\deg(\psi)<d} \Pr_{x, Z}\big[\psi\big(E(x) + Z\big) = x\big],
\end{align*}
where the last maximum is over all polynomial maps $\psi: \F_p^n \to \F_p^k$ of degree at most $d-1$.
By the induction hypothesis we have that this maximum probability is at most $C' e^{-c' k/(\log k)^{(d-1)^2}}$, where $C' = C(d-1, \rho)$ and $c' = c(d-1, \rho)$;
we conclude that
\begin{align*}
    \Pr_{x\in \F_p^k, Z\sim \mathcal N_\rho(0)} &\big[\phi\big(E(x) + Z\big) = x\big] \\
    &\leq C'\exp\bigg((\log p) L_{d+1} r (\log r)^d - \frac{c'k}{(\log k)^{(d-1)^2}}\bigg).
\end{align*}

Taking
\beqn
r = \frac{c'}{2L_{d+1}} \frac{k}{(\log k)^{d^2}},
\eeqn
and using our assumptions $k \geq p$ and $d\geq 2$, we have that
\begin{align*}
    (\log p) L_{d+1} r (\log r)^d
    &\leq (\log k) L_{d+1} \frac{c'}{2L_{d+1}} \frac{k}{(\log k)^{d^2}} (\log k)^d \\
    &= \frac{c'}{2} \frac{k}{(\log k)^{d^2-d-1}} \\
    &\leq \frac{c'}{2} \frac{k}{(\log k)^{(d-1)^2}}.
\end{align*}
We conclude that
\beqn
    \Pr_{x\in \F_p^k, Z\sim \mathcal N_\rho(0)} \big[\phi\big(E(x) + Z\big) = x\big]
    \leq C'\exp\bigg(-\frac{c'k}{2(\log k)^{(d-1)^2}}\bigg)
\eeqn
in this case, and the theorem follows.

\appendix
\section{MAJORITY from list decoding}
\label{sec:sudan}

This appendix shows how to compute the MAJORITY function when given oracle access to circuits capable of list decoding the Hadamard code.

\subsection{Classical circuits}

We start by considering the case of classical circuits, in particular proving Theorem~\ref{thm:sudan}, which we recall below for convenience.
Our proof of this result follows the arguments exposed in \cite[Section~6.2]{viola:2006}.

\ThmSudanMajority*

Let $\maj_t$ denote the MAJORITY function on $t$ bits.
We first introduce a promise problem called $\isbal_t$, which asks to determine whether a given binary string is balanced.
We then show that a (possibly probabilistic) circuit that solves $\isbal_t$ can be turned into a deterministic circuit that computes $\maj_t$.
Finally, we show how a circuit for $\LH_n(\eps)$ can be used to solve $\isbal_t$ for $t =\Omega(1/\eps)$.

\begin{definition}[The $\isbal_t$ problem]\label{def:isbal}
For an even positive integer~$t$, define $\isbal_t:\{x\in \F_2^t :\, |x|\leq t/2\}\to \F_2$ by
\beqn
\isbal_t(x) = \left\{
\begin{array}{ll}
	1 &\text{if $|x| = t/2$}\\
	0 &\text{otherwise.}
\end{array}
\right.
\eeqn
Given an arbitrary~$x\in \F_2^t$, define the $\isbal_t$ problem to be to return $\isbal_t(x)$ if $|x| \leq t/2$ and an arbitrary bit otherwise.
\end{definition}

\begin{lemma}[Derandomization lemma]\label{lem:prob_IsBal}
Let~$\mathcal{C}$ be a probabilistic circuit that solves $\isbal_t$ with probability at least~$2/3$ for every input.
There exists a deterministic oracle \ac$^0$ circuit~$\mathcal{C'}$ that, when given oracle access to~$\mathcal{C}$ and the ability to fix its random bits, solves~$\isbal_t$.
\end{lemma}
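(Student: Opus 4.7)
The plan is to derandomize $\mathcal{C}$ via Adleman's standard argument, with Ajtai's approximate majority circuit used in place of exact majority so as to stay within \ac$^0$.

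First, I would amplify $\mathcal{C}$ by running it on $r$ independent random strings. For any fixed promise input $x\in \F_2^t$ with $|x|\leq t/2$, the expected fraction of trials returning the correct value $\isbal_t(x)$ is at least $2/3$. By a Chernoff bound, the probability that fewer than a $3/5$-fraction of the $r$ trials output correctly is at most $e^{-cr}$ for some absolute $c>0$. Taking $r=Ct$ for a sufficiently large constant $C$, this failure probability is below $2^{-t-1}$. A union bound over the at most $2^t$ promise inputs then guarantees the existence of a single sequence of random strings $\sigma_1^*,\dots,\sigma_r^*$ such that, simultaneously for every $x\in \F_2^t$ with $|x|\leq t/2$, at least $3/5$ of the outputs $\mathcal{C}(x;\sigma_i^*)$ agree with $\isbal_t(x)$.

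Next, I would hardwire such a sequence into the deterministic oracle circuit~$\mathcal{C}'$. On input~$x$, $\mathcal{C}'$ uses its ability to fix the random bits of~$\mathcal{C}$ to query the oracle on $(x,\sigma_1^*),\dots,(x,\sigma_r^*)$, obtaining bits $b_1,\dots,b_r$. By the choice of the sequence, we are guaranteed that $\sum_i b_i\geq (3/5)r$ if $\isbal_t(x)=1$ and $\sum_i b_i\leq (2/5)r$ if $\isbal_t(x)=0$. To decide between these two cases, $\mathcal{C}'$ applies Ajtai's depth-$3$, polynomial-size \ac$^0$ circuit for approximate majority, which distinguishes inputs of relative weight at least $3/5$ from those of weight at most $2/5$. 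The composition has constant depth, polynomial size in~$t$, uses $r=O(t)$ oracle queries, and correctly solves $\isbal_t$ on all promise inputs.

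The main obstacle is that exact MAJORITY on the $r=\Omega(t)$ amplified votes is not computable in \ac$^0$ (by the classical lower bounds of Furst--Saxe--Sipser and H\aa{}stad), so a naive majority-vote derandomization does not fit within the required circuit class. What makes the argument go through is precisely the fact that error amplification produces a constant multiplicative gap between the fractions of $1$-outputs in the yes- and no-cases, so approximate majority (which tolerates a constant gap and admits \ac$^0$ circuits by Ajtai's theorem) suffices. Everything else in the argument is routine probabilistic/combinatorial bookkeeping.
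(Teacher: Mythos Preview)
Your proposal is correct and follows essentially the same approach as the paper: amplify by running $O(t)$ parallel copies of $\mathcal{C}$, apply Chernoff plus a union bound over the at most $2^t$ promise inputs to fix a good random seed, and then use Ajtai's \ac$^0$ approximate-majority circuit to aggregate the votes. The only differences are cosmetic (you use thresholds $3/5$ versus $2/5$ where the paper uses $0.55$ versus $0.45$), and your explicit remark that exact MAJORITY would not fit in \ac$^0$ is a welcome clarification that the paper leaves implicit.
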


\begin{proof}
For some large enough constant~$c\in \N$, consider~$ct$ parallel instances of~$\mathcal{C}$.
It follows from the Chernoff bound that, for any fixed $x\in \F_2^t$ given to all of these instances, with probability $1- \exp(-10\, t)$ at least 55\% of the instances solves the $\isbal_t$ problem on input~$x$.

By the union bound, one can fix the randomness in the instances of~$\mathcal{C}$ in order to get a deterministic classical circuit that, for every input~$x\in\F_2^t$ with $|x| \leq t/2$, returns a $ct$-bit string whose Hamming weight is at least $0.55 t$ if $\isbal_t(x) = 1$ and at most $0.45 t$ if $\isbal_t(x) = 0$.
Distinguishing these two types of strings is known as the approximate majority problem, for which there is an \ac$^0$ circuit~\cite{Ajtai:1983}.
Combining these circuits gives the result.
\end{proof}

We now show that a deterministic circuit that solves $\isbal_t$ can be used to compute $\maj_t$. 

\begin{lemma}\label{lem:det_IsBal}
Let~$\mathcal{C}$ be a deterministic circuit for $\isbal_t$.
There exists an oracle \ac$^0$ circuit~$\mathcal{D}$ that, given oracle access to~$\mathcal{C}$, computes $\maj_t$.
\end{lemma}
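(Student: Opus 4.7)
My plan is to compute $\maj_t(x)$ by splitting the input $x\in\F_2^t$ into two halves $x^L,x^R\in\F_2^{t/2}$, using the oracle $\mathcal{C}$ to determine $|x^L|$ and $|x^R|$ exactly via a padding trick, and then combining the two weights with a simple OR. Since $|x|=|x^L|+|x^R|$, the final output will be
\beqn
\bigvee_{\substack{0\le k,\ell\le t/2 \\ k+\ell>t/2}} E_k(x^L)\wedge E_\ell(x^R),
\eeqn
where $E_k(y)$ is an \ac$^0$ gadget that equals $\mathbf{1}[|y|=k]$ for every $y\in\F_2^{t/2}$.

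To build $E_k(y)$ I will issue two oracle queries,
\beqn
\text{Sim1}=\mathcal{C}(y,\,1^{t/2-k},\,0^{k})\quad\text{and}\quad \text{Sim2}=\mathcal{C}(\bar y,\,1^{k},\,0^{t/2-k}),
\eeqn
both length-$t$ strings. The Sim1 string has weight $|y|+t/2-k$, so the $\isbal_t$-promise $|\cdot|\le t/2$ holds iff $|y|\le k$, and on-promise the oracle returns $1$ iff $|y|=k$. Dually, Sim2 has weight $t/2-|y|+k$, so its promise holds iff $|y|\ge k$, and on-promise returns $1$ iff $|y|=k$. Setting $E_k(y):=\text{Sim1}\wedge\text{Sim2}$, a routine case split on whether $|y|$ is less than, equal to, or greater than $k$ shows that whichever of the two queries is off-promise gets masked by an on-promise answer of $0$ (and when $|y|=k$ both are on-promise and both return $1$). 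Hence $E_k(y)=\mathbf{1}[|y|=k]$ regardless of $\mathcal{C}$'s off-promise behaviour; this is the heart of the argument.

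With the $E_k$ gadgets in hand, the final circuit runs $E_k(x^L)$ and $E_\ell(x^R)$ in parallel for all $k,\ell\in\{0,\ldots,t/2\}$, using $O(t)$ oracle gates, and then applies the aggregating OR above over the $O(t^2)$ pairs $(k,\ell)$ with $k+\ell>t/2$. Exactly one $E_k(x^L)$ fires (at $k=|x^L|$) and exactly one $E_\ell(x^R)$ fires (at $\ell=|x^R|$), so the output equals $\mathbf{1}[|x|>t/2]=\maj_t(x)$, and the circuit has constant depth, polynomial size, and oracle access to $\mathcal{C}$, placing it in the desired class. The step I expect to require the most care is the two-sided promise argument for $E_k$: one must verify that the regions $\{|y|\le k\}$ and $\{|y|\ge k\}$ jointly cover all possible weights of $y$, so that the AND defining $E_k$ is always shielded from $\mathcal{C}$'s adversarial off-promise answers. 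Everything else — the padding-length arithmetic, the halving, and the final OR — is routine, and the construction works as stated for the even $t$ required by the definition of $\isbal_t$.
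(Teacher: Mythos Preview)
Your proof is correct but takes a genuinely different route from the paper's. The paper defines, for each $i\in\{0,1,\dots,t\}$, the string $x_i$ obtained from $x$ by zeroing out the first $i$ coordinates; since $|x_0|=|x|$, $|x_t|=0$, and the weight drops by at most one at each step, the sequence $(|x_i|)_i$ passes through $t/2$ iff $|x|\ge t/2$, and whenever $|x_i|\le t/2$ the input $x_i$ is on-promise. The paper's circuit is then simply $\bigvee_{i=0}^{t} \mathcal{C}(x_i)$. Your construction instead recovers the \emph{exact} Hamming weight of each half of~$x$ via a two-sided padding trick whose AND is always shielded from off-promise answers, and then threshold-compares the sum. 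Both arguments are constant-depth with $O(t)$ oracle calls; the paper's is shorter and avoids the halving, while yours establishes the stronger intermediate fact that exact Hamming weight is computable in this oracle model. One minor point: you identify $\maj_t(x)$ with $\mathbf{1}[|x|>t/2]$, whereas the paper uses $\mathbf{1}[|x|\ge t/2]$; replacing the condition $k+\ell>t/2$ by $k+\ell\ge t/2$ in your aggregating OR aligns the conventions and changes nothing else.
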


\begin{proof}
For $x\in\F_2^t$ and $i\in \{0,1,\dots,t\}$, define $x_i$ as the string $x$ with the first~$i$ bits set to zero and the rest of the bits equal to those of~$x$. 
So, for instance, $x_0 = x$ and $x_t$ is the all-zeroes string. 
Let $\mathcal{D}$ be the circuit that runs $t+1$ parallel instances of $\mathcal{C}$ with inputs $x_0,x_1,\dots,x_t$, respectively, and returns the OR of the $t+1$ outputs.

We claim that~$\mathcal{D}$ computes $\maj_t$.
Indeed, if $x$ has fewer than $t/2$ ones then~$\mathcal{C}$ returns $0$ for each input $x_i$, as the number of 1s only decreases with $i$. 
If $x$ has at least~$t/2$ ones, then $\mathcal{C}$ returns $1$ for at least one $i$, since $x_0$ has at least $t/2$ ones, whereas $x_t$ is the all-zeroes string. 
This completes the proof.
\end{proof}

Towards turning a circuit $\mathcal{C}$ for $\LH_n(\eps)$ into a circuit for $\isbal_t$, we associate with each input $x\in \F_2^t$ to $\isbal_t$ a random error vector~$N_x$ over~$\F_2^n$ as follows:
independently, each coordinate of~$N_x$ is a uniformly random entry of~$x$.
In particular, for balanced $x$, the error vector $N_x$ will correspond to an error rate of $1/2$ and we refer to it as $N_{1/2}$. 
The next lemma shows that there is a message $m\in\F_2^k$ that has small probability of recovery by $\mathcal{C}$ under the error vector $N_{1/2}$.

\begin{lemma}\label{lem:decoding_random_message}
Let $\mathcal{C}$ be a probabilistic circuit which, on input $y\in \F_2^n$, returns a (random) list $L(y)\subseteq \F_2^k$ of at most~$2^k/4$ elements.
Then there exists $m\in \F_2^k$ such that
\begin{equation}\label{eq:upper_bound_random_decoding}
\Pr[m\in L(H(m) + N_{1/2})] \le 1/4,
\end{equation}
where the probability is taken over~$L$ and~$N_{1/2}$.
\end{lemma}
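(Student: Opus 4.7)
\medskip

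\noindent\textit{Proof plan.}
The plan is to establish the existence of $m$ by an averaging argument: I will show that the expectation of $\Pr[m\in L(H(m)+N_{1/2})]$ over a uniformly random $m\in\F_2^k$ is at most $1/4$, from which the existence of some good $m$ follows.

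The key observation is that $N_{1/2}$ is uniformly distributed over $\F_2^n$. Indeed, when $x\in\F_2^t$ is balanced (i.e.\ has exactly $t/2$ ones), each coordinate of $N_x$ is an independent uniform bit by construction. As a consequence, for \emph{any} fixed $m\in\F_2^k$ and any realization of $L$, the random variable $H(m)+N_{1/2}$ is uniformly distributed over $\F_2^n$.

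With this in hand, I would compute
\beqn
\E_{m\in\F_2^k}\Pr_{L,\,N_{1/2}}\big[m\in L(H(m)+N_{1/2})\big]
= \E_{L}\,\E_{N_{1/2}}\,\E_{m}\,\mathbf{1}\big[m\in L(H(m)+N_{1/2})\big].
\eeqn
Swapping the roles of $m$ and the noise via the change of variables $y := H(m)+N_{1/2}$ (here one uses that, for each fixed $m$, $y$ is uniform over $\F_2^n$, so in particular the joint distribution of $(m,y)$ with $m$ uniform and $y=H(m)+N_{1/2}$ is the same as the joint distribution of $(m,y)$ with $y$ uniform and $m$ uniform independent of $y$), one obtains
\beqn
\E_L\,\E_{y\in\F_2^n}\,\E_{m\in\F_2^k}\mathbf{1}[m\in L(y)]
= \E_L\,\E_{y}\frac{|L(y)|}{2^k}\le \frac{1}{4},
\eeqn
using the hypothesis $|L(y)|\le 2^k/4$ for all $y$.

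Since the expectation over $m$ of $\Pr_{L,N_{1/2}}[m\in L(H(m)+N_{1/2})]$ is at most $1/4$, there must exist some $m\in\F_2^k$ attaining at most this average, which is exactly the conclusion of the lemma. The argument is essentially a counting/averaging one; the only subtle point to verify carefully is the symmetry that lets us treat $H(m)+N_{1/2}$ as uniform independently of $m$, which hinges on $N_{1/2}$ being the uniform distribution on $\F_2^n$.
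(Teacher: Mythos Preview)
Your proof is correct and essentially identical to the paper's: both average over a uniform $m\in\F_2^k$, use that $N_{1/2}$ is uniform on $\F_2^n$ so that $H(m)+N_{1/2}$ is uniform and independent of $m$, and then bound the resulting expectation by $\E|L(y)|/2^k\le 1/4$.
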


\begin{proof}
Note that, for any $y\in \F_2^n$, the vector $y+N_{1/2}$ is uniformly distributed over~$\F_2^n$;
in particular, it has the same distribution as~$N_{1/2}$.
Let $M\in \F_2^k$ be a uniformly distributed random element.
Then, by independence of $M$, $L(y)$ and~$N_{1/2}$, get that
\begin{align*}
\Pr_{M,L,N_{1/2}}[M\in L(H(M) + N_{1/2})] & = \Pr_{M, L, N_{1/2}}[M\in L(N_{1/2})] \\
& \leq \frac{1}{2^k}\, \Exp_{L,N_{1/2}}\, |L(N_{1/2})|  \\
& \leq 1/4.
\end{align*}
Hence, there exists a value~$m$ of~$M$ such that~\eqref{eq:upper_bound_random_decoding} holds. 
\end{proof}

Finally, we prove that the circuit $\mathcal{C}$ in Theorem~\ref{thm:sudan} can solve~$\isbal_t$. 

\begin{lemma}\label{lem:list_decoding_implies_IsBal}
Let $\mathcal{C}$ be a probabilistic circuit as in Theorem~\ref{thm:sudan}. 
There exists a probabilistic oracle \ac$^0$ circuit $\mathcal{D}$ of size $\poly(n,1/\eps)$ that, when given oracle access to $\mathcal{C}$, solves $\isbal_t$ with probability at least~3/4 for $t = \Omega(1/\eps)$. 
\end{lemma}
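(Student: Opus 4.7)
My plan is to reduce $\isbal_t$ to list-decoding of the Hadamard code by letting the input $x\in \F_2^t$ drive a random sampler that produces a noise vector $N_x\in \F_2^n$, then running $\mathcal{C}$ on the corrupted codeword $H(m)+N_x$, where $m\in \F_2^k$ is the hard message supplied by Lemma~\ref{lem:decoding_random_message}. The intuition is that when $x$ is balanced, the simulated noise is distributed exactly as $N_{1/2}$ and Lemma~\ref{lem:decoding_random_message} says the returned list almost certainly misses $m$; when $|x|<t/2$, the noise is biased towards~$0$ by margin $1/t$, so provided $1/t$ dominates $\eps$, the received word lies within the list-decoding radius of $H(m)$ and $\mathcal{C}$ must include $m$ in its list.

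Concretely, I would take $t$ to be a power of two with $t=\Theta(1/\eps)$ (say the largest power of two at most $1/(2\eps)$), and hardcode the message $m$ and codeword $H(m)\in \F_2^n$ into $\mathcal{D}$. On input $x\in \F_2^t$, the circuit uses $n\log t$ fresh random bits to sample $j_1,\dots,j_n\in [t]$ independently and uniformly, and forms $y\in \F_2^n$ by $y_i := H(m)_i \oplus x_{j_i}$; each selector $x_{j_i}$ is a $\poly(t)$-size constant-depth multiplexer. It then queries $\mathcal{C}$ once on $y$, obtains $L(y)\subseteq \F_2^k$ of size at most $n/4$, and outputs $1$ iff $m\notin L(y)$. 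The membership test is a constant-depth OR of $|L(y)|$ equality tests on $k$-bit strings, of size $\poly(n)$. Overall $\mathcal{D}$ is AC$^0$ of size $\poly(n,1/\eps)$.

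Correctness splits into two cases. In the balanced case $|x|=t/2$, each $y_i\oplus H(m)_i=x_{j_i}$ is an independent uniform bit, so $y$ is distributed exactly as $H(m)+N_{1/2}$; Lemma~\ref{lem:decoding_random_message} then gives $\Pr[m\in L(y)]\leq 1/4$, so $\mathcal{D}$ outputs $1$ with probability at least $3/4$. In the unbalanced case $|x|\leq t/2-1$, each coordinate of $y\oplus H(m)$ is $1$ independently with probability $|x|/t\leq 1/2-1/t\leq 1/2-2\eps$. By the Chernoff bound, $d(y,H(m))\leq (1/2-\eps)n$ except with probability $\exp(-\Omega(\eps^2 n))$, and conditioned on this event the $\LH_n(\eps)$-guarantee of $\mathcal{C}$ ensures $m\in L(y)$ with probability at least $3/4$ over $\mathcal{C}$'s internal randomness, which is independent of the sampled $j_i$'s.

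The main obstacle is purely quantitative: naively combining the $1/4$ failure probability of $\mathcal{C}$ with the Chernoff tail yields a success probability in the unbalanced case slightly below the advertised $3/4$. I would close this gap by first boosting $\mathcal{C}$ to failure probability at most a small constant (say $1/8$) via a constant number of parallel independent invocations whose returned lists are intersected, which only inflates the size by a constant factor. The Chernoff tail $\exp(-\Omega(\eps^2 n))$ is itself at most $1/8$ whenever $\eps^2 n$ exceeds an absolute constant, which is exactly the regime $\eps=\Omega(1/\sqrt{n})$ in which $\LH_n(\eps)$ is nontrivial. This brings the total success probability to at least $3/4$ in both cases while preserving the depth and size bounds, yielding the desired circuit $\mathcal{D}$.
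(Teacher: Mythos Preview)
Your approach is essentially identical to the paper's: both hardwire the message~$m$ from Lemma~\ref{lem:decoding_random_message}, simulate the noise $N_x$ by sampling coordinates of~$x$, feed $H(m)+N_x$ to~$\mathcal{C}$, and output according to whether $m$ lies in the returned list. Your implementation details (the multiplexer, choosing $t$ a power of two) are a welcome refinement over the paper's terser ``parallel two-bit XOR gates'', and you correctly spot the small quantitative slack in the unbalanced case that the paper simply elides.

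There is, however, a genuine error in your proposed fix. Intersecting the lists from $c$ independent runs of~$\mathcal{C}$ does \emph{not} boost the probability that~$\mathcal{C}$ solves $\LH_n(\eps)$: the intersection contains all valid messages only when every run does, and under independence that probability is at most $(3/4)^c$, which \emph{decreases} with~$c$. (Taking the union would help in the unbalanced case but blows up the list size and ruins the balanced-case bound from Lemma~\ref{lem:decoding_random_message}.) The clean resolution is simply to observe that the constant $3/4$ is inessential: the bound you actually get in the unbalanced case is $\tfrac{3}{4}\bigl(1-\exp(-\Omega(\eps^2 n))\bigr)$, which exceeds the $2/3$ threshold required by Lemma~\ref{lem:prob_IsBal} once $\eps^2 n$ is larger than an absolute constant. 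The paper makes exactly this tacit move.
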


\begin{proof}
We may assume without loss of generality that $\eps \leq 1/4$.
Let $\delta \in [\eps, 1/4]$ be minimized such that $t = 1/(2\delta)$ is an even integer;
note that, since $\delta \geq \eps$, the circuit~$\mathcal{C}$ also solves $\LH_n(\delta)$ with probability at least~$3/4$.
Fix a message $m$ as in Lemma~\ref{lem:decoding_random_message}, and let $x\in \F_2^t$ be any given string (which serves as input to~$\mathcal{D}$).

The circuit~$\mathcal{D}$ has three layers.
The first layer has the string~$H(m)$ hardwired into it and uses~$n$ independent uniform samples to the coordinates of~$x$ to compute the random string $H(m) + N_x$.
This layer is a probabilistic circuit using~$n$ parallel two-bit XOR gates.
The second layer consists of the circuit~$\mathcal{C}$, which produces a random list $L(H(m)+ N_x)$ of size at most~$n/4$.
The third layer consists of an~\ac$^0$ circuit of size $\poly(n)$ that returns $0$ if and only if $m \in L(H(m)+ N_x)$. 
This can be done by checking equality between~$m$ and the~$O(n)$ elements of the list.
We claim that this solves $\isbal_t$.

If~$x$ is balanced then it follows from Lemma~\ref{lem:decoding_random_message} that~$\mathcal{D}$ correctly returns~$1$ with probability at least~$3/4$. 
If~$x$ has Hamming weight strictly less than~$t/2$, then each coordinate of~$N_x$ is~1 with probability at most $1/2 - 1/t = 1/2 - 2\delta$.
By the Chernoff bound,~$N_x$ has Hamming weight at most $(1/2 - \delta)n$ with probability $1- \exp(-\Omega(\delta^2 n))$.
Hence, the properties of the circuit~$\mathcal{C}$ imply that in this case~$\mathcal{D}$ correctly outputs $0$ with probability at least $3/4$. 
\end{proof}

Theorem~\ref{thm:sudan} now follows directly by combining Lemma~\ref{lem:prob_IsBal}, Lemma~\ref{lem:det_IsBal} and Lemma~\ref{lem:list_decoding_implies_IsBal}.

\subsection{The quantum case}
\label{sec:Sudan_quantum}

Now we sketch how the above proof can be used to turn our \qnc$^0[\oplus]$ circuit for decoding the Hadamard code into one that computes MAJORITY with polynomially small error.
We first recall the following result.

\qlisthad*

Let $\eps = n^{-1/4}$ and let~$\mathcal{C}$ be the circuit from Corollary~\ref{cor:qlisthad}.
Since~$\mathcal{C}$ returns lists of size at most~$n^{3/4}$, a stronger version of Lemma~\ref{lem:decoding_random_message} holds where the probability~\eqref{eq:upper_bound_random_decoding} -- taken additionally over the measurement outcomes of~$\mathcal{C}$ -- is bounded from above by~$n^{3/4}/2^k = n^{-1/4}$.

The proof of Lemma~\ref{lem:list_decoding_implies_IsBal} then gives an oracle \qnc$^0[\oplus]$ circuit~$\mathcal{D}$ of size~$\poly(n)$ that, given oracle access to~$\mathcal{C}$, solves $\isbal_t$ with probability $1 - O(n^{-1/4})$ for $t = \Omega(n^{1/4})$.
Here, the \ac$^0$ circuit used to check membership of~$m$ can be replaced with our \qnc$^0[\oplus]$ circuit for the OR function (see Section~\ref{sec:qcircuit}) applied to the entrywise sum of~$m$ with each element in the list.

Now let $t' = \lfloor n^{1/8} \rfloor$, and note that the same circuit $\mathcal{D}$ above can be used to solve $\isbal_{t'}$ with probability $1 - O(n^{-1/4})$:
it suffices to pad the input with zeroes and ones in the same number until we have a string of the correct size.
Finally, with the proof of Lemma~\ref{lem:det_IsBal} and the union bound we obtain a \qnc$^0[\oplus]$ circuit~$\mathcal{D'}$ that, given oracle access to~$\mathcal{D}$, solves $\maj_{t'}$ with probability $1 - O(n^{-1/8})$ for $t'  = \lfloor n^{1/8} \rfloor$.
Here again we use our \qnc$^0[\oplus]$ circuit for the OR function as explained in Section~\ref{sec:qcircuit}.

\section*{Acknowledgements}
We thank Richard Cleve for helpful discussions on the proof of Theorem~\ref{thm:qcircuit}, Emanuele Viola for helpful pointers to the literature and comments on an earlier version of this manuscript, Tamar Ziegler for encouragement to write up Section~\ref{sec:high_char}, as well as anonymous referees for helpful comments.

\bibliographystyle{alphaurl}
\bibliography{hgame}

\end{document}